\documentclass[10pt,draft,journal,a4paper,oneside,onecolumn]{IEEEtran} 
\usepackage{amsmath,amsfonts, amssymb}

\newtheorem{thm}{Theorem}
\newtheorem{lem}{Lemma}
\newtheorem{df}{Definition}
\newtheorem{rem}{Remark}

\newcommand{\disp}{\displaystyle}

\newcommand{\GFq}{\mathrm{GF}(q)}
\newcommand{\Real}{\mathbb{R}}
\newcommand{\hA}{\widehat{A}}
\newcommand{\hB}{\widehat{B}}
\newcommand{\hg}{\widehat{g}}
\newcommand{\A}{\mathcal{A}}
\newcommand{\hcA}{\widehat{\mathcal{A}}}
\newcommand{\bcA}{\boldsymbol{\mathcal{A}}}
\newcommand{\bhcA}{\boldsymbol{\widehat{\mathcal{A}}}}
\newcommand{\B}{\mathcal{B}}
\newcommand{\hcB}{\widehat{\mathcal{B}}}
\newcommand{\bhcB}{\boldsymbol{\widehat{\mathcal{B}}}}
\newcommand{\bcB}{\boldsymbol{\mathcal{B}}}
\newcommand{\C}{\mathcal{C}}
\newcommand{\E}{\mathcal{E}}
\newcommand{\G}{\mathcal{G}}
\newcommand{\cH}{\mathcal{H}}
\newcommand{\hcH}{\widehat{\mathcal H}}
\newcommand{\J}{\mathcal{J}}
\newcommand{\K}{\mathcal{K}}
\newcommand{\cP}{\mathcal{P}}
\newcommand{\cS}{\mathcal{S}}
\newcommand{\T}{\mathcal{T}}
\newcommand{\U}{\mathcal{U}}
\newcommand{\bU}{\overline{\mathcal{U}}}
\newcommand{\V}{\mathcal{V}}

\newcommand{\X}{\mathcal{X}}
\newcommand{\Y}{\mathcal{Y}}
\newcommand{\Z}{\mathcal{Z}}
\newcommand{\aalpha}{\boldsymbol{\alpha}}
\newcommand{\bbeta}{\boldsymbol{\beta}}
\newcommand{\kkappa}{\boldsymbol{\kappa}}
\newcommand{\haa}{\boldsymbol{\widehat a}}
\newcommand{\hbb}{\boldsymbol{\widehat b}}
\newcommand{\sfhaa}{\boldsymbol{\widehat{\mathsf a}}}
\newcommand{\sfhbb}{\boldsymbol{\widehat{\mathsf b}}}
\newcommand{\ba}{\boldsymbol{a}}
\newcommand{\bb}{\boldsymbol{b}}
\newcommand{\cc}{\boldsymbol{c}}
\newcommand{\ff}{\boldsymbol{f}}
\newcommand{\FF}{\boldsymbol{F}}
\newcommand{\mm}{\boldsymbol{m}}
\newcommand{\bp}{\boldsymbol{p}}
\newcommand{\bs}{\boldsymbol{s}}
\newcommand{\bt}{\boldsymbol{t}}
\newcommand{\uu}{\boldsymbol{u}}
\newcommand{\UU}{\boldsymbol{U}}
\newcommand{\vv}{\boldsymbol{v}}
\newcommand{\VV}{\boldsymbol{V}}
\newcommand{\ww}{\boldsymbol{w}}
\newcommand{\xx}{\boldsymbol{x}}
\newcommand{\XX}{\boldsymbol{X}}
\newcommand{\yy}{\boldsymbol{y}}
\newcommand{\zz}{\boldsymbol{z}}
\newcommand{\e}{\varepsilon}
\newcommand{\sfA}{\mathsf{A}}
\newcommand{\sfB}{\mathsf{B}}
\newcommand{\sfhA}{\widehat{\mathsf{A}}}
\newcommand{\sfhB}{\widehat{\mathsf{B}}}
\newcommand{\sfaa}{\boldsymbol{\mathsf{a}}}
\newcommand{\sfbb}{\boldsymbol{\mathsf{b}}}
\newcommand{\sfmm}{\boldsymbol{\mathsf{m}}}
\newcommand{\sfww}{\boldsymbol{\mathsf{w}}}
\newcommand{\Prod}{\operatornamewithlimits{\text{\Large $\times$}}}
\newcommand{\lrB}[1]{\left[{#1}\right]}
\newcommand{\lrb}[1]{\left\{{#1}\right\}}
\newcommand{\lrsb}[1]{\left({#1}\right)}
\newcommand{\lrbar}[1]{\left|{#1}\right|}
\newcommand{\Error}{\mathrm{Error}}
\newcommand{\zero}{\boldsymbol{0}}
\newcommand{\one}{\boldsymbol{1}}
\newcommand{\limn}{\lim_{n\to\infty}}
\newcommand{\Encoder}{\varphi}
\newcommand{\Decoder}{\varphi^{-1}}
\newcommand{\Rate}{\mathrm{Rate}}
\newcommand{\im}{\mathrm{Im}}

\newcommand{\bpAB}{\bp_{\sfA\sfB}}
\newcommand{\bpA}{\bp_{\sfA}}
\newcommand{\bpB}{\bp_{\sfB}}
\newcommand{\bpAp}{\bp_{\sfA'}}
\newcommand{\bpBp}{\bp_{\sfB'}}
\newcommand{\bphA}{\bp_{\sfhA}}
\newcommand{\bphB}{\bp_{\sfhB}}
\newcommand{\pA}{p_{\sfA}}
\newcommand{\pB}{p_{\sfB}}
\newcommand{\pa}{p_{\sfaa}}

\newcommand{\pAa}{p_{\sfA\sfaa}}
\newcommand{\pAp}{p_{\sfA'}}
\newcommand{\phA}{p_{\sfhA}}
\newcommand{\alphaA}{\alpha_{\sfA}}
\newcommand{\alphaAp}{\alpha_{\sfA'}}
\newcommand{\alphaB}{\alpha_{\sfB}}
\newcommand{\alphaBp}{\alpha_{\sfB'}}
\newcommand{\betaA}{\beta_{\sfA}}
\newcommand{\betaAp}{\beta_{\sfA'}}
\newcommand{\betaB}{\beta_{\sfB}}
\newcommand{\betaBp}{\beta_{\sfB'}}
\newcommand{\aalphaA}{\aalpha_{\sfA}}
\newcommand{\aalphaAp}{\aalpha_{\sfA'}}
\newcommand{\bbetaA}{\bbeta_{\sfA}}
\newcommand{\bbetaAp}{\bbeta_{\sfA'}}
\newcommand{\bbetaB}{\bbeta_{\sfB}}

\newcommand{\alphahA}{\alpha_{\sfhA}}
\newcommand{\alphahB}{\alpha_{\sfhB}}
\newcommand{\betahA}{\beta_{\sfhA}}
\newcommand{\betahB}{\beta_{\sfhB}}
\newcommand{\aalphahA}{\aalpha_{\sfhA}}
\newcommand{\bbetahA}{\bbeta_{\sfhA}}
\newcommand{\aalphahB}{\aalpha_{\sfhB}}
\newcommand{\bbetahB}{\bbeta_{\sfhB}}

\allowdisplaybreaks

\title{
  Construction of
  Slepian-Wolf Source Code and
  Broadcast Channel Code\\
  Based on Hash Property
}
\author{
  Jun~Muramatsu and~Shigeki Miyake
  \thanks{
    J.~Muramatsu is with
    NTT Communication Science Laboratories, NTT Corporation,
    2-4, Hikaridai, Seika-cho, Soraku-gun, Kyoto 619-0237, Japan
    (E-mail: muramatsu.jun@lab.ntt.co.jp).
    S.~Miyake is with
    NTT Network Innovation Laboratories, NTT Corporation,
    Hikarinooka 1-1, Yokosuka-shi, Kanagawa 239-0847, Japan
    (E-mail: miyake.shigeki@lab.ntt.co.jp).
  }
  \thanks{A part of this paper was presented in part at
    {\em Proc. IEEE Int. Symp. on Inform. Theory} (ISIT2010)
    and
    {\em Proc. 7-th-Asia-Europe Workshop ``CONCEPTS in INFORMATION
      THEORY''} (AEW7), 2011.
    Proof of Lemma 9 was revised in May 22, 2012.
    Proof of Lemma 4 was revised in Jan. 25, 2013.
  }
}

\date{June 27, 2010, revised Jan. 25, 2013}
\begin{document}
\maketitle

\begin{abstract}
  The aim of this paper is to prove theorems for the Slepian-Wolf source
  coding and the broadcast channel coding (independent messages and no
    common message) based on the the notion of a stronger version of the
  hash property for an ensemble of functions.
  Since an ensemble of sparse matrices (with logarithmic column degree)
  has a strong hash property,
  codes using sparse matrices can realize the achievable rate region.
  Furthermore, extensions to the multiple source coding and
  multiple output broadcast channel coding are investigated.
\end{abstract}
\begin{keywords}
  Shannon theory,
  hash property, linear codes,
  LDPC codes, sparse matrix,
  maximum-likelihood decoding,
  minimum-divergence encoding,
  Slepian-Wolf source coding,
  broadcast channel coding
\end{keywords}

\section{Introduction}
The aim of this paper is to prove theorems for the Slepian-Wolf source
coding (Fig.~\ref{fig:sw}) introduced in \cite{SW73} and the broadcast
channel coding (Fig.~\ref{fig:broadcast}) introduced in \cite{C72}.
The proof of theorems is based on a stronger version of the hash
property for an ensemble of functions introduced
in~\cite{HASH}\cite{HASH-UNIV}.
This notion provides a sufficient condition for the achievability
of coding theorems.
Since an ensemble of sparse matrices also has a strong hash property,
we can construct codes by using sparse matrices where the rate pair of
the code is close to the boundary of the achievable rate region.
When constructing codes,
we employ minimum-divergence encoding and maximum-likelihood decoding.

The achievable rate region for Slepian-Wolf source coding is derived
in~\cite{SW73}.
The technique of random bin coding is developed in \cite{C75} to prove
the Slepian-Wolf source coding theorem.
The achievability of the code using a pair of matrices
is studied in \cite{CSI82}.

The constructions of encoders using sparse matrices is studied
in~\cite{M02}\cite{SPR02} and the achievability is proved
in~\cite{SWLDPC} by using maximum-likelihood (ML) decoding.
In this paper, we construct codes based on the strong hash property,
which unifies the results of \cite{C75}\cite{CSI82}\cite{SWLDPC}.

To construct a broadcast channel code,
we assume that independent messages are decoded by their respective
receivers with small error probability.
It should be noted that we assume neither ``degraded'' nor ``less
noisy'' conditions on this channel.
The capacity region for this channel is known only for some classes.
An inner region on the two receiver general broadcast channel without a
common message is derived in \cite{M79} and a simpler way of
constructing codes is presented in \cite{EM81}.
The cardinality bound is investigated in \cite{GA09}\cite{HP79}.
Outer regions are derived in \cite{LKS08}\cite{N08}\cite{NE07}.
Applications of sparse matrices (LDPC codes) to broadcast channels are
investigated in \cite{BT05}\cite{NKMS03}\cite{O07}\cite{RA09}.
In this paper, we construct codes based on a strong hash property
and we show that the rate pair of the constructed code is close to the
inner bound derived in \cite{EM81}\cite{M79}.

\begin{figure}
  \begin{center}
    \unitlength 0.5mm
    \begin{picture}(155,45)(0,0)
      \put(5,25){\makebox(0,0){$X$}}
      \put(10,25){\vector(1,0){10}}
      \put(27,37){\makebox(0,0){Senders}}
      \put(20,18){\framebox(14,14){$\Encoder_{X}$}}
      \put(34,25){\vector(1,0){84}}
      \put(5,7){\makebox(0,0){$Y$}}
      \put(10,7){\vector(1,0){10}}
      \put(20,0){\framebox(14,14){$\Encoder_{Y}$}}
      \put(34,7){\vector(1,0){84}}
      \put(125,37){\makebox(0,0){Receiver}}
      \put(118,0){\framebox(14,32){$\Decoder$}}
      \put(132,16){\vector(1,0){10}}
      \put(152,16){\makebox(0,0){$(X,Y)$}}
    \end{picture}
  \end{center}
  \caption{Slepian-Wolf Source Coding}
  \label{fig:sw}
\end{figure}
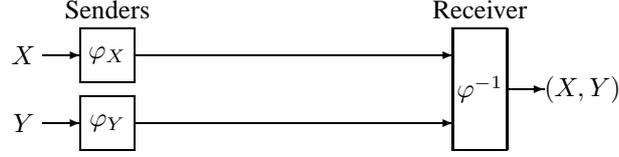

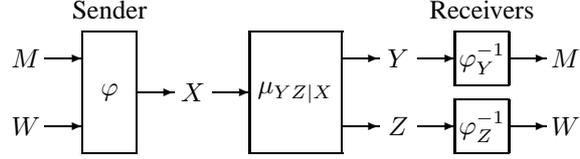
\begin{figure}
  \begin{center}
    \unitlength 0.5mm
    \begin{picture}(155,45)(0,10)
      \put(5,35){\makebox(0,0){$M$}}
      \put(10,35){\vector(1,0){10}}
      \put(5,17){\makebox(0,0){$W$}}
      \put(10,17){\vector(1,0){10}}
      \put(27,48){\makebox(0,0){Sender}}
      \put(20,10){\framebox(14,32){$\Encoder$}}
      \put(34,26){\vector(1,0){10}}
      \put(49,26){\makebox(0,0){$X$}}
      \put(54,26){\vector(1,0){10}}
      \put(64,10){\framebox(24,32){$\mu_{YZ|X}$}}
      \put(88,35){\vector(1,0){10}}
      \put(88,17){\vector(1,0){10}}
      \put(103,35){\makebox(0,0){$Y$}}
      \put(103,17){\makebox(0,0){$Z$}}
      \put(108,35){\vector(1,0){10}}
      \put(108,17){\vector(1,0){10}}
      \put(125,48){\makebox(0,0){Receivers}}
      \put(118,28){\framebox(14,14){$\Decoder_Y$}}
      \put(132,35){\vector(1,0){10}}
      \put(147,35){\makebox(0,0){$M$}}
      \put(118,10){\framebox(14,14){$\Decoder_Z$}}
      \put(132,17){\vector(1,0){10}}
      \put(147,17){\makebox(0,0){$W$}}
    \end{picture}
  \end{center}
  \caption{Broadcast Channel Coding}
  \label{fig:broadcast}
\end{figure}

The proof of all theorems is based on the notion of a hash property,
which is a stronger version of that introduced in \cite{HASH}.
It is the extension of the ensemble of the random bin coding~\cite{C75},
the ensembles of linear matrices~\cite{CSI82},
the universal class of hash functions~\cite{CW},
and the ensemble of sparse matrices~\cite{SWLDPC}.
Two lemmas called `collision-resistace property\footnote{In~\cite{HASH},
  they were called  `collision-resistant property' and
  `saturating property,' respectively.
  We changed these terms following the suggestion of Prof. T.S.~Han.}'
(if the number of bins is greater than the number of items then there is
  an assignment such that every bin contains at most one item)
and `saturation property$^{\text{1}}$'
(if the number of items is greater than the number of bins then there is
  an assignment such that every bin contains at least one item),
which are proved in \cite{HASH} and reviewed in Section \ref{sec:hash},
are extended from a single domain to multiple domains.
The extended collision-resistance property is used to analyze the
decoding error of the Slepian-Wolf source coding and the extended
saturation property is used to analyze the encoding error of the
broadcast channel coding.
It should be noted that the linearity of functions is not necessary
for a strong hash property but it is expected that the space and time
complexity of the code can be reduced compared with conventional
constructions by using sparse matrices.
This is a potential advantage of our approach.

\section{Definitions and Notations}
Throughout this paper, we use the following definitions and notations.

The cardinality of a set $\U$ is denoted by $|\U|$,
$\U^c$ denotes the complement of $\U$,
and $\U\setminus\V\equiv\U\cap\V^c$ denotes the set difference.
Column vectors and sequences are denoted in boldface.
For $\T\subset\U^n\times\V^n$ and $\vv\in\V^n$,
$\T_{\U}$ and $\T_{\U|\V}(\vv)$ are defined as
\begin{align*}
  \T_{\U}&\equiv\{\uu: (\uu,\vv)\in\T\ \text{for some}\ \vv\in\V^n\}
  \\
  \T_{\U|\V}(\vv)&\equiv\{\uu: (\uu,\vv)\in\T\}.
\end{align*}

Let $A\uu$ denote a value taken by a function $A:\U^n\to\im A$
at $\uu\equiv(u_1,\ldots,u_n)\in\U^n$,
where $\U^n$ and $\im A\equiv\{A\uu: \uu\in\U^n\}$ are the domain and
the image of the function, respectively.
It should be noted that $A$ may be nonlinear.
When $A$ is a linear function expressed by an $l\times n$ matrix,
we assume that $\U\equiv\GFq$ is a finite field.
For a set $\A$ of functions, let $\im \A$ be defined as
\begin{align*}
  \im\A &\equiv \bigcup_{A\in\A}\im A.
\end{align*}
We define a set $\C_A(\ba)$ as
\begin{align*}
  \C_A(\ba)
  &\equiv\{\uu: A\uu = \ba\}.
\end{align*}
In the context of linear codes,
$\C_A(\ba)$ is called a coset determined by $\ba$.
The random variables of a function $A$ and a vector $\ba\in\im\A$ are
denoted by sans serif letters $\sfA$ and $\sfaa$, respectively.
On the other hand, the random variables of a $n$-dimensional vector
$\uu$ is denoted by bold Roman letter $\UU$.

Let $p$ and $p'$ be probability distributions and let $q$ and $q'$ be
conditional probability distributions.
Then entropy $H(p)$, conditional entropy $H(q|p)$,
divergence $D(p\|p')$, and conditional divergence $D(q\|q'|p)$
are defined as
\begin{align*}
  H(p)
  &\equiv\sum_{u}p(u)\log\frac 1{p(u)}
  \\
  H(q|p)
  &\equiv\sum_{u,v}q(u|v)p(v)\log\frac 1{q(u|v)}
  \\
  D(p\parallel p')
  &\equiv
  \sum_{u}p(u)
  \log\frac{p(u)}{p'(u)}
  \\
  D(q\parallel q' | p)
  &\equiv
  \sum_{v} p(v)\sum_{u}q(u|v)
  \log\frac{q(u|v)}{q'(u|v)},
\end{align*}
where we assume that the base of the logarithm is $2$.

Let $\mu_{UV}$ be the joint probability distribution of random variables
$U$ and $V$.
Let  $\mu_{U}$ and $\mu_{V}$ be the respective marginal distributions
and $\mu_{U|V}$ be the conditional probability distribution.
Then the entropy $H(U)$, the conditional entropy $H(U|V)$, and the mutual
information $I(U;V)$ of random variables are defined as
\begin{align*}
  H(U)&\equiv H(\mu_U)
  \\
  H(U|V)&\equiv H(\mu_{U|V}|\mu_{V})
  \\
  I(U;V)&\equiv
  H(U)-H(U|V).
\end{align*}

A set of typical sequences $\T_{U,\gamma}$ and a set of conditionally
typical sequences $\T_{U|V,\gamma}(\vv)$ are defined as
\begin{align*}
  \T_{U,\gamma}
  &\equiv
  \lrb{\uu:
    D(\nu_{\uu}\|\mu_{U})<\gamma
  }
  \\
  \T_{U|V,\gamma}(\vv)
  &\equiv
  \lrb{\uu:
    D(\nu_{\uu|\vv}\|\mu_{U|V}|\nu_{\vv})<\gamma
  },
\end{align*}
respectively, where $\nu_{\uu}$ and $\nu_{\uu|\vv}$ are defined as
\begin{align*}
  &\nu_{\uu}(u)
  \equiv
  \frac {|\{1\leq i\leq n : u_{i}=u\}|}n
  \\
  &\nu_{\uu|\vv}(u|v)
  \equiv \frac{\nu_{\uu\vv}(u,v)}{\nu_{\vv}(v)}.
\end{align*}

We define $\chi(\cdot)$ as
\begin{align*}
  \chi(a = b)
  &\equiv
  \begin{cases}
    1,&\text{if}\ a = b
    \\
    0,&\text{if}\ a\neq b
  \end{cases}
  \\
  \chi(a \neq b)
  &\equiv
  \begin{cases}
    1,&\text{if}\ a \neq b
    \\
    0,&\text{if}\ a = b.
  \end{cases}
\end{align*}

Finally, for $\gamma,\gamma'>0$, we define
\begin{align}
  \lambda_{\U}
  &\equiv \frac{|\U|\log(n+1)}n
  \label{eq:lambda}
  \\
  \zeta_{\U}(\gamma)
  &\equiv
  \gamma-\sqrt{2\gamma}\log\frac{\sqrt{2\gamma}}{|\U|}
  \label{eq:zeta}
  \\
  \zeta_{\U|\V}(\gamma'|\gamma)
  &\equiv
  \gamma'-\sqrt{2\gamma'}\log\frac{\sqrt{2\gamma'}}{|\U||\V|}
  +\sqrt{2\gamma}\log|\U|
  \label{eq:zetac}
  \\
  \eta_{\U}(\gamma)
  &\equiv
  -\sqrt{2\gamma}\log\frac{\sqrt{2\gamma}}{|\U|}
  +\frac{|\U|\log(n+1)}n
  \label{eq:def-eta}
  \\
  \eta_{\U|\V}(\gamma'|\gamma)
  &\equiv
  -\sqrt{2\gamma'}\log\frac{\sqrt{2\gamma'}}{|\U||\V|}
  +\sqrt{2\gamma}\log|\U|
  +\frac{|\U||\V|\log(n+1)}n,
  \label{eq:def-etac}
\end{align}
where the product set $\U\times\V$ is denoted by $\U\V$
when it appears in the subscript of these functions.
These definitions will be used in the proof of theorems.

\section{Strong $(\aalpha,\bbeta)$-hash property}
\label{sec:hash}

\subsection{Formal Definition and Basic Properties}
In the following, we introduce the strong hash property for an
ensemble of functions.
It requires a stronger condition than that introduced in \cite{HASH}.
It should be noted that the linearity of functions is not assumed
in this section.

\begin{df}
Let $\bcA\equiv\{\A_n\}_{n=1}^{\infty}$
be a sequence of sets such that
$\A_n$ is a set of functions
$A_n:\U^n\to\im\A_n$.
For a probability distribution $p_{\sfA_n}$ on $\A_n$
corresponding to a random variable $\sfA_n\in\A_n$,
we call a sequence $(\bcA,\bpA)\equiv\{(\A_n,p_{\sfA_n})\}_{n=1}^{\infty}$
an {\em ensemble}.
Then, $(\bcA,\bpA)$ has a {\em strong}
$(\aalphaA,\bbetaA)$-{\em hash property} if
there are two sequences
$\aalphaA\equiv\{\alphaA(n)\}_{n=1}^{\infty}$ and
$\bbetaA\equiv\{\betaA(n)\}_{n=1}^{\infty}$
depending only on $\{p_{\sfA_n}\}_{n=0}^{\infty}$ such that
\begin{align}
  &\limn \alphaA(n)=1
  \tag{H1}
  \label{eq:alpha}
  \\
  &\limn \betaA(n)=0
  \tag{H2}
  \label{eq:beta}
\end{align}
and
\begin{align}
  \sum_{\substack{
      \uu'\in\U^n\setminus\{\uu\}
      \\
      p_{\sfA_n}(\{A: A\uu = A\uu'\})>\frac{\alphaA(n)}{|\im\A_n|}
  }}
  p_{\sfA_n}\lrsb{\lrb{A: A\uu = A\uu'}}
  \leq
  \betaA(n)
  \tag{H3}
  \label{eq:hash}
\end{align}
for any $\uu\in\U^n$.
Throughout this paper, we omit the dependence on $n$ of $\A$, $A$,
$\sfA$, $\alphaA$, and $\betaA$.
\end{df}

Let us remark on the conditions (\ref{eq:alpha})--(\ref{eq:hash}).
These conditions require that the sum of the all collision probabilities
which is far grater than $1/|\im\A|$ vanishes as the block length goes
to infinity.

\begin{rem}
The condition
\begin{equation*}
  \limn \frac1n\log\frac{|\bU_n|}{|\im\A_n|}=0
\end{equation*}
is required for an ensemble in~\cite[Def. 1]{HASH}.
We omit this condition because it is unnecessary for the results
presented in this paper.
\end{rem}

It should be noted that $(\bcA,\bpA)$ has a strong $(\one,\zero)$-hash
property when $\A$ is a universal class of hash functions \cite{CW}
and $\pA$ is the uniform distribution on $\A$.
The random bin coding \cite{C75} and the set of all linear functions
\cite{CSI82} are examples of a universal class of hash functions.
The strong hash property of an ensemble of sparse matrices is discussed
in Section~\ref{sec:linear}.

From the following lemma, we have the fact that the above definition of
hash property is stronger than that introduced in \cite{HASH}.
It is proved in Appendix~\ref{sec:wash}.
\begin{lem}
\label{lem:whash}
If $(\bcA,\bpA)$ has a strong $(\aalphaA,\bbetaA)$-hash property,
then
\begin{align}
  \sum_{\substack{
      \uu\in\T
      \\
      \uu'\in\T'
  }}
  \pA\lrsb{\lrb{A: A\uu = A\uu'}}
  &\leq
  |\T\cap\T'|
  +
  \frac{|\T||\T'|\alphaA}{|\im\A|}
  +
  \min\{|\T|,|\T'|\}\betaA
  \label{eq:whash}
\end{align}
for any $\T,\T'\subset\U^n$, that is,
it has a $(\aalphaA,\bbetaA)$-hash property introduced
in~\cite[Definition 1]{HASH}.
\end{lem}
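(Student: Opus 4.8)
The plan is to split the double sum over $\T\times\T'$ on the left of \eqref{eq:whash} into three pieces that match the three terms on the right: the diagonal pairs $\uu=\uu'$, the off-diagonal pairs whose collision probability is at most the threshold $\alphaA/|\im\A|$, and the off-diagonal pairs whose collision probability exceeds that threshold.

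For the diagonal piece, when $\uu=\uu'$ we have $\lrb{A:A\uu=A\uu'}=\A$, so each such term equals $\pA(\A)=1$, and the number of pairs $(\uu,\uu)$ with $\uu\in\T\cap\T'$ is $|\T\cap\T'|$; this piece contributes exactly $|\T\cap\T'|$. For the low-collision off-diagonal piece, each term is bounded by $\alphaA/|\im\A|$ by definition of the piece, and there are at most $|\T|\,|\T'|$ such pairs, so this piece is at most $|\T|\,|\T'|\alphaA/|\im\A|$.

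For the high-collision off-diagonal piece, I would fix $\uu\in\T$ and apply \eqref{eq:hash} at $\uu$: the sum of $\pA(\lrb{A:A\uu=A\uu'})$ over all $\uu'\in\U^n\setminus\lrb{\uu}$ with collision probability exceeding $\alphaA/|\im\A|$ is at most $\betaA$, hence a fortiori so is the sum over the subset of such $\uu'$ lying in $\T'\setminus\lrb{\uu}$. Summing over $\uu\in\T$ bounds this piece by $|\T|\betaA$. Since the event $\lrb{A:A\uu=A\uu'}$ is symmetric in $\uu$ and $\uu'$, the identical argument with the roles of $\T$ and $\T'$ interchanged (fixing $\uu'\in\T'$ and applying \eqref{eq:hash} at $\uu'$) bounds the same piece by $|\T'|\betaA$, so it is at most $\min\lrb{|\T|,|\T'|}\betaA$. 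Adding the three bounds yields \eqref{eq:whash}, and the concluding assertion is simply the definition of the $(\aalphaA,\bbetaA)$-hash property of \cite{HASH}, so nothing further is needed.

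The calculation has no serious obstacle; the only point that needs a moment's attention is that \eqref{eq:hash} is one-sided — it sums over $\uu'$ for a fixed $\uu$ — so obtaining the sharper factor $\min\lrb{|\T|,|\T'|}$ rather than merely $|\T|$ in front of $\betaA$ comes only from explicitly exploiting the symmetry of the collision event and re-invoking \eqref{eq:hash} with its two arguments swapped.
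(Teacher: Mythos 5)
Your proposal is correct and matches the paper's proof: the same three-way split into diagonal pairs, low-collision off-diagonal pairs bounded termwise by $\alphaA/|\im\A|$, and high-collision off-diagonal pairs controlled by (\ref{eq:hash}). The only cosmetic difference is how the factor $\min\{|\T|,|\T'|\}$ is reached — the paper assumes $|\T|\leq|\T'|$ without loss of generality while you invoke the symmetry of the collision event explicitly — but these are the same observation.
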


In the following, we review two lemmas of the hash property.
It should be noted that Lemmas~\ref{lem:collision}
and~\ref{lem:saturation} are related to the collision-resistace property
and the saturation property, respectively.
These relations are explained in \cite[Section III]{HASH}.
Let $\A$ be a set of functions $A:\U^n\to\im\A$,
$\pa$ be the uniform distribution on $\im\A$,
where random variables $\sfA$ and $\sfaa$ be mutually independent,
that is,
\begin{align*}
  \pa(\ba)&\equiv
  \begin{cases}
    \frac 1{|\im\A|},&\text{if}\ \ba\in\im\A
    \\
    0,&\text{if}\ \ba\notin\im\A
  \end{cases}
  \\
  \pAa(A,\ba)&=\pA(A)\pa(\ba)
\end{align*}
for any $A$ and $\ba$.

\begin{lem}[{\cite[Lemma 1]{HASH}}]
\label{lem:collision}
If $(\A,\pA)$ satisfies (\ref{eq:whash}),
then
\[
  \pA\lrsb{\lrb{
      A: \lrB{\G\setminus\{\uu\}}\cap\C_A(A\uu)\neq \emptyset
  }}
  \leq 
  \frac{|\G|\alphaA}{|\im\A|} + \betaA.
\]
for all $\G\subset\U^n$  and $\uu\in\U^n$.
\end{lem}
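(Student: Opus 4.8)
The plan is to bound the probability in question by a union bound over $\uu'\in\G\setminus\{\uu\}$, then split that sum according to whether the collision probability $\pA(\{A:A\uu=A\uu'\})$ is large or small relative to $\alphaA/|\im\A|$, exactly as in the hypothesis (H3), which here appears in the aggregated form~\eqref{eq:whash}. Concretely, I would first write
\begin{align*}
  \pA\lrsb{\lrb{
      A: \lrB{\G\setminus\{\uu\}}\cap\C_A(A\uu)\neq \emptyset
  }}
  &=\pA\lrsb{\bigcup_{\uu'\in\G\setminus\{\uu\}}\lrb{A: A\uu=A\uu'}}
  \leq\sum_{\uu'\in\G\setminus\{\uu\}}\pA\lrsb{\lrb{A:A\uu=A\uu'}},
\end{align*}
using that $\uu'\in\C_A(A\uu)$ is the same event as $A\uu'=A\uu$.

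Next I would apply~\eqref{eq:whash} from Lemma~\ref{lem:whash} with the choice $\T\equiv\G\setminus\{\uu\}$ and $\T'\equiv\{\uu\}$. Since $\uu\notin\T$ we have $|\T\cap\T'|=0$, while $|\T'|=1$ gives $\min\{|\T|,|\T'|\}=1$ and $|\T||\T'|\le|\G|$, so the three terms on the right-hand side of~\eqref{eq:whash} become $0+|\G|\alphaA/|\im\A|+\betaA$, which is exactly the claimed bound. (If one prefers to work directly from (H3) rather than through Lemma~\ref{lem:whash}, the same split is carried out by hand: the terms with $p_{\sfA}(\{A:A\uu=A\uu'\})>\alphaA/|\im\A|$ sum to at most $\betaA$ by (H3), and each of the at most $|\G|$ remaining terms is at most $\alphaA/|\im\A|$.)

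There is essentially no hard step here — this lemma is a near-immediate consequence of the definition once the right sets are plugged into~\eqref{eq:whash}; the only thing to be careful about is the bookkeeping with the excluded point $\uu$, namely ensuring $\uu\notin\T$ so that the $|\T\cap\T'|$ term vanishes (otherwise one would pick up a spurious $\chi(\uu\in\G)$ term). The statement is what licenses the later analysis of the Slepian--Wolf decoding error, where $\G$ plays the role of a set of ``confusable'' source sequences and this bound controls the probability that some wrong sequence shares the syndrome of the true one.
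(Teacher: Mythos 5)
Your proof is correct. The paper does not reprove this lemma (it imports it from \cite[Lemma 1]{HASH}), but your argument --- a union bound over $\uu'\in\G\setminus\{\uu\}$ followed by an application of (\ref{eq:whash}) with $\T\equiv\G\setminus\{\uu\}$ and $\T'\equiv\{\uu\}$, so that $|\T\cap\T'|=0$, $|\T||\T'|\leq|\G|$, and $\min\{|\T|,|\T'|\}\leq 1$ --- is exactly the intended derivation, and your bookkeeping on the excluded point $\uu$ is right.
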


\begin{lem}[{\cite[Lemma 2]{HASH}}]
\label{lem:saturation}
If $(\A,\pA)$ satisfies (\ref{eq:whash})
then
\begin{align*}
  &
  \pAa\lrsb{\lrb{(A,\ba):
      \T\cap\C_{A}(\ba)=\emptyset
  }}
  \leq
  \alphaA-1+\frac{|\im\A|\lrB{\betaA+1}}{|\T|}
\end{align*}
for all $\T\subset\U^n$.
\end{lem}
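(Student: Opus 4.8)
\emph{Proof proposal.} The plan is to apply the second moment method (Chebyshev's inequality) to the random variable that counts how many elements of $\T$ fall into the random coset $\C_{\sfA}(\sfaa)$. First I would fix $A$ and $\ba\in\im\A$ and set
\[
  N(A,\ba)\equiv\lrbar{\T\cap\C_A(\ba)}=\sum_{\uu\in\T}\chi(A\uu=\ba),
\]
regarded as a function of the random pair $(\sfA,\sfaa)$ with distribution $\pAa$. The event $\{\T\cap\C_A(\ba)=\emptyset\}$ coincides with $\{N=0\}$, so it suffices to bound $\Pr(N=0)$. If $\T=\emptyset$ the asserted bound is vacuous (its right-hand side is $+\infty$), so assume $|\T|\ge1$.

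Next I would compute the first two moments under $\pAa$, writing $E[\cdot]$ for the corresponding expectation. Since $\sfaa$ is uniform on $\im\A$, is independent of $\sfA$, and $A\uu\in\im A\subseteq\im\A$, one has $\pa(\{\ba:\ba=A\uu\})=1/|\im\A|$ for every $A$ and $\uu$; hence
\[
  E[N]=\sum_{\uu\in\T}\pAa(\{(A,\ba):A\uu=\ba\})=\frac{|\T|}{|\im\A|}.
\]
For the second moment, expanding the square and conditioning on $\sfA$ (given $A$, the equalities $A\uu=\ba$ and $A\uu'=\ba$ hold simultaneously iff $A\uu=A\uu'$ and $\ba=A\uu$, the latter of probability $1/|\im\A|$) gives
\[
  E[N^2]=\sum_{\uu,\uu'\in\T}\pAa(\{(A,\ba):A\uu=\ba=A\uu'\})=\frac1{|\im\A|}\sum_{\uu,\uu'\in\T}\pA(\{A:A\uu=A\uu'\}).
\]
Applying the hash inequality (\ref{eq:whash}) with $\T'=\T$ yields $\sum_{\uu,\uu'\in\T}\pA(\{A:A\uu=A\uu'\})\le|\T|+|\T|^2\alphaA/|\im\A|+|\T|\betaA$, so
\[
  E[N^2]\le\frac{|\T|(\betaA+1)}{|\im\A|}+\frac{|\T|^2\alphaA}{|\im\A|^2}.
\]

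Finally, since $\{N=0\}\subseteq\{|N-E[N]|\ge E[N]\}$, Chebyshev's inequality gives
\[
  \pAa(\{(A,\ba):\T\cap\C_A(\ba)=\emptyset\})=\Pr(N=0)\le\frac{E[N^2]}{E[N]^2}-1,
\]
and substituting $E[N]=|\T|/|\im\A|$ together with the bound on $E[N^2]$ collapses the right-hand side to $\alphaA-1+|\im\A|(\betaA+1)/|\T|$, which is the claim. The only real work here is bookkeeping with the joint distribution $\pAa$: justifying $\pa(\{\ba=A\uu\})=1/|\im\A|$ from $\im A\subseteq\im\A$, and carrying out the conditioning so that (\ref{eq:whash}) applies verbatim. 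No genuinely difficult step is anticipated.
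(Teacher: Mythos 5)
Your proof is correct: the first and second moments of $N=\lrbar{\T\cap\C_{\sfA}(\sfaa)}$ are computed correctly (the first-moment step is exactly Lemma~\ref{lem:E}), the hash inequality (\ref{eq:whash}) with $\T'=\T$ gives the stated second-moment bound, and Chebyshev's inequality then yields $\alphaA-1+|\im\A|[\betaA+1]/|\T|$. This is essentially the same second-moment argument used to prove Lemma 2 in the cited reference \cite{HASH}, which this paper quotes without reproving.
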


In the following, we consider the combination of two ensembles,
where functions have the same domain.
It should be noted that the assumption of a strong hash property makes
it unnecessary to assume the linearity of a function for the hash
property of the concatenated ensemble while the linearity of a function
is assumed in \cite{HASH}\cite{HASH-UNIV}\cite{HASH-CRYPT}.
The proof is given in Appendix \ref{sec:hash-AB}.
\begin{lem}
\label{lem:hash-AB}
Let $(\bcA,\bpA{})$ and $(\bcA',\bpAp{})$ be ensembles satisfying a
strong $(\aalphaA{},\bbetaA{})$-hash property and a strong
$(\aalphaAp,\bbetaAp)$-hash property, respectively.
Let $\A\in\bcA{}$ (resp. $\A'\in\bcA'$) be a set of functions
$A:\U^n\to\im\A$ (resp. $A':\U^n\to\im\A'$).
Let $\hcA\equiv\A\times\A'$  and $\hA\equiv(A,A')\in\hcA$ defined as
\[
  \hA\uu\equiv(A\uu,A'\uu)
  \quad\text{for each $\hA\in\hcA$, $\uu\in\U^n$}.
\]
Let $\phA$  be a joint distribution on $\hcA$ defined as
\[
  p_{\sfhA}(A,A')\equiv \pA(A)\pAp{}(A').
\]
Then the ensemble $(\bhcA,\bphA{}{})$ has a strong
$(\aalphahA{},\bbetahA{})$-hash property where $(\alphahA{},\betahA{})$
is defined as
\begin{align*}
  \alphahA{}&\equiv \alphaA{}\alphaAp{}
  \\
  \betahA{}&\equiv \betaA{}+\betaAp{}.
\end{align*}
\end{lem}

In the following, we consider the combination of two ensembles,
where the domains of functions are different.
The following lemmas are essential for the proof of coding theorems
presented in this paper.
For a set $\A$ of functions $A:\U^n\to\im\A$,
a set $\B$ of functions $B:\V^n\to\im\B$,
let $p_{\sfaa}$ and $p_{\sfbb}$ the uniform distributions on $\im\A$ and
$\im\B$, respectively, where random variables
$\{\sfA,\sfB,\sfaa,\sfbb\}$ are mutually independent, that is,
\[
  p_{\sfA\sfB\sfaa\sfbb}(A,B,\ba,\bb)=
  \pA(A)\pB(B)p_{\sfaa}(\ba)p_{\sfbb}(\bb)
\]
for any $A$, $B$, $\ba$, and $\bb$.
\begin{lem}
\label{lem:hash-AxB-CRP}
If $(\A,\pA)$ and $(\B,\pB)$
satisfy (\ref{eq:hash}), then
\begin{align}
  &p_{\sfA\sfB}\lrsb{\lrb{
      (A,B):
      \lrB{\G\setminus\{(\uu,\vv)\}}\cap\lrB{\C_A(A\uu)\times\C_B(B\vv)}\neq \emptyset
  }}
  \notag
  \\*
  &\leq 
  \frac{|\G|\alphaA\alphaB}{|\im\A||\im\B|}
  +
  \frac{\lrB{\disp\max_{\vv\in\G_{\V}}|\G_{\U|\V}(\vv)|}\alphaA[\betaB+1]}
  {|\im\A|}
  +
  \frac{\lrB{\disp\max_{\uu\in\G_{\U}}|\G_{\V|\U}(\uu)|}\alphaB[\betaA+1]}
  {|\im\B|}
  +\betaA+\betaB+\betaA\betaB
  \label{eq:AxB-CRP}
\end{align}
for all $\G\subset\U^n\times\V^n$ and $(\uu,\vv)\in\U^n\times\V^n$.
\end{lem}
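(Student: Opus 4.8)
The plan is to bound the probability of the bad event by splitting the "collision witness'' $(\uu',\vv')\in\G\setminus\{(\uu,\vv)\}$ with $A\uu'=A\uu$ and $B\vv'=B\vv$ into three disjoint cases: (i) $\uu'=\uu$ (so necessarily $\vv'\neq\vv$); (ii) $\vv'=\vv$ (so $\uu'\neq\uu$); and (iii) $\uu'\neq\uu$ and $\vv'\neq\vv$. Since $\sfA$ and $\sfB$ are independent, in case (iii) the joint collision probability factors as $p_{\sfA}(\{A:A\uu=A\uu'\})\cdot p_{\sfB}(\{B:B\vv=B\vv'\})$, and similarly cases (i) and (ii) reduce to a single factor (with the other function contributing probability $1$). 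The union bound over these three cases gives the three-plus-one structure of the right-hand side of \eqref{eq:AxB-CRP}.

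For case (iii), I would further partition each collision probability into its "small'' part ($\leq\alphaA/|\im\A|$, resp. $\leq\alphaB/|\im\B|$) and its "large'' part, exactly as in the proof of Lemma~\ref{lem:collision}. Summing the product over $(\uu',\vv')\in\G$: the small$\times$small contribution is at most $|\G|\cdot\tfrac{\alphaA}{|\im\A|}\cdot\tfrac{\alphaB}{|\im\B|}$; the large$\times$(anything $\leq 1$) contribution, summed over $\uu'$ first, is at most $\betaA$ by \eqref{eq:hash}, uniformly in $\vv'$, but we must be careful that we are summing over $\vv'$ ranging over a set of size at most $\max_{\uu\in\G_{\U}}|\G_{\V|\U}(\uu)|$ — which is where the $\max$-cardinality factors enter. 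Symmetrically for the (anything)$\times$large term. Organizing the four sub-sums so each collapses to one of the four terms $\tfrac{|\G|\alphaA\alphaB}{|\im\A||\im\B|}$, $\tfrac{[\max_{\vv}|\G_{\U|\V}(\vv)|]\alphaA[\betaB+1]}{|\im\A|}$, $\tfrac{[\max_{\uu}|\G_{\V|\U}(\uu)|]\alphaB[\betaA+1]}{|\im\B|}$, and $\betaA\betaB$ requires a small amount of care with which index is summed first and which cross terms are absorbed where; I would handle the large$\times$small and small$\times$large pieces together with the pure large$\times$large piece so that the $[\betaB+1]$ and $[\betaA+1]$ absorb the mixed terms.

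For cases (i) and (ii), I would bound them directly: case (ii), summing $\chi(A\uu=A\uu')$ over $\uu'\in\G_{\U|\V}(\vv)\setminus\{\uu\}$ and again splitting into small and large collision probabilities, contributes at most $\tfrac{|\G_{\U|\V}(\vv)|\alphaA}{|\im\A|}+\betaA\leq\tfrac{[\max_{\vv}|\G_{\U|\V}(\vv)|]\alphaA}{|\im\A|}+\betaA$, which is dominated by the second and penultimate terms on the right of \eqref{eq:AxB-CRP}; case (i) is symmetric. These contributions are absorbed by the slack already present (the $+1$'s inside the brackets and the lone $\betaA+\betaB$).

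The main obstacle I anticipate is purely bookkeeping: ensuring that the union-bound decomposition is over genuinely disjoint events, that each of the four sub-sums in case (iii) is summed over the correct index set so that \eqref{eq:hash} applies with the right uniformity (the inner sum over $\uu'$ or $\vv'$ must be taken with the \emph{other} coordinate held fixed, and the surviving free coordinate ranges over a conditional slice of $\G$ whose size is controlled by the stated $\max$), and finally that the mixed cross terms (small$\times$large and large$\times$small) are correctly charged to the $[\betaB+1]$ and $[\betaA+1]$ factors rather than double-counted. No new idea beyond the proof of Lemma~\ref{lem:collision} is needed; it is the two-domain generalization of that argument.
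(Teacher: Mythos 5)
Your proposal is correct: the union bound over $(\uu',\vv')\in\G\setminus\{(\uu,\vv)\}$, the factorization of the joint collision probability by independence of $\sfA$ and $\sfB$, the three-way split by whether $\uu'=\uu$ and/or $\vv'=\vv$, and the small/large decomposition of each factor account exactly for every term on the right of \eqref{eq:AxB-CRP} — the diagonal cases (i) and (ii) supply the ``$+1$'' inside $[\betaB+1]$ and $[\betaA+1]$ together with the lone $\betaA+\betaB$, and the four off-diagonal sub-sums supply the remaining pieces. The paper reaches the same bound by a slightly different packaging: it first proves that the product ensemble $(\A\times\B,p_{\sfA\sfB})$ satisfies the two-set inequality \eqref{eq:whash} with $\alpha\equiv\alphaA\alphaB$ and a $\beta$ that absorbs the conditional-slice maxima (this is where the small/large splitting and the bounds on $\max_{\vv}|\T_{\U|\V}(\vv)|$, $\max_{\uu}|\T_{\V|\U}(\uu)|$ appear, for a general pair $\T,\T'$), and then invokes Lemma~\ref{lem:collision} as a black box with $\T'=\{(\uu,\vv)\}$. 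The computation is essentially identical to yours; what the paper's route buys is economy, since the same verification of \eqref{eq:whash} also yields Lemma~\ref{lem:hash-AxB-SP} immediately via Lemma~\ref{lem:saturation}, whereas your direct argument proves only the collision-resistance statement and the saturation statement would need a separate derivation. One small caution in your write-up: in the mixed large$\times$small sub-case you must bound the \emph{small} factor by $\alphaB/|\im\B|$ times the conditional-slice cardinality \emph{before} summing the large factor over $\uu'$ to get $\betaA$; summing the large factor first and then multiplying by the slice size, as your phrasing momentarily suggests, would lose the $\alphaB/|\im\B|$ factor. You flag this bookkeeping yourself and your final tally is the correct one, so this is a matter of exposition rather than a gap.
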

\begin{lem}
\label{lem:hash-AxB-SP}
If $(\A,\pA)$ and $(\B,\pB)$
satisfy (\ref{eq:hash}), then
\begin{align}
  &p_{\sfA\sfB\sfaa\sfbb}\lrsb{\lrb{
      (A,B,\ba,\bb):
      \T\cap\lrB{\C_A(\ba)\times\C_B(\bb)}=\emptyset
  }}
  \notag
  \\*
  \begin{split}
    &
    \leq
    \alphaA\alphaB-1
    +
    \frac{|\im\B|\lrB{\disp\max_{\vv\in\T_{\V}}|\T_{\U|\V}(\vv)|}\alphaA[\betaB+1]}
    {|\T|}
    +
    \frac{|\im\A|\lrB{\disp\max_{\uu\in\T_{\U}}|\T_{\V|\U}(\uu)|}\alphaB[\betaA+1]}
    {|\T|}
    \\*
    &\quad
    +\frac{|\im\A||\im\B|\lrB{\betaA+\betaB+\betaA\betaB+1}}
    {|\T|}
  \end{split}
  \label{eq:AxB-SP}
\end{align}
for all $\T\subset\U^n\times\V^n$.
\end{lem}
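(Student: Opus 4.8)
The plan is to apply the second-moment method to the random variable $N$, defined as the value at $(\sfA,\sfB,\sfaa,\sfbb)$ of
\[
  N(A,B,\ba,\bb)\equiv\lrbar{\T\cap\lrB{\C_A(\ba)\times\C_B(\bb)}}
  =\sum_{(\uu,\vv)\in\T}\chi(A\uu=\ba)\chi(B\vv=\bb),
\]
so that the event in (\ref{eq:AxB-SP}) is exactly $\{N=0\}$; throughout, expectations are taken under $p_{\sfA\sfB\sfaa\sfbb}$. Using the mutual independence of $\{\sfA,\sfB,\sfaa,\sfbb\}$ and the uniformity of $\sfaa,\sfbb$, one has $\Pr(\sfA\uu=\sfaa)=1/|\im\A|$, $\Pr(\sfB\vv=\sfbb)=1/|\im\B|$, and $\Pr(\sfA\uu=\sfaa,\sfA\uu'=\sfaa)=\pA(\{A:A\uu=A\uu'\})/|\im\A|$, whence
\[
  \mathbb{E}[N]=\frac{|\T|}{|\im\A||\im\B|},\qquad
  \mathbb{E}[N^2]=\frac{S}{|\im\A||\im\B|},\quad
  S\equiv\sum_{(\uu,\vv),(\uu',\vv')\in\T}\pA(\{A:A\uu=A\uu'\})\,\pB(\{B:B\vv=B\vv'\}).
\]
Assuming $\T\neq\emptyset$ (the claim being vacuous otherwise), Chebyshev's inequality gives $\Pr(N=0)\le\mathrm{Var}(N)/\mathbb{E}[N]^2=|\im\A||\im\B|S/|\T|^2-1$, so it suffices to establish
\[
  S\le
  \frac{\alphaA\alphaB|\T|^2}{|\im\A||\im\B|}
  +\frac{\lrB{\disp\max_{\vv\in\T_{\V}}|\T_{\U|\V}(\vv)|}\alphaA[\betaB+1]|\T|}{|\im\A|}
  +\frac{\lrB{\disp\max_{\uu\in\T_{\U}}|\T_{\V|\U}(\uu)|}\alphaB[\betaA+1]|\T|}{|\im\B|}
  +[\betaA+\betaB+\betaA\betaB+1]|\T|.
\]

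Bounding $S$ is the substantive step and the main obstacle: expanding $S$ by summing over $\uu'$ first and then over $\vv'$ introduces the factor $|\T_{\U}|$, which in general is much larger than $\max_{\vv\in\T_{\V}}|\T_{\U|\V}(\vv)|$, so the summation must be organized carefully. The plan is to fix $(\uu,\vv)\in\T$ and split the inner sum over $(\uu',\vv')\in\T$ into three parts according to whether (i) $\vv'=\vv$; (ii) $\vv'\neq\vv$ and $\pB(\{B:B\vv=B\vv'\})\le\alphaB/|\im\B|$; or (iii) $\vv'\neq\vv$ and $\pB(\{B:B\vv=B\vv'\})>\alphaB/|\im\B|$, writing $S=S_{\mathrm{i}}+S_{\mathrm{ii}}+S_{\mathrm{iii}}$ accordingly.

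For $S_{\mathrm{i}}$ and $S_{\mathrm{iii}}$ I would regroup the double sum by the pair $(\vv,\vv')$ and bound the resulting inner sum $\sum_{\uu\in\T_{\U|\V}(\vv),\,\uu'\in\T_{\U|\V}(\vv')}\pA(\{A:A\uu=A\uu'\})$ over the slices by (\ref{eq:whash}) of Lemma~\ref{lem:whash} (whose proof uses only (\ref{eq:hash})); then bounding $|\T_{\U|\V}(\vv)\cap\T_{\U|\V}(\vv')|$ and $\min\{|\T_{\U|\V}(\vv)|,|\T_{\U|\V}(\vv')|\}$ by $|\T_{\U|\V}(\vv)|$, one copy of $|\T_{\U|\V}(\vv')|$ by $\max_{\vv}|\T_{\U|\V}(\vv)|$, and summing out $\vv$ via $\sum_{\vv\in\T_{\V}}|\T_{\U|\V}(\vv)|=|\T|$ --- and, in case (iii), summing the ``collision-heavy'' $\vv'$ via (\ref{eq:hash}) for $\B$, which contributes a factor at most $\betaB$ --- yields exactly the $\max_{\vv}|\T_{\U|\V}(\vv)|$-type terms and the $[\betaA+\betaB+\betaA\betaB+1]|\T|$ remainder. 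For $S_{\mathrm{ii}}$ I would factor out $\alphaB/|\im\B|$, drop the remaining constraints (all summands being nonnegative), and bound $\sum_{\uu,\uu'\in\T_{\U}}\pA(\{A:A\uu=A\uu'\})\,|\T_{\V|\U}(\uu)|\,|\T_{\V|\U}(\uu')|$ by separating the $\uu'$ with $\pA(\{A:A\uu=A\uu'\})\le\alphaA/|\im\A|$ (handled via $\sum_{\uu'\in\T_{\U}}|\T_{\V|\U}(\uu')|=|\T|$) from those with $\pA(\{A:A\uu=A\uu'\})>\alphaA/|\im\A|$ (handled via (\ref{eq:hash}) for $\A$ and $|\T_{\V|\U}(\uu')|\le\max_{\uu}|\T_{\V|\U}(\uu)|$); this contributes the leading term $\alphaA\alphaB|\T|^2/(|\im\A||\im\B|)$ and a $\max_{\uu}|\T_{\V|\U}(\uu)|$-type term.

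Adding the three contributions and verifying that they sum to the displayed bound on $S$ is then a routine bookkeeping computation; substituting back into $\Pr(N=0)\le|\im\A||\im\B|S/|\T|^2-1$ completes the proof. I expect the only genuine difficulty to be discovering this organization of the sum --- the three-way split of the $\vv'$-sum together with the $(\vv,\vv')$-regrouping in cases (i) and (iii) --- which is precisely what keeps the factors $\max_{\vv}|\T_{\U|\V}(\vv)|$ and $\max_{\uu}|\T_{\V|\U}(\uu)|$ in place of the crude $|\T_{\U}|$ and $|\T_{\V}|$.
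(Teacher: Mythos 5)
Your proposal is correct and follows essentially the same route as the paper: the paper bounds exactly your quantity $S$ (as the left-hand side of (\ref{eq:whash}) for the product ensemble with $\T'=\T$) via a threshold split on $p_{\sfA,\uu,\uu'}$ and $p_{\sfB,\vv,\vv'}$ with the same conditional regrouping that preserves $\max_{\vv}|\T_{\U|\V}(\vv)|$, and then invokes Lemma~\ref{lem:saturation}, which is precisely the Chebyshev second-moment step you carry out by hand. The only cosmetic differences are that you organize the case split as $\vv'=\vv$ versus the two threshold regimes for $\vv'\neq\vv$ (the paper uses a four-way split on both thresholds, with the diagonal terms absorbed into the ``large'' cases via $p_{\sfB,\vv',\vv'}=1$), and that you call Lemma~\ref{lem:whash} on the slices where the paper redoes that estimate inline.
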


It should be noted that Lemmas~\ref{lem:hash-AxB-CRP}
and~\ref{lem:hash-AxB-SP} are related to the collision-resistance
property (Lemma~\ref{lem:collision}) and the saturation property
(Lemma~\ref{lem:saturation}), respectively.
Proof are given in Appendix~\ref{sec:hash-AxB}.

\subsection{Hash Property for Ensembles of Matrices}
\label{sec:linear}
In the following, we discuss the hash property for an ensemble of matrices.

It has been discussed in the last section that the uniform distribution
on the set of all linear functions has a strong $(\one,\zero)$-hash
property because it is a universal class of hash functions.
In the following, we introduce another ensemble of matrices.

First, we introduce the average spectrum of an ensemble of
matrices given in~\cite{BB04}.
Let $\U$ be a finite field and  $\A$ be a set of linear functions
$A:\U^n\to\U^{l}$.
It should be noted that $A$ can be represented by a $l\times n$ matrix.

Let $\bt(\uu)$ be the type\footnote{In \cite{HASH}, it is called
  a histogram which is characterized by the number $n\nu_{\uu}$ of 
  occurrences of each symbol in the sequence $\uu$.
  The type and the histogram is essentially the same.}
of  $\uu\in\U^n$,
which is characterized by the empirical probability distribution
$\nu_{\uu}$ of the sequence $\uu$.
Let $\cH$ be a set of all types of length $n$ except $\bt(\zero)$,
where $\zero$ is the zero vector.
For a probability distribution $\pA$ on a set of $l\times n$ matrices
and a type $\bt$, let $S(\pA,\bt)$ be defined as
\begin{gather*}
  S(\pA,\bt)
  \equiv
  \sum_{A\in\A}\pA(A)|\{\uu\in\U^n: A\uu=\zero, \bt(\uu)=\bt\}|,
\end{gather*}
which is called the expected number of codewords that have type $\bt$
in the context of linear codes.
For given $\hcH_{\sfA}\subset\cH$, we define $\alphaA(n)$ and
$\betaA(n)$ as
\begin{align}
  \alphaA(n)
  &\equiv
  \frac{|\im\A|}{|\U|^{l}}\cdot\max_{\bt\in \hcH_{\sfA}}
  \frac {S(\pA,\bt)}{S(u_{\A},\bt)}
  \label{eq:alpha-linear}
  \\
  \betaA(n)
  &\equiv
  \sum_{\bt\in \cH\setminus\hcH_{\sfA}}S(\pA,\bt),
  \label{eq:beta-linear}
\end{align}
where $u_{\A}$ denotes the uniform distribution on the set of all
$l\times n$ matrices.

We have the following theorem.
The proof is given in Section \ref{sec:proof-linear}.
\begin{thm}
\label{thm:hash-linear}
Let $(\bcA,\bpA)$ be an ensemble of matrices and assume that
$\pA\lrsb{\lrb{A: A\uu=\zero}}$ depends on $\uu$ only through the type
$\bt(\uu)$.
If $(\aalphaA,\bbetaA)$, defined by (\ref{eq:alpha-linear}) and
(\ref{eq:beta-linear}), satisfies (\ref{eq:alpha}) and (\ref{eq:beta}),
then $(\bcA,\bpA)$ has a strong $(\aalphaA,\bbetaA)$-hash property.
\end{thm}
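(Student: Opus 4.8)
The plan is to verify the three defining conditions (H1)--(H3) of the strong hash property directly from the definitions of $\alphaA(n)$ and $\betaA(n)$ in \eqref{eq:alpha-linear} and \eqref{eq:beta-linear}. Conditions (H1) and (H2) are assumed in the hypothesis, so the entire content of the proof is \eqref{eq:hash}. The key observation is that for a \emph{linear} function $A$ and distinct $\uu,\uu'\in\U^n$, the event $\{A\uu=A\uu'\}$ equals the event $\{A(\uu-\uu')=\zero\}$, and since $\uu\neq\uu'$ we have $\uu-\uu'\neq\zero$, so $\bt(\uu-\uu')\in\cH$. By the hypothesis that $\pA(\{A:A\ww=\zero\})$ depends on $\ww$ only through $\bt(\ww)$, this collision probability depends only on the type $\bt(\uu-\uu')$; call it $P(\bt)$ for $\bt=\bt(\uu-\uu')$.

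Next I would rewrite the left-hand side of \eqref{eq:hash} as a sum over types. Fixing $\uu$, as $\uu'$ ranges over $\U^n\setminus\{\uu\}$ the difference $\ww=\uu-\uu'$ ranges over $\U^n\setminus\{\zero\}$ bijectively, so
\begin{align*}
  \sum_{\substack{\uu'\neq\uu\\ P(\bt(\uu-\uu'))>\alphaA/|\im\A|}}
  \pA(\{A:A\uu=A\uu'\})
  &=\sum_{\substack{\bt\in\cH\\ P(\bt)>\alphaA/|\im\A|}}
  |\{\ww:\bt(\ww)=\bt\}|\,P(\bt).
\end{align*}
The crucial step is to identify $|\{\ww:\bt(\ww)=\bt\}|\,P(\bt)$ with $S(\pA,\bt)$: indeed, by swapping the order of summation,
\begin{align*}
  S(\pA,\bt)
  &=\sum_{\ww:\bt(\ww)=\bt}\sum_{A\in\A}\pA(A)\chi(A\ww=\zero)
  =\sum_{\ww:\bt(\ww)=\bt}\pA(\{A:A\ww=\zero\})
  =|\{\ww:\bt(\ww)=\bt\}|\,P(\bt),
\end{align*}
where the last equality uses that the inner probability is the constant $P(\bt)$ on this set. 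Hence the left-hand side of \eqref{eq:hash} equals $\sum_{\bt:\,P(\bt)>\alphaA/|\im\A|}S(\pA,\bt)$.

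It then remains to show that every type $\bt$ contributing to this sum lies in $\cH\setminus\hcH_{\sfA}$, so that the sum is bounded by $\sum_{\bt\in\cH\setminus\hcH_{\sfA}}S(\pA,\bt)=\betaA(n)$. For this I would use the standard fact that for the uniform ensemble $u_{\A}$ on all $\lA\times n$ matrices over $\U$, $S(u_{\A},\bt)=|\{\ww:\bt(\ww)=\bt\}|/|\U|^{\lA}$ (each nonzero $\ww$ is mapped to $\zero$ with probability exactly $|\U|^{-\lA}$). Combined with the identity above, $P(\bt)=S(\pA,\bt)/|\{\ww:\bt(\ww)=\bt\}|=S(\pA,\bt)/(|\U|^{\lA}S(u_{\A},\bt))$. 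Therefore $P(\bt)>\alphaA/|\im\A|$ forces $S(\pA,\bt)/S(u_{\A},\bt)>|\U|^{\lA}\alphaA/|\im\A|$, which by the definition \eqref{eq:alpha-linear} of $\alphaA$ as the maximum of that ratio over $\hcH_{\sfA}$ (scaled by $|\im\A|/|\U|^{\lA}$) is impossible for $\bt\in\hcH_{\sfA}$; so $\bt\in\cH\setminus\hcH_{\sfA}$, completing the bound. The main obstacle I anticipate is purely bookkeeping: keeping the normalization factors $|\im\A|$ versus $|\U|^{\lA}$ straight (the image of a non-surjective matrix ensemble need not be all of $\U^{\lA}$), and making sure the ``depends only on the type'' hypothesis is invoked at exactly the right point so that $P(\bt)$ is well defined. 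No delicate analysis is needed beyond that.
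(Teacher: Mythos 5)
Your proposal is correct and follows essentially the same route as the paper: linearity reduces the collision event to $\{A(\uu-\uu')=\zero\}$, the type-dependence hypothesis turns the sum into a sum over types, and the definition of $\alphaA$ forces every type with collision probability exceeding $\alphaA/|\im\A|$ to lie outside $\hcH_{\sfA}$, so the sum is bounded by $\betaA$. The only cosmetic difference is that you re-derive inline the identities $\alphaA=|\im\A|\max_{\bt\in\hcH_{\sfA}}p_{\sfA,\bt}$ and $\betaA=\sum_{\bt\in\cH\setminus\hcH_{\sfA}}|\C_{\bt}|p_{\sfA,\bt}$ (via $S(u_{\A},\bt)=|\C_{\bt}|/|\U|^{\lA}$), which the paper instead imports as a cited lemma from its reference [17].
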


Next, we introduce the ensemble of $q$-ary sparse matrices introduced
in~\cite{HASH}, which is the $q$-ary extension of the ensemble proposed
in~\cite{Mac99}.
Let $\U\equiv\GFq$ and $l\equiv nR$ for given $0<R<1$.
We generate an $l\times n$ matrix $A$ with the following procedure,
where at most $\tau$ random nonzero elements are introduced in every
row.
\begin{enumerate}
  \item Start from an all-zero matrix.
  \item For each $i\in\{1,\ldots,n\}$, repeat the following
  procedure $\tau$ times:
  \begin{enumerate}
    \item Choose $(j,a)\in\{1,\ldots,l\}\times[\GFq\setminus\{0\}]$
    uniformly at random.
    \item Add\footnote{It should be noted that
      $(j,i)$-element of matrix is not overwritten by $a$
      when the same $j$ is chosen again.}
    $a$ to the  $(j,i)$-element of $A$.
  \end{enumerate}
\end{enumerate}
Assume that $\tau=O(\log n)$ is even and let $(\bcA,\bpA)$ be an
ensemble corresponding to the above procedure.
Let $\hcH_{\sfA}\subset\cH$ be a set of types satisfying the requirement
that the weight (the number of occurrences of non-zero elements) is
large enough.
Let $(\aalphaA,\bbetaA)$ be defined by (\ref{eq:alpha-linear})
and (\ref{eq:beta-linear}).
Then $\alphaA$ measures the difference between the ensemble
$(\bcA,\bpA)$ and the ensemble of all $l\times n$ matrices with respect
to the high-weight part of the average spectrum,
and $\betaA$ provides the upper bound of the probability that the code
$\{\uu\in\U^n: A\uu=\zero\}$ has low-weight codewords.
It is proved in \cite[Theorem 2]{HASH} that $(\aalphaA,\bbetaA)$ satisfy
(\ref{eq:alpha}) and (\ref{eq:beta}) by adopting an appropriate
$\hcH_{\sfA}$.
Then, from Theorem~\ref{thm:hash-linear},
we have the fact that this ensemble has a strong
$(\aalphaA,\bbetaA)$-hash property.
It should be noted that the convergence speed of $(\aalphaA,\bbetaA)$
depends on how fast $\tau$ grows in relation to the block length.
The analysis of $(\aalphaA,\bbetaA)$ is given in the proof
of~\cite[Theorem 2]{HASH}.

\section{Slepian-Wolf Source Coding}
In this section, we consider the Slepian-Wolf source coding illustrated
in Fig.~\ref{fig:sw}.
The achievable rate region for this problem is given by
\begin{align*}
  R_X &\geq H(X|Y)
  \\
  R_Y &\geq H(Y|X)
  \\
  R_X+R_Y &\geq H(X,Y),
\end{align*}
where $(R_X,R_Y)$ denotes an encoding rate pair.

The achievability of the Slepian-Wolf source coding is proved
in~\cite{C75} and \cite{CSI82} for an ensemble of bin-coding and all
$q$-ary linear matrices, respectively.
The construction of encoders using sparse matrices is studied
in \cite{M02}\cite{SPR02} and the achievability is proved
in~\cite{SWLDPC} by using ML decoding.
The aim of this section is to demonstrate the proof of the coding
theorem based on the hash property.
The proof is given in Section \ref{sec:proof-sw}.

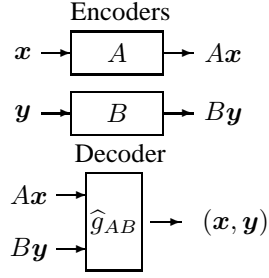
\begin{figure}[t]
  \begin{center}
    \unitlength 0.4mm
    \begin{picture}(176,47)(0,0)
      \put(82,41){\makebox(0,0){Encoders}}
      \put(50,27){\makebox(0,0){$\xx$}}
      \put(56,27){\vector(1,0){10}}
      \put(66,20){\framebox(30,14){$A$}}
      \put(96,27){\vector(1,0){10}}
      \put(116,27){\makebox(0,0){$A\xx$}}
      \put(50,7){\makebox(0,0){$\yy$}}
      \put(56,7){\vector(1,0){10}}
      \put(66,0){\framebox(30,14){$B$}}
      \put(96,7){\vector(1,0){10}}
      \put(116,7){\makebox(0,0){$B\yy$}}
    \end{picture}
    \\
    \begin{picture}(176,45)(0,0)
      \put(82,39){\makebox(0,0){Decoder}}
      \put(52,25){\makebox(0,0){$A\xx$}}
      \put(61,25){\vector(1,0){10}}
      \put(52,7){\makebox(0,0){$B\yy$}}
      \put(61,7){\vector(1,0){10}}
      \put(71,0){\framebox(18,32){$\hg_{AB}$}}
      \put(92,16){\vector(1,0){10}}
      \put(120,16){\makebox(0,0){$(\xx,\yy)$}}
    \end{picture}
  \end{center}
  \caption{Construction of Slepian-Wolf Source Code}
  \label{fig:sw-code}
\end{figure}

We fix functions
\begin{align*}
  A&:\X^n\to\im\A
  \\
  B&:\Y^n\to\im\B,
\end{align*}
which are available for constructing encoders and a decoder.
We define the encoders and the decoder (illustrated in Fig.\ \ref{fig:sw-code})
\begin{align*}
  \Encoder_X&:\X^n\to\im\A
  \\
  \Encoder_Y&:\Y^n\to\im\B
  \\
  \Decoder&:\im\A\times\im\B\to\X^n\times\Y^n
\end{align*}
as
\begin{align*}
  \Encoder_X(\xx)&\equiv A\xx
  \\
  \Encoder_Y(\yy)&\equiv B\yy
  \\
  \Decoder(\ba,\bb)
  &\equiv \hg_{AB}(\ba,\bb),
\end{align*}
where
\begin{align*}
  \hg_{AB}(\ba,\bb)
  &\equiv
  \arg\min_{(\xx',\yy')\in\C_{A}(\ba)\times\C_B(\bb)}
  D(\nu_{\xx'\yy'}\|\mu_{XY}).
\end{align*}
It should be noted that the construction is analogous to the coset
encoding/decoding in the case when $A$ and $B$ are linear functions.

The encoding rate pair $(R_X,R_Y)$ is given by
\begin{align*}
  R_X\equiv\frac{\log|\im\A|}n
  \\
  R_Y\equiv\frac{\log|\im\B|}n
\end{align*}
and the error probability $\Error_{XY}(A,B)$ is given by
\begin{equation*}
  \Error_{XY}(A,B)
  \equiv
  \mu_{XY}\lrsb{\lrb{
      (\xx,\yy): \Decoder(\Encoder_X(\xx),\Encoder_Y(\yy))\neq (\xx,\yy)
  }}.
\end{equation*}
We have the following theorem.
It should be noted that $\X$ and $\Y$ are allowed to be non-binary
and the correlation of the two sources is allowed to be asymmetric.
\begin{thm}
\label{thm:sw}
Assume that $(\bcA,\bpA)$ and $(\bcB,\bpB)$
have a strong hash property.
Let  $(X,Y)$ be a pair of stationary memoryless sources.
If $(R_X,R_Y)$ satisfies
\begin{align}
  R_X &> H(X|Y)
  \label{eq:swx}
  \\
  R_Y &> H(Y|X)
  \label{eq:swy}
  \\
  R_X+R_Y &> H(X,Y),
  \label{eq:swxy}
\end{align}
then for any $\delta>0$
and all sufficiently large $n$
there are functions (sparse matrices) $A\in\A$ and $B\in\B$ such that
\begin{align*}
  &\Error_{XY}(A,B)
  \leq
  \delta.
\end{align*}
\end{thm}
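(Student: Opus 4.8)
The plan is to argue by the probabilistic method over the ensembles $(\bcA,\bpA)$ and $(\bcB,\bpB)$: it suffices to show that the ensemble average $\sum_{A,B}\pA(A)\pB(B)\Error_{XY}(A,B)$ can be made at most $\delta$ for all sufficiently large $n$, since then some pair $(A,B)$ attains error at most this average, and specializing the ensembles to the $q$-ary sparse matrices of Section~\ref{sec:linear} (which have a strong hash property by Theorem~\ref{thm:hash-linear}) makes $A$ and $B$ sparse matrices.

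Fix $\gamma>0$, to be chosen later. The first step is to isolate the two ways a decoding error can occur for a source pair $(\xx,\yy)$: either $(\xx,\yy)\notin\T_{XY,\gamma}$, or $(\xx,\yy)\in\T_{XY,\gamma}$ while $\C_A(A\xx)\times\C_B(B\yy)$ contains some $(\xx',\yy')\neq(\xx,\yy)$ with $D(\nu_{\xx'\yy'}\|\mu_{XY})\leq D(\nu_{\xx\yy}\|\mu_{XY})$ --- the inequality kept non-strict so that ties of the minimum-divergence decoder $\hg_{AB}$ are counted against us. Since $(\xx,\yy)\in\T_{XY,\gamma}$ forces $D(\nu_{\xx\yy}\|\mu_{XY})<\gamma$, hence $D(\nu_{\xx'\yy'}\|\mu_{XY})<\gamma$, every such competitor also lies in $\T_{XY,\gamma}$, so the second event implies $[\T_{XY,\gamma}\setminus\{(\xx,\yy)\}]\cap[\C_A(A\xx)\times\C_B(B\yy)]\neq\emptyset$. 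Averaging over $(\sfA,\sfB)$, using $\mu_{XY}(\T_{XY,\gamma})\leq1$, and invoking Lemma~\ref{lem:hash-AxB-CRP} with $\G\equiv\T_{XY,\gamma}$ (its hypothesis (\ref{eq:hash}) being part of the strong hash property) gives
\begin{align*}
  \sum_{A,B}\pA(A)\pB(B)\Error_{XY}(A,B)
  &\leq
  \mu_{XY}\lrsb{\T_{XY,\gamma}^c}
  +\frac{|\T_{XY,\gamma}|\alphaA\alphaB}{|\im\A||\im\B|}
  +\betaA+\betaB+\betaA\betaB
  \\*
  &\quad
  +\frac{\lrB{\max_{\vv}|\{\xx:(\xx,\vv)\in\T_{XY,\gamma}\}|}\alphaA[\betaB+1]}{|\im\A|}
  +\frac{\lrB{\max_{\uu}|\{\yy:(\uu,\yy)\in\T_{XY,\gamma}\}|}\alphaB[\betaA+1]}{|\im\B|}.
\end{align*}

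It then remains to show each term tends to $0$. A standard type-counting (Sanov) estimate gives $\mu_{XY}(\T_{XY,\gamma}^c)\leq(n+1)^{|\X||\Y|}2^{-n\gamma}\to0$, and method-of-types cardinality bounds give $|\T_{XY,\gamma}|\leq2^{n[H(X,Y)+\eta_{\X\Y}(\gamma)]}$, $\max_{\vv}|\{\xx:(\xx,\vv)\in\T_{XY,\gamma}\}|\leq2^{n[H(X|Y)+\eta_{\X|\Y}(\gamma|\gamma)]}$, and symmetrically $\max_{\uu}|\{\yy:(\uu,\yy)\in\T_{XY,\gamma}\}|\leq2^{n[H(Y|X)+\eta_{\Y|\X}(\gamma|\gamma)]}$, with $\eta$ as in (\ref{eq:def-eta})--(\ref{eq:def-etac}). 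Since $|\im\A|=2^{nR_X}$ and $|\im\B|=2^{nR_Y}$, the three fractional terms become $\alphaA\alphaB\,2^{-n[R_X+R_Y-H(X,Y)-\eta_{\X\Y}(\gamma)]}$, $\alphaA[\betaB+1]\,2^{-n[R_X-H(X|Y)-\eta_{\X|\Y}(\gamma|\gamma)]}$ and $\alphaB[\betaA+1]\,2^{-n[R_Y-H(Y|X)-\eta_{\Y|\X}(\gamma|\gamma)]}$. By (\ref{eq:swx})--(\ref{eq:swxy}) the exponent bases $R_X-H(X|Y)$, $R_Y-H(Y|X)$ and $R_X+R_Y-H(X,Y)$ are strictly positive, so one first fixes $\gamma>0$ small enough that $\limn\eta_{\X\Y}(\gamma)$, $\limn\eta_{\X|\Y}(\gamma|\gamma)$ and $\limn\eta_{\Y|\X}(\gamma|\gamma)$ are each smaller than the corresponding gap; then for all large $n$ the three exponents stay bounded away from $0$ and these terms vanish, while $\alphaA,\alphaB\to1$ by (H1) and $\betaA,\betaB\to0$ by (H2). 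Hence the ensemble average is at most $\delta$ for all sufficiently large $n$, which yields the desired $(A,B)$.

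The genuinely delicate points are the order of the two limits --- $\gamma$ must be fixed first, depending only on the rate gaps, and only then $n\to\infty$ --- and the conservative treatment of ties so that every competing pair is forced inside $\T_{XY,\gamma}$; everything else (the typical- and conditionally-typical-set cardinality estimates matched to the $\eta$-functions, and the Sanov bound) is routine method-of-types bookkeeping, the substance of the argument being carried by Lemma~\ref{lem:hash-AxB-CRP}.
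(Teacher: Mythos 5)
Your proposal is correct and follows essentially the same route as the paper's own proof: the same two-event decomposition (atypical source output versus a minimum-divergence competitor, with ties counted as errors and the competitor forced into $\T_{XY,\gamma}$ by the divergence comparison), the same application of Lemma~\ref{lem:hash-AxB-CRP} with $\G\equiv\T_{XY,\gamma}$, and the same method-of-types cardinality bounds combined with (\ref{eq:swx})--(\ref{eq:swxy}) and (H1)--(H2), choosing $\gamma$ before letting $n\to\infty$. No gaps.
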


\begin{rem}
Instead of a maximum-likelihood decoder,
we use a minimum-divergence decoder,
which is compatible with typical-set decoding.
We can replace the minimum-divergence decoder by
\begin{align}
  g'_{AB}(\ba,\bb)
  &\equiv
  \arg\max_{\substack{
      (\xx',\yy')\in\C_{A}(\ba)\times\C_B(\bb)
      \\
      \xx'\in\T_{X,\gamma}
      \\
      \yy'\in\T_{Y,\gamma}
  }}
  \mu_{XY}(\xx',\yy'),
  \label{eq:ML}
\end{align}
where $\gamma>0$ should be defined properly.
The reason for introducing the conditions $\xx'\in\T_{X,\gamma}$ and 
$\yy'\in\T_{Y,\gamma}$ is given at the end of Section~\ref{sec:proof-sw}.
\end{rem}

\section{Broadcast Channel Coding}

In this section we consider the broadcast channel coding problem
illustrated in Fig.~\ref{fig:broadcast}.
A broadcast channel is characterized by the conditional probability
distribution $\mu_{YZ|X}$,
where $X$ is a random variable corresponding to the channel input of a
sender, and $(Y,Z)$ is a pair of random variables corresponding to
the channel outputs of respective receivers.
In the following, we consider the case when a sender transmits
two independent messages with nothing in common to two receivers.
It is known that if a rate pair $(R_Y,R_Z)$ satisfies
\begin{align}
  R_Y &\leq I(U;Y)
  \label{eq:RYc}
  \\
  R_Z &\leq I(V;Z)
  \label{eq:RZc}
  \\
  R_Y+R_Z &\leq I(U;Y)+I(V;Z)-I(U;V)
  \label{eq:RYRZc}
\end{align}
for some joint probability distribution $\mu_{UVX}$ on $\U\times\V\times\X$,
then there is a code for this channel such that the decoding error
probability goes to zero as the block length goes to infinity,
where the joint distribution of random variable $(U,V,X,Y,Z)$ is given
by
\begin{align}
  \mu_{UVXYZ}(u,v,x,y,z)
  \equiv \mu_{YZ|X}(y,z|x)\mu_{UVX}(u,v,x).
  \label{eq:markov-broadcast}
\end{align}
This type of achievable region is derived in \cite{M79} and a simpler
proof is given in \cite{EM81}.
Furthermore, it is shown in \cite[Theorem 1]{GA09} that it is sufficient
to consider a joint probability distribution $\mu_{UVX}$ satisfying
$|\U|\leq|\X|$, $|\V|\leq|\X|$, and $H(X|UV)=0$.
In the following, we assume that $|\U|\leq|\X|$, $|\V|\leq|\X|$,
and there is a function $f:\U\times\V\to\X$ such that 
\begin{equation}
  \mu_{UVX}(u,v,x)
  =
  \mu_{UV}(u,v)\chi(x=f(u,v)),
  \label{eq:UVfUV}
\end{equation}
where (\ref{eq:UVfUV}) is equivalent to $H(X|UV)=0$.

\begin{figure}[t]
  \begin{center}
    \unitlength 0.40mm
    \begin{picture}(186,99)(0,6)
      \put(94,89){\makebox(0,0){Encoder}}
      \put(20,6){\framebox(148,75){}}
      \put(30,66){\makebox(0,0){$\ba$}}
      \put(35,66){\vector(1,0){10}}
      \put(0,51){\makebox(0,0){$\mm$}}
      \put(10,51){\vector(1,0){35}}
      \put(30,36){\makebox(0,0){$\bb$}}
      \put(35,36){\vector(1,0){10}}
      \put(0,21){\makebox(0,0){$\ww$}}
      \put(10,21){\vector(1,0){35}}
      \put(45,13){\framebox(32,61){$\hg_{AA'BB'}$}}
      \put(77,43.5){\vector(1,0){10}}
      \put(102,43.5){\makebox(0,0){$(\uu,\vv)$}}
      \put(117,43.5){\vector(1,0){10}}
      \put(127,36.5){\framebox(30,14){$\ff$}}
      \put(157,43.5){\vector(1,0){20}}
      \put(183,43.5){\makebox(0,0){$\xx$}}
    \end{picture}
    \\
    \begin{picture}(156,70)(0,0)
      \put(82,60){\makebox(0,0){Decoders}}
      \put(30,35){\makebox(0,0){$\ba$}}
      \put(35,35){\vector(1,0){10}}
      \put(0,17){\makebox(0,0){$\yy$}}
      \put(10,17){\vector(1,0){35}}
      \put(45,10){\framebox(18,32){$g_A$}}
      \put(63,26){\vector(1,0){10}}
      \put(83,26){\makebox(0,0){$\uu$}}
      \put(93,26){\vector(1,0){10}}
      \put(103,19){\framebox(30,14){$A'$}}
      \put(133,26){\vector(1,0){20}}
      \put(167,26){\makebox(0,0){$\mm$}}
      \put(20,0){\framebox(124,52){}}
    \end{picture}\\
    \begin{picture}(156,60)(0,0)
      \put(30,35){\makebox(0,0){$\bb$}}
      \put(35,35){\vector(1,0){10}}
      \put(0,17){\makebox(0,0){$\zz$}}
      \put(10,17){\vector(1,0){35}}
      \put(45,10){\framebox(18,32){$g_B$}}
      \put(63,26){\vector(1,0){10}}
      \put(83,26){\makebox(0,0){$\vv$}}
      \put(93,26){\vector(1,0){10}}
      \put(103,19){\framebox(30,14){$B'$}}
      \put(133,26){\vector(1,0){20}}
      \put(167,26){\makebox(0,0){$\ww$}}
      \put(20,0){\framebox(124,52){}}
    \end{picture}
  \end{center}
  \caption{Construction of Broadcast Channel Code}
  \label{fig:broadcast-code}
\end{figure}
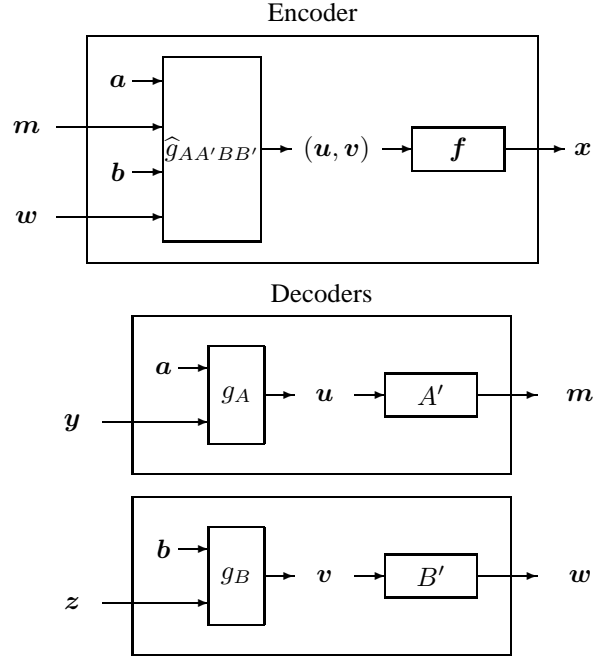

Let $(R_Y,R_Z)$  be a pair of encoding rates.
In the following, we assume that $(R_Y,R_Z)$, $(r_Y,r_Z)$, and $\e$
satisfy
\begin{align}
  r_Y&>H(U|Y)
  \label{eq:rY}
  \\
  r_Z&>H(V|Z)
  \label{eq:rZ}
  \\
  r_Y+R_Y&<H(U)-\e
  \label{eq:RYe}
  \\
  r_Z+R_Z&<H(V)-\e
  \label{eq:RZe}
  \\
  r_Y+R_Y+r_Z+R_Z &< H(U,V)
  \\
  r_Y+R_Y+r_Z+R_Z &> H(U,V)-\e
  \label{eq:rYRYrZRZ-lower}
\end{align}
for given $\mu_{YZ|X}$, $\mu_{UV}$, and $f$.
It should be noted that there are $(r_Y,r_Z)$ and $\e$ when $(R_Y,R_Z)$
satisfies
\begin{align}
  R_Y&< I(U;Y)
  \label{eq:RY}
  \\
  R_Z&< I(V;Z)
  \label{eq:RZ}
  \\
  R_Y+R_Z&< I(U;Y)+I(V;Z)-I(U;V).
  \label{eq:RYRZ}
\end{align}
This fact is shown from the relations
\begin{align*}
  H(U)-H(U|Y)&=I(U;Y)
  \\
  H(V)-H(V|Z)&=I(V;Z)
  \\
  H(U,V)-H(U|Y)-H(V|Z)
  &=I(U;Y)+I(V;Z)
  -I(U;V).
\end{align*}
We fix functions
\begin{align*}
  A&:\U^n\to\im\A
  \\
  A'&:\U^n\to\im\A'
  \\
  B&:\V^n\to\im\B
  \\
  B'&:\V^n\to\im\B'
\end{align*}
and vectors
\begin{align*}
  \ba&\in\im\A
  \\
  \bb&\in\im\B
\end{align*}
available for an encoder and decoders satisfying
\begin{align*}
  r_Y&=\frac{\log|\im\A|}n
  \\
  R_Y&=\frac{\log|\im\A'|}n
  \\
  r_Z&=\frac{\log|\im\B|}n
  \\
  R_Z&=\frac{\log|\im\B'|}n.
\end{align*}
We define the encoder and the decoders
\begin{align*}
  \Encoder&:\im\A'\times\im\B'\to\X^n
  \\
  \Decoder_Y&:\Y^n\to\im\A'
  \\
  \Decoder_Z&:\Z^n\to\im\B'
\end{align*}
as
\begin{align*}
  \Encoder(\mm,\ww)
  &\equiv \ff(g_{AA'BB'}(\ba,\mm,\bb,\ww))
  \\
  \Decoder_Y(\yy)
  &\equiv A'g_A(\ba|\yy)
  \\
  \Decoder_Z(\zz)
  &\equiv B'g_B(\bb|\zz),
\end{align*}
where
\begin{align*}
  \ff(\uu,\vv)&\equiv (f(u_1,v_1),\ldots,f(u_n,v_n))
  \\
  \hg_{AA'BB'}(\ba,\mm,\bb,\ww)
  &\equiv
  \arg\min_{\substack{(\uu',\vv'):\\
      \uu'\in\C_{A}(\ba)\cap\C_{A'}(\mm)\\
      \vv'\in\C_{B}(\bb)\cap\C_{B'}(\ww)
  }}
  D(\nu_{\uu'\vv'}\|\mu_{UV})
  \\
  g_{A}(\ba|\yy)
  &\equiv\arg\max_{\uu'\in\C_A(\ba)}\mu_{U|Y}(\uu'|\yy)
  \\
  g_{B}(\bb|\zz)
  &\equiv\arg\max_{\vv'\in\C_B(\bb)}\mu_{V|Z}(\vv'|\zz).
\end{align*}

Let $\sfmm$ and $\sfww$ be random variables corresponding to messages
$\mm$ and $\ww$, respectively,
where the probability distributions $p_{\sfmm}$ and $p_{\sfww}$ are given by
\begin{align}
  p_{\sfmm}(\mm)
  &\equiv
  \begin{cases}
    \frac 1{|\im\A'|}
    &\text{if}\ \mm\in\im\A'
    \\
    0,
    &\text{if}\ \mm\notin\im\A'
  \end{cases}
  \label{eq:pM}
  \\
  p_{\sfww}(\ww)
  &\equiv
  \begin{cases}
    \frac 1{|\im\B'|}
    &\text{if}\ \ww\in\im\B'
    \\
    0,
    &\text{if}\ \ww\notin\im\B'
  \end{cases}
  \label{eq:pW}
\end{align}
and the joint distribution $p_{\sfmm\sfww YZ}$ of the messages and
outputs is given by
\begin{align*}
  p_{\sfmm\sfww YZ}(\mm,\ww,\yy,\zz)
  &\equiv
  \mu_{YZ|X}(\yy,\zz|\Encoder(\mm,\ww))p_M(\mm)p_W(\ww).
\end{align*}

Let us remark an intuitive interpretation of the code construction,
which is illustrated in Fig.~\ref{fig:broadcast-code}.
Assume that $\ba$ and $\bb$ are shared by the encoder and the decoder.
For $\ba$, $\bb$, and messages $\mm$, $\ww$,
the function $\hg_{AA'BB'}$ generates a pair of typical sequence
$(\uu,\vv)\in\T_{UV,\gamma}$ and it is converted to a channel input
$\xx$ by using a function $f$, where $(\uu,\vv,\xx)$ is jointly-typical.
One decoder reproduces $\uu$ by using $g_A$ from $\ba$ and a channel
output $\yy$ and another decoder reproduces $\vv$ by using $g_B$ from
$\bb$ and a channel output $\zz$.
Since $A'\uu=\mm$, $B'\vv=\ww$, and $(\uu,\vv,\xx,\yy)$ is jointly
typical, the decoding succeeds if the amount of information of $\ba$
(resp. $\bb$) is greater than $H(U|Y)$ (resp. $H(V|Z)$) to satisfy the
collision-resistance property.
On the other hand, the rate of $\ba$, $\mm$, $\bb$, and $\ww$ should
satisfy the saturation property to find a jointly typical sequences
$(\uu,\vv)$ with probability close to $1$, that is,
the right hand side of (\ref{eq:AxB-SP}) should tend to zero as the
block length goes to infinity.
Then we can set the encoding rate of $\mm$ and $\ww$ 
satisfying (\ref{eq:RY})--(\ref{eq:RYRZ}).

Let $\Error_{YZ|X}(A,A',B,B',\ba,\bb)$ be the decoding error probability
given by
\begin{align}
  &\Error_{YZ|X}(A,A',B,B',\ba,\bb)
  \notag
  \\*
  &\equiv
  1- \sum_{\mm,\ww,\yy,\zz}\mu_{YZ|X}(\yy,\zz|\Encoder(\mm,\ww))p_M(\mm)p_W(\ww)
  \chi(\Decoder_Y(\yy)=\mm)
  \chi(\Decoder_Z(\zz)=\ww).
  \label{eq:def-error-broadcast}
\end{align}
Then we have the following theorem.
It should be noted that the channel is allowed to be asymmetric and
non-degraded.
\begin{thm}
\label{thm:broadcast}
Let $\mu_{YZ|X}$ be the conditional probability distribution of a
stationary memoryless channel and $\mu_{UVXYZ}$ be defined by
(\ref{eq:markov-broadcast}) and (\ref{eq:UVfUV}) for a given joint
probability distribution $\mu_{UV}$ and a function $f$.
For a given $(R_Y,R_Z)$, $(r_Y,r_Z)$, and $\e$ satisfying
(\ref{eq:rY})--(\ref{eq:rYRYrZRZ-lower}), assume that ensembles
$(\bcA,\bpA)$, $(\bcA',\bpAp)$, $(\bcB,\bpB)$, and $(\bcB',\bpBp)$
have a strong hash property.
Then, for any $\delta>0$ and all sufficiently large $n$,
there are functions (sparse matrices) $A\in\A$, $A'\in\A'$, $B\in\B$,
$B'\in\B'$ and vectors $\ba\in\im\A$, $\bb\in\im\B$ such that
\begin{gather}
  \Rate_Y > R_Y-\delta
  \label{eq:rateY}
  \\
  \Rate_Z > R_Z-\delta
  \label{eq:rateZ}
  \\
  \Error_{YZ|X}(A,A',B,B',\ba,\bb)<\delta
  \label{eq:error}
\end{gather}
By assuming that $\delta\to 0$, the rate of the proposed code is close
to the boundary of the region specified by
(\ref{eq:RYc})--(\ref{eq:RYRZc}) for a given $\mu_{UV}$ and $f$.
\end{thm}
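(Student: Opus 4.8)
\begin{IEEEproof}[Proof sketch]
The plan is to split the error event underlying $\Error_{YZ|X}$ into an \emph{encoding error} (the minimum-divergence encoder $\hg_{AA'BB'}$ fails to output a pair in $\T_{UV,\gamma}$), a \emph{channel-atypicality error} (the channel output is not jointly typical with the encoder output), and the two \emph{decoding errors} of $\Decoder_Y$ and $\Decoder_Z$, and to show that, averaged over the ensembles $(\bcA,\bpA)$, $(\bcA',\bpAp)$, $(\bcB,\bpB)$, $(\bcB',\bpBp)$, over the uniform messages $\sfmm,\sfww$, and over the channel, each term has probability tending to $0$. Then $E_{\sfA\sfA'\sfB\sfB'\sfaa\sfbb}[\Error_{YZ|X}]<\delta$ for large $n$, so functions $A,A',B,B'$ and vectors $\ba\in\im\A$, $\bb\in\im\B$ obeying \eqref{eq:error} exist, and $A,A',B,B'$ are sparse matrices when the four ensembles are taken to be the $q$-ary sparse-matrix ensembles of Section~\ref{sec:linear}.

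For the encoding error I would first form the concatenated ensembles $\hcA\equiv\A\times\A'$ and $\hcB\equiv\B\times\B'$; by Lemma~\ref{lem:hash-AB} these have a strong hash property with parameters $(\alphaA\alphaAp,\betaA+\betaAp)$ and $(\alphaB\alphaBp,\betaB+\betaBp)$. Applying Lemma~\ref{lem:hash-AxB-SP} to $(\hcA,\hcB)$ with $\T\equiv\T_{UV,\gamma}$, and letting $(\ba,\mm)$ and $(\bb,\ww)$ play the role of the uniform vectors on $\im\hcA=\im\A\times\im\A'$ and $\im\hcB=\im\B\times\im\B'$ (their laws being $\pa p_{\sfmm}$ and $\pb p_{\sfww}$ with $p_{\sfmm},p_{\sfww}$ as in \eqref{eq:pM}--\eqref{eq:pW}), the probability that $\T_{UV,\gamma}\cap[\C_A(\ba)\cap\C_{A'}(\mm)]\times[\C_B(\bb)\cap\C_{B'}(\ww)]=\emptyset$ is at most the right-hand side of \eqref{eq:AxB-SP}. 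Using $|\T_{UV,\gamma}|\geq 2^{n[H(U,V)-o(1)]}$, $\max_{\vv}|(\T_{UV,\gamma})_{\U|\V}(\vv)|\leq 2^{n[H(U|V)+o(1)]}$ together with its mirror image in the other coordinate, and $|\im\hcA|=2^{n(r_Y+R_Y)}$, $|\im\hcB|=2^{n(r_Z+R_Z)}$, this bound is at most
\begin{equation*}
  \alphaA\alphaAp\alphaB\alphaBp-1
  +2^{n[r_Z+R_Z-H(V)+o(1)]}
  +2^{n[r_Y+R_Y-H(U)+o(1)]}
  +2^{n[r_Y+R_Y+r_Z+R_Z-H(U,V)+o(1)]},
\end{equation*}
which vanishes by (H1), (H2), \eqref{eq:RYe}, \eqref{eq:RZe}, and $r_Y+R_Y+r_Z+R_Z<H(U,V)$ provided $\gamma$ is small enough. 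Off this event the minimizer $(\uu,\vv)$ satisfies $D(\nu_{\uu\vv}\|\mu_{UV})<\gamma$, so $(\uu,\vv)\in\T_{UV,\gamma}$, and since \eqref{eq:UVfUV} holds we get $D(\nu_{\uu\vv\xx}\|\mu_{UVX})=D(\nu_{\uu\vv}\|\mu_{UV})<\gamma$ for $\xx=\ff(\uu,\vv)$, hence $(\uu,\vv,\xx)\in\T_{UVX,\gamma}$.

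Conditioned on $(\uu,\vv,\xx)\in\T_{UVX,\gamma}$, the outputs $(\yy,\zz)$ are generated by the memoryless channel $\mu_{YZ|X}$ applied coordinatewise to $\xx$, with $(Y,Z)$ conditionally independent of $(U,V)$ given $X$ by \eqref{eq:markov-broadcast}; a standard conditional-typicality estimate gives $(\uu,\vv,\xx,\yy,\zz)\notin\T_{UVXYZ,\gamma'}$ with vanishing probability for a suitable $\gamma'>\gamma$, and on its complement $(\uu,\yy)\in\T_{UY,\gamma'}$ and $(\vv,\zz)\in\T_{VZ,\gamma'}$. The two decoding steps are symmetric, so consider $\Decoder_Y$: if $g_A(\ba|\yy)=\uu$ then $\Decoder_Y(\yy)=A'\uu=\mm$, so a decoding error forces $g_A(\ba|\yy)\neq\uu$. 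Joint typicality gives $-\tfrac1n\log\mu_{U|Y}(\uu|\yy)\leq\gamma_Y$ with $\gamma_Y=H(U|Y)+o(1)$, hence $\uu\in\G_Y(\yy)\equiv\{\uu':\mu_{U|Y}(\uu'|\yy)\geq 2^{-n\gamma_Y}\}$ and $|\G_Y(\yy)|\leq 2^{n\gamma_Y}$; since $g_A$ maximizes the likelihood over $\C_A(\ba)$ and $\ba=A\uu$ on the encoding-success event, $g_A(\ba|\yy)\neq\uu$ implies $[\G_Y(\yy)\setminus\{\uu\}]\cap\C_A(A\uu)\neq\emptyset$. By Lemma~\ref{lem:whash}, $(\bcA,\bpA)$ satisfies \eqref{eq:whash}, so Lemma~\ref{lem:collision} bounds the $\sfA$-probability of this event by $2^{n\gamma_Y}\alphaA/|\im\A|+\betaA=2^{n[H(U|Y)-r_Y+o(1)]}\alphaA+\betaA$, which vanishes by \eqref{eq:rY} and (H1), (H2); $\Decoder_Z$ is handled identically using \eqref{eq:rZ} and $(\bcB,\bpB)$.

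Summing the four bounds gives $E_{\sfA\sfA'\sfB\sfB'\sfaa\sfbb}[\Error_{YZ|X}]<\delta$ for all large $n$; the rate claims \eqref{eq:rateY}--\eqref{eq:rateZ} hold because $\Rate_Y=\tfrac1n\log|\im\A'|$ and $\Rate_Z=\tfrac1n\log|\im\B'|$ can be chosen within $\delta$ of $R_Y$ and $R_Z$. Letting $\delta\to0$ while $r_Y\downarrow H(U|Y)$, $r_Z\downarrow H(V|Z)$, $\e\downarrow0$, and $r_Y+R_Y+r_Z+R_Z\uparrow H(U,V)$ --- all permitted by \eqref{eq:rY}--\eqref{eq:rYRYrZRZ-lower} --- the identities $H(U)-H(U|Y)=I(U;Y)$, $H(V)-H(V|Z)=I(V;Z)$, and $H(U,V)-H(U|Y)-H(V|Z)=I(U;Y)+I(V;Z)-I(U;V)$ show that $(R_Y,R_Z)$ approaches the boundary of \eqref{eq:RYc}--\eqref{eq:RYRZc}. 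The main obstacle is the decoding step: the minimum-divergence/ML rule has to be converted into the ``two codewords in one coset'' event handled by Lemma~\ref{lem:collision}, and, more delicately, the encoder output $\uu$ and the channel output $\yy$ are statistically entangled with the very function $\sfA$ that Lemma~\ref{lem:collision} averages over, so this reduction must be carried out inside a single sum over all of $\sfA,\sfA',\sfB,\sfB',\sfaa,\sfbb,\sfmm,\sfww$ and the channel, using that the collision bound is uniform in its set argument; checking that every term on the right-hand sides of \eqref{eq:AxB-SP} and of Lemma~\ref{lem:collision} vanishes is routine but relies on the cardinality corrections $\eta_{\U},\eta_{\U|\V},\zeta_{\U},\zeta_{\U|\V},\lambda_{\U}$ and a careful choice of $\gamma<\gamma'$.
\end{IEEEproof}
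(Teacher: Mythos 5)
Your decomposition and choice of lemmas coincide with the paper's: the same four error events (encoding failure, channel atypicality, two decoding failures), the concatenated ensembles $(\bhcA,\bphA)$ and $(\bhcB,\bphB)$ from Lemma~\ref{lem:hash-AB}, Lemma~\ref{lem:hash-AxB-SP} for the encoder, Lemma~\ref{lem:collision} for each decoder, and the same entropy identities to recover (\ref{eq:RYc})--(\ref{eq:RYRZc}). The gap is in the decoding step, exactly at the point you label ``the main obstacle'' and then leave unresolved. When you actually carry out the decoupling of the encoder output from $\sfA$ --- bounding $\chi(\hg_{\sfA\sfA'\sfB\sfB'}(\sfaa,\sfmm,\sfbb,\sfww)=(\uu,\vv))$ by $\chi(\sfA\uu=\sfaa)\chi(\sfA'\uu=\sfmm)\chi(\sfB\vv=\sfbb)\chi(\sfB'\vv=\sfww)$ and averaging over the uniform $\sfaa,\sfmm,\sfbb,\sfww$ via Lemma~\ref{lem:E} --- you necessarily incur a union bound over all candidate codewords, i.e.\ a multiplicative factor $|\T|/\lrsb{|\im\hcA||\im\hcB|}$ in front of the collision bound $2^{-n[r_Y-H(U|Y)-o(1)]}\alphaA+\betaA$. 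With your choice $\T\equiv\T_{UV,\gamma}$ this factor is only bounded by $2^{n[\e+\eta_{\U\V}(\gamma)]}$, because (\ref{eq:rYRYrZRZ-lower}) merely gives $|\im\hcA||\im\hcB|>2^{n[H(U,V)-\e]}$; the resulting term of order $2^{n\e}\betaA$ need not vanish, since the strong hash property only guarantees $\betaA\to0$ (for the sparse-matrix ensembles the decay is subexponential). So your decoding bound, as stated without any multiplicative factor, is unjustified and in fact fails in general.

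The paper resolves the resulting tension --- the saturation lemma wants $|\T|$ large, the decoding union bound wants $|\T|$ small --- by introducing a subexponential sequence $\kappa(n)\to\infty$ with $\kappa\betaA\to0$, $\kappa\betaB\to0$, and $\frac1n\log\kappa\to0$, and by replacing $\T_{UV,\gamma}$ with a subset $\T$ satisfying $\kappa\leq|\T|/\lrsb{|\im\hcA||\im\hcB|}\leq 2\kappa$; such a subset exists because (\ref{eq:ekappa}) forces $|\T_{UV,\gamma}|\geq\kappa|\im\hcA||\im\hcB|$. The lower bound on $|\T|$, combined with (\ref{eq:rYRYrZRZ-lower}) through the estimates (\ref{eq:limit-bc1})--(\ref{eq:limit-bc3}), still drives the saturation bound to zero (each term is $O(1/\kappa)$), while the upper bound caps the union-bound factor at $2\kappa$, so the decoding term becomes $2\kappa\lrB{2^{-n[r_Y-H(U|Y)-\zeta_{\U|\Y}(2\gamma|2\gamma)]}\alphaA+\betaA}\to0$. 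This two-sided control of $|\T|$ is the key idea missing from your sketch; the rest of your argument is essentially the paper's.
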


\begin{rem}
It should be noted that the maximum-likelihood decoders $g_A$ and $g_B$
can be replaced by the minimum-divergence decoders $\hg_A$ and $\hg_B$
defined as
\begin{align*}
  \hg_A(\ba|\yy)
  &\equiv\arg\min_{\uu'\in\C_A(\ba)}D(\nu_{\uu'|\yy}\|\mu_{U|Y}|\nu_{\yy})
  \\
  \hg_B(\bb|\zz)
  &\equiv\arg\min_{\vv'\in\C_B(\bb)}D(\nu_{\vv'|\zz}\|\mu_{V|Z}|\nu_{\zz}),
\end{align*}
respectively.
\end{rem}

\begin{rem}
In \cite{EM81}\cite{M79},
the set of strong typical sequences are used for the construction of
a broadcast channel code, where there is no structure of the set of
codewords.
It should be noted that the encoder and the decoder have to share
the large (exponentially in the block length) table which indicates the
correspondence between an index and codewords.
The time complexity of the decoding grows exponentially in the block length.
They represent obstacles to the implementation.
In our construction, almost all codewords have strong typicality but the
set of codewords have a structure specified by functions 
$A$, $A'$, $B$, $B'$ and vectors $\ba$, $\bb$.
When the functions are linear, it is expected that the space and time
complexity can be reduced compared with conventional constructions.
\end{rem}
\begin{rem}
It should be noted that the inner bound presented in the beginning of
this section is not convex in general and we can apply the time sharing
principle to obtain the convex hull of this region.
It should also be noted that, according to \cite[Page 9-36]{EK10},
there is a much larger inner region given by the set of a rate pair
$(R_Y,R_Z)$ satisfying
\begin{align*}
  R_Y &\leq I(S,U;Y)
  \\
  R_Z &\leq I(S,V;Z)
  \\
  R_Y+R_Z &\leq I(S,U;Y)+I(V;Z|S)-I(U;V|S)
  \\
  R_Y+R_Z &\leq I(U;Y|S)+I(S,V;Z)-I(U;V|S),
\end{align*}
for some probability distribution $\mu_{SUV}$ on $\cS\times\U\times\V$
and a function $f:\cS\times\U\times\V\to\X$.
We presented the code construction for the smaller inner region.
because it is a typical application of the extended saturation property
(Lemma~\ref{lem:hash-AxB-SP}).
It is a future challenge to obtain a code construction for the larger
inner region based on the strong hash property.
\end{rem}

\section{Construction of Minimum-Divergence Operation}
\label{sec:binary}

In this section, we introduce the construction of the minimum-divergence
operation
\begin{align*}
  \hg_{\hA\hB}(\haa,\hbb)
  &\equiv
  \arg\min_{(\uu',\vv')\in\C_{\hA}(\haa)\times\C_{\hB}(\hbb)}
  D(\nu_{\uu'\vv'}\|\mu_{UV}),
\end{align*}
which is used to construct the encoder of the broadcast channel
and the decoder of the Slepian-Wolf codes.

The construction of this operation is introduced in \cite{HASH-UNIV}
for a binary alphabet.
However, we have to construct this operation
for an alphabet that is a product space $\U\times\V$.

In the following, we assume that $\{0,1\}$ is a finite field,
$\U=\V=\{0,1\}$ and that $\hA$ and $\hB$ are matrices on $\{0,1\}$.
It should be noted that our construction of broadcast channel codes
can approach the boundary of the achievable region by assuming $|\X|=2$
(see \cite[Theorem 1]{GA09}).

Let $t_{\uu\vv}(u,v)\in\{0,\ldots,n\}$ be the number of occurrences of
$(u,v)\in\{0,1\}^2$ in $(\uu,\vv)\in\{0,1\}^n\times\{0,1\}^n$.
For example, when $n=8$ and
\begin{align*}
  \uu &= 01001010
  \\
  \vv &= 00101001,
\end{align*}
we have
\begin{align*}
  t_{\uu\vv}(0,0)&=3
  \\
  t_{\uu\vv}(0,1)&=2
  \\
  t_{\uu\vv}(1,0)&=2
  \\
  t_{\uu\vv}(1,1)&=1.
\end{align*} 
Let $\bt_{\uu\vv}\equiv\lrb{t_{\uu\vv}(0,0),t_{\uu\vv}(0,1),t_{\uu\vv}(1,0),t_{\uu\vv}(1,1)}$.
We have
\begin{align*}
  \hg_{\hA\hB}(\haa,\hbb)
  &=
  \arg\min_{\substack{
      (\uu_{\bt},\vv_{\bt}):
      \\
      \bt\in\cH
  }}
  D(\nu_{\uu_{\bt}\vv_{\bt}}\|\mu_{UV}),
\end{align*}
where
\begin{align}
  (\uu_{\bt},\vv_{\bt})
  &\equiv
  \begin{cases}
    (\uu,\vv)
    &\text{if}\ \exists(\uu,\vv)\ \text{s.t.}\ \bt_{\uu\vv}=\bt,
    \hA\uu=\haa, \hB\vv=\hbb
    \\
    \text{`error'}
    &\text{otherwise}
  \end{cases} 
  \label{eq:uvt}
  \\
  \cH
  &\equiv\lrb{\bt: t(u,v)\in\{0,\ldots,n\}, \sum_{u,v}t(u,v)=n}
  \notag
\end{align}
and $D(\nu_{\uu_{\bt}\vv_{\bt}}\|\mu_{UV})\equiv\infty$
when the operation (\ref{eq:uvt}) returns an error.
It should be noted that the cardinality of the set $\cH$
is at most $[n+1]^4$.

We can apply the linear programming technique introduced in \cite{FWK05}
to the construction of (\ref{eq:uvt}).
By using the technique introduced in \cite{FWK05},
the conditions $\hA\uu=\haa$ and $\hB\vv=\hbb$
can be replaced by linear inequalities.
In the following, we focus on the condition $\bt_{\uu\vv}=\bt$
for a given $\bt\in\cH$.

Let $\uu\equiv(u_1,\ldots,u_n)$, $\vv\equiv(v_1,\ldots,v_n)$
and $\bt\equiv(t(0,0),t(0,1),t(1,0),t(1,1))$.
First, we introduce the auxiliary variable
$\bs(u,v)\equiv(s_1(u,v),\ldots,s_n(u,v))$ for $(u,v)\in\{0,1\}^2$ defined as
\begin{equation}
  s_i(u,v)\equiv
  \begin{cases}
    1,&\text{if}\ u_i=u, v_i=v
    \\
    0,&\text{otherwise}.
  \end{cases}
  \label{eq:si}
\end{equation}
Then the condition $\bt_{\uu\vv}=\bt$ is equivalent to 
linear equalities
\begin{equation*}
  \sum_{i=1}^n s_i(u,v)=t(u,v)
\end{equation*}
for all $(u,v)\in\{0,1\}^2$.

Next, we show that the relation (\ref{eq:si}) can be replaced by
linear inequalities.
Let $\cS(u,v)$ be defined as
\[
  \cS(u,v)\equiv\{(1,u,v)\}\cup\lrB{\bigcup_{(u',v')\neq(u,v)}\{(0,u',v')\}}.
\]
Then we obtain the convex hull of $\cS(u,v)$
defined by
the set of $(s',u',v')$ that satisfies
the following linear inequalities.
\begin{align*}
  s'&\geq 0
  \\
  u'&\geq 0\quad\text{if}\ u=0
  \\
  u'&\leq 1\quad\text{if}\ u=1
  \\
  v'&\geq 0\quad\text{if}\ v=0
  \\
  v'&\leq 1\quad\text{if}\ v=1
  \\
  s'+[-1]^uu'&\leq 1-u
  \\
  s'+[-1]^vv'&\leq 1-v
  \\
  s'+[-1]^uu'+[-1]^vv'&\geq 1-u-v,
\end{align*}
where constants $u$, $v$ and variables $s'$, $u'$, and $v$
are considered to be real numbers.
For example, when $(u,v)=(0,0)$, we have
\begin{gather*}
  \cS(0,0)=\{(1,0,0),(0,0,1),(0,1,0),(0,1,1)\}
\end{gather*}
and linear inequalities
\begin{align*}
  s'&\geq 0
  \\
  u'&\geq 0
  \\
  v'&\geq 0
  \\
  s'+u'&\leq 1
  \\
  s'+v'&\leq 1
  \\
  s'+u'+v'&\geq 1.
\end{align*}

Finally,
we have $24n+4$ inequalities,
which replace the condition $\bt_{\uu\vv}=\bt$,
defined as
\begin{align*}
  \sum_{i=1}^n s_i(u,v)&=t(u,v)
  \\
  s_i(u,v)&\geq 0
  \\
  u_i&\geq 0\quad\text{if}\ u=0
  \\
  u_i&\leq 1\quad\text{if}\ u=1
  \\
  v_i&\geq 0\quad\text{if}\ v=0
  \\
  v_i&\leq 1\quad\text{if}\ v=1
  \\
  s_i(u,v)+[-1]^uu_i&\leq 1-u
  \\
  s_i(u,v)+[-1]^vv_i&\leq 1-v
  \\
  s_i(u,v)+[-1]^uu_i+[-1]^vv_i&\geq 1-u-v,
\end{align*}
where we take all $i\in\{1,2,\ldots,n\}$ and $(u,v)\in\{0,1\}^2$.

\begin{rem}
It should be noted here that
linear programming frequently finds non-integral solutions for
the operation (\ref{eq:uvt}) even if
$\hA$ and $\hB$ are sparse matrices. 
Let us consider the polytope with vertexes
$\C_{\hA}(\haa)\times\C_{\hB}(\hbb)$.
Then the section of the polytope
cut by the hyperplane defined by $\bt_{\uu\vv}=\bt$
may have many fractional points.
This is one reason why the linear programming finds non-integral
solutions.
In fact,
the original operation introduced
in \cite{HASH-UNIV} has the same problem.
It remains a future challenge to find a good implementation
of the operation (\ref{eq:uvt}).
\end{rem}

\section{Extension to Three or More Terminals}

In this section, we extend our results to three or more terminals.
We use the following notations:
\begin{align*}
  \K&\equiv\{1,2,\ldots,k\}
  \\
  \U_{\K}&\equiv\Prod_{j\in\K}\U_j
  \\
  \uu_{\K}&\equiv\lrb{\lrsb{u_{j,1},u_{j,2},\ldots,u_{j,n}}}_{j\in\K}
  \\
  A_{\K}&\equiv\lrb{A_j}_{j\in\K}
  \intertext{and}
  U_{\J}&\equiv\lrb{U_j}_{j\in\J}
\end{align*}
for each $\J\subset\K$.

\subsection{Multiple Slepian-Wolf Source Code}
In this section, we consider the Slepian-Wolf coding for $k$ correlated
sources.

Let $X_{\K}\in\X_{\K}$ be correlated sources and $\mu_{X_{\K}}$ be the
joint probability distribution of $X_{\K}$.
For each $j\in\K$, we consider an ensemble $(\bcA_j,\bp_{\sfA_j})$
of functions $A_j:\X_j^n\to\im\A_j$.
We assume that function $A_j$ is shared by the $j$-th encoder and the
decoder.
We define the $j$-th encoder as
\[
  \Encoder_j\lrsb{\xx_j}\equiv A_j\xx_j
\]
for each output $\xx_j\in\X_j^n$ of the $j$-th source.
Let $\hg_{A_{\K}}$ be defined as
\begin{align*}
  \hg_{A_{\K}}\lrsb{\ba_{\K}}
  \equiv
  \arg\min_{\xx_{\K}\in\C_{A_{\K}}(\ba_{\K})}
  D\lrsb{\nu_{\xx_{\K}}\|\mu_{X_{\K}}},
\end{align*}
where $\ba_j$ is the codeword of the $j$-th encoder and
\[
  \C_{A_{\K}}(\ba_{\K})\equiv
  \{\uu_{\K}: A_j\uu_j=\ba_j\ \text{for all}\ j\in\K\}.
\]
Then we define the decoder as
\begin{align*}
  \Decoder\lrsb{\ba_{\K}}
  \equiv\hg_{A_{\K}}\lrsb{\ba_{\K}}.
\end{align*}
The construction of the above minimum-divergence decoder
is described in Section~\ref{sec:md-multi}.

For each $j\in\K$,
the encoding rate $R_j$ for the $j$-th encoder is given by
\[
  R_j\equiv\frac{\log|\im\A_j|}n
\]
and the error probability is given by
\begin{align*}
  \Error_X\lrsb{A_{\K}}
  \equiv
  \mu_{X_{\K}}\lrsb{\lrb{
      \xx_{\K}:
      \Decoder\lrsb{
	\lrb{\Encoder_j\lrsb{\xx_j}}_{j\in\K}
      }
      \neq\xx^{\K}
  }}.
\end{align*}

We have the following theorem.
\begin{thm}
\label{thm:multi-sw}
Assume that $\lrb{R_j}_{j\in\K}$ satisfies
\begin{align*}
  \sum_{j\in\J}R_j>H(X_{\J}|X_{\J^c})
\end{align*}
for all $\J\subset\K$ except the empty set, and an ensemble
$(\bcA_j,\bp_{\sfA_j})$ has an $(\aalpha_{\sfA_j},\bbeta_{\sfA_j})$-hash
property for all $j\in\K$.
Then for any $\delta>0$ and all sufficiently large $n$ there are
functions (sparse matrices) $A_{\K}\equiv\lrb{A_j}_{j\in\K}$ such that
\[
  \Error_X\lrsb{A_{\K}}\leq\delta.
\]
\end{thm}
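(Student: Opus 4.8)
The plan is to generalize the two-source proof of Theorem~\ref{thm:sw} (given in Section~\ref{sec:proof-sw}) to $k$ sources, using the multi-domain collision-resistance lemma in its $k$-fold form. First I would note that the decoding error event decomposes: $\Decoder(\{A_j\xx_j\}_{j\in\K})\neq\xx_{\K}$ happens only if either $\xx_{\K}\notin\T_{X_{\K},\gamma}$ for a suitable $\gamma>0$, or there is a competing sequence $\xx'_{\K}\in\C_{A_{\K}}(\{A_j\xx_j\}_{j\in\K})$ with $\xx'_{\K}\neq\xx_{\K}$ and $D(\nu_{\xx'_{\K}}\|\mu_{X_{\K}})\leq D(\nu_{\xx_{\K}}\|\mu_{X_{\K}})$. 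The first event has vanishing $\mu_{X_{\K}}$-probability by the law of large numbers (stationary memoryless sources). For the second, I would fix a typical $\xx_{\K}$ and bound, via the union over the $2^{|\K|}-1$ nonempty subsets $\J\subseteq\K$ on which $\xx'_{\K}$ and $\xx_{\K}$ differ, the probability that the encoders $\{A_j\}_{j\in\J}$ map some such competing $\xx'_{\J}$ into the same coset tuple, while the coordinates in $\J^c$ agree.

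The core estimate is a $k$-domain analogue of Lemma~\ref{lem:hash-AxB-CRP}. For a fixed nonempty $\J\subseteq\K$, let $\G_{\J}$ be the set of sequences $\xx'_{\J}$ that, combined with the fixed $\xx_{\J^c}$, are at least as good in divergence; by a standard type-counting argument this set has size at most $2^{n[H(X_{\J}|X_{\J^c})+\e(\gamma)]}$ with $\e(\gamma)\to 0$ as $\gamma\to 0$. Then I would iterate Lemma~\ref{lem:hash-AxB-CRP} (or re-derive its $k$-fold version directly from (H3) by induction on $|\J|$, exactly as Lemmas~\ref{lem:hash-AB} and~\ref{lem:hash-AxB-CRP} are obtained by combining pairs) to get that the probability $\{A_j\}_{j\in\J}$ collides on $\G_{\J}$ is at most
\[
  \frac{|\G_{\J}|\prod_{j\in\J}\alpha_{\sfA_j}}{\prod_{j\in\J}|\im\A_j|}
  + (\text{lower-order terms involving the}\ \beta_{\sfA_j}).
\]
Since $\prod_{j\in\J}|\im\A_j| = 2^{n\sum_{j\in\J}R_j}$ and $\sum_{j\in\J}R_j > H(X_{\J}|X_{\J^c})$ by hypothesis, choosing $\gamma$ small enough makes the leading term $2^{-n[\sum_{j\in\J}R_j - H(X_{\J}|X_{\J^c}) - \e(\gamma)]}\to 0$; the $\alpha$'s are bounded (they tend to $1$) and the $\beta$'s tend to $0$ by (H1)--(H2), so all the correction terms vanish too. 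Summing the bound over the finitely many nonempty $\J$ and taking expectation over $\xx_{\K}\sim\mu_{X_{\K}}$ restricted to $\T_{X_{\K},\gamma}$ shows the average error over the ensemble $\prod_{j\in\K}p_{\sfA_j}$ tends to $0$, hence for large $n$ there exist fixed $A_{\K}$ with $\Error_X(A_{\K})\leq\delta$.

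The main obstacle is the bookkeeping in the $k$-fold collision bound: when $\xx'_{\K}$ differs from $\xx_{\K}$ only on a subset $\J$, one must carefully control the sizes of the relevant "slices" $\G_{\J|\J^c}(\cdot)$ appearing as $\max_{\cdot}|\G_{\U|\V}(\cdot)|$-type quantities in Lemma~\ref{lem:hash-AxB-CRP}, and show that all of them are exponentially dominated by the appropriate conditional-entropy exponents so that, after multiplying by $\prod\alpha_{\sfA_j}$ and dividing by the relevant $|\im\A_j|$ products, every term decays. This is essentially an inclusion--exclusion over the lattice of subsets of $\K$, and the delicate point is that for each $\J$ the "good set" must be defined conditionally on the agreeing coordinates $\xx_{\J^c}$ so that the type bound yields $H(X_{\J}|X_{\J^c})$ rather than $H(X_{\J})$; getting this conditioning right, uniformly over all $\J$, is where the argument needs care. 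Everything else is a routine adaptation of the two-terminal proof, and the minimum-divergence decoder's construction is deferred to Section~\ref{sec:md-multi}.
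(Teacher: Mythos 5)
Your proposal is correct and follows essentially the same route as the paper: the paper's own ``proof'' of Theorem~\ref{thm:multi-sw} is just the remark that it goes exactly like Theorem~\ref{thm:sw} once one has the $k$-fold collision-resistance bound (\ref{eq:multi-CRP}) of Lemma~\ref{lem:multi-lemma}, applied with $\G\equiv\T_{X_{\K},\gamma}$, together with Lemma~\ref{lem:multi-lemma-bound} to bound each conditional slice $\lrbar{\G_{\J|\J^c}}$ by $2^{n[H(X_{\J}|X_{\J^c})+\eta]}$ so that the rate conditions kill every term in the sum over nonempty $\J$. The ``delicate point'' you flag --- conditioning the competing set on the agreeing coordinates $\xx_{\J^c}$ to get $H(X_{\J}|X_{\J^c})$ rather than $H(X_{\J})$ --- is precisely what the paper's definition (\ref{eq:maxT}) and Lemma~\ref{lem:multi-lemma-bound} are set up to handle, so your outline matches the intended argument.
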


\subsection{Multiple Output Broadcast Channel Code}
In this section, we consider the $k$-output broadcast channel coding.

Let $\mu_{Y_{\K}|X}$ be the conditional distribution of a broadcast
channel with input $X\in\X$ and $k$ outputs $Y_{\K}\in\Y_{\K}$.

Let $U_{\K}\in\U_{\K}$ be a multiple random variable and
$\mu_{U_{\K}}$ be the joint probability distribution of $U_{\K}$.
Let $F:\U_{\K}\to\X$ be a function that is allowed to be
non-deterministic.
The joint distribution $\mu_{U_{\K}XY_{\K}}$ of
$\lrsb{U_{\K},X,Y_{\K}}$ is given by
\begin{align*}
  \mu_{U_{\K}XY_{\K}}\lrsb{u_{\K},x,y_{\K}}
  \equiv
  \mu_{Y_{\K}|X}(y_{\K}|x)\mu_{X|U_{\K}}(x|u_{\K})\mu_{U_{\K}}(u_{\K}),
\end{align*}
where $\mu_{X|U_{\K}}$ represents the transition probability distribution
of the function $F$.

For each $j\in\K$, we consider an ensemble $(\bcA_j,\bp_{\sfA_j})$
of functions $A_j:\U_j^n\to\im\A_j$, where $\U_j$ is an alphabet of the
source $U_j$.
We also consider an ensemble $(\bcA'_j,\bp_{\sfA'_j})$ of functions
$A'_j:\U_j^n\to\im\A'_j$.
Let $\sfmm_j$ be the random variable corresponding to the $j$-th message,
where the probability distribution $p_{\sfmm_j}$ are given by
\[
  p_{\sfmm_j}(\mm_j)
  \equiv
  \begin{cases}
    \frac 1{|\im\A'_j|}
    &\text{if}\ \mm_j\in\im\A'_j
    \\
    0,
    &\text{if}\ \mm_j\notin\im\A'_j
  \end{cases}
\]
for each $j\in\K$.
We assume that functions
$A_j$, $A'_j$ and a vector
$\ba_j\in\im\A_j$
are shared by the encoder and the $j$-th decoder.
For a multiple message
$\mm_{\K}\in\Prod_{j\in\K}\im\A'_j$,
we define the encoder 
$\Encoder:\Prod_{j\in\K}\im\A'_j\to\X^n$ as
\begin{gather*}
  \hg_{A_{\K}A'_{\K}}(\mm_{\K},\ba_{\K})
  \equiv
  \arg\min_{\uu_{\K}\in\C_{A_{\K}A'_{\K}}(\mm_{\K},\ba_{\K})}
  D(\nu_{\uu_{\K}}\|\mu_{U_{\K}})
  \\
  \FF\lrsb{\uu_{\K}}
  \equiv\lrsb{F(u_{\K,1}),\ldots,F(u_{\K,n})}
  \\
  \Encoder\lrsb{\mm_{\K}}
  \equiv \FF\lrsb{
    \hg_{A_{\K}A'_{\K}}(\mm_{\K},\ba_{\K})
  },
\end{gather*}
where $u_{\K,i}\equiv\lrb{u_{j,i}}_{j\in\K}$ for each
$i\in\{1,\ldots,n\}$
and
\[
  \C_{A_{\K}A'_{\K}}(\mm_{\K},\ba_{\K})\equiv
  \lrb{\uu_{\K}:
    A_j\uu_j=\ba_j, A'_j\uu_j=\mm_j
    \quad \text{for all}\ j\in\K
  }.
\]
The construction of the above minimum-divergence decoder
is described in Section~\ref{sec:md-multi}.
We define the $j$-th decoder
$\Decoder_j:\Y_j^n\to\im\A'_j$ as
\begin{gather*}
  g_{A_j}(\ba_j|\yy_j)
  \equiv
  \arg\max_{\uu_j\in\C_{A_j}(\ba_j)}
  \mu_{U_j|Y_j}(\uu_j|\yy_j)
  \\
  \Decoder_j(\yy_j)\equiv
  A'_jg_{A_j}(\ba_j|\yy_j),
\end{gather*}
for each output $\yy_j\in\Y_j^n$.
The error probability is given by
\begin{align*}
  \Error_{Y_{\K}|X}\lrsb{A_{\K},A'_{\K},\ba_{\K}}
  \equiv 1
  -\sum_{\yy_{\K}}
  \mu_{Y_{\K}|X}(\yy_{\K}|\Encoder(\mm_{\K}))
  \prod_{j\in\K}\lrB{
    p_{\sfmm_j}(\mm_j)
    \chi\lrsb{\Decoder_j(\yy_j)=\mm_j)}
}.
\end{align*}

We have the following theorem.
\begin{thm}
\label{thm:multi-bc}
For each $j\in\K$, let $R_j$ be the encoding rate for the $j$-th encoder.
Assume that $\lrb{(r_j,R_j,\e_j)}_{j\in\K}$ and $\e$ satisfy
\begin{gather}
r_j
=
\frac{\log|\im\A_j|}n
\label{eq:mac-multi-rj}
\\
R_j
=
\frac{\log|\im\A'_j|}n
\\
r_j>H(U_j|Y_j)
\\
\sum_{j\in\J}\lrB{
  R_j+r_j
}
<H(U_{\J})-\e
\\
H(U_{\K})-\e
<
\sum_{j\in\K}\lrB{
  R_j+r_j
}
<
H(U_{\K})
\label{eq:mac-multi-HUK}
\end{gather}
for all non-empty set $\J\subsetneq\K$ and $j\in\K$.
Furthermore, assume that an ensemble $(\bcA_j,\bp_{\sfA_j})$
(resp.\ $(\bcA'_j,\bp_{\sfA'_j})$) has an
$(\aalpha_{\sfA_j},\bbeta_{\sfA_j})$-hash 
(resp.\  $(\aalpha_{\sfA'_j},\bbeta_{\sfA'_j})$-hash)
property for all $j\in\K$.
Then for any $\delta>0$ and all sufficiently large $n$
there are functions (sparse matrices) and vectors
$\lrsb{A_{\K},A'_{\K},\ba_{\K}}\equiv\lrb{\lrsb{A_j,A'_j,\ba_j}}_{j\in\K}$
such that
\[
\Error_{Y_{\K}|X}\lrsb{A_{\K},A'_{\K},\ba_{\K}}
\leq\delta.
\]
\end{thm}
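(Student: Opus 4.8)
The plan is to mimic the proof of Theorem~\ref{thm:broadcast} but carried out simultaneously over all $k$ terminals, replacing every two-domain application of the extended collision-resistance and saturation properties (Lemmas~\ref{lem:hash-AxB-CRP} and~\ref{lem:hash-AxB-SP}) by their $k$-domain analogues. So the first step is to state and prove the $k$-fold generalizations: given ensembles $(\bcA_j,\bp_{\sfA_j})_{j\in\K}$ each with a strong hash property, the concatenated ensemble $\hA_{\K}\uu_{\K}\equiv(A_1\uu_1,\ldots,A_k\uu_k)$ on $\U_{\K}^n$ has a strong hash property, and moreover the probability that $\T\cap\Prod_{j\in\K}\C_{A_j}(\ba_j)=\emptyset$ is bounded by an expression that tends to zero whenever $|\T|/\prod_j|\im\A_j|\to\infty$ in the appropriate uniform-over-conditional-slices sense. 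The cleanest route is induction on $k$: Lemma~\ref{lem:hash-AB} already gives the two-domain concatenation, and Lemmas~\ref{lem:hash-AxB-CRP}/\ref{lem:hash-AxB-SP} give the two-domain CRP/SP bounds; grouping $\{1,\ldots,k-1\}$ into one ``super-domain'' $\U_{\{1,\ldots,k-1\}}$ and treating $\U_k$ as the other, one pushes the bound through, with the combinatorial factors $\max_{\uu_{\J^c}}|\G_{\U_{\J}|\U_{\J^c}}(\uu_{\J^c})|$ appearing for each subset $\J$. I would organize these factors by indexing over nonempty $\J\subsetneq\K$, since that is exactly the structure of the rate conditions $\sum_{j\in\J}[R_j+r_j]<H(U_{\J})-\e$.

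Next I would set up the codebook analysis exactly as in the broadcast proof. For the encoding (saturation) side: using the $k$-domain SP lemma with $\T=\T_{U_{\K},\gamma}$, the encoder $\hg_{A_{\K}A'_{\K}}$ fails to find a jointly typical $\uu_{\K}\in\C_{A_{\K}A'_{\K}}(\mm_{\K},\ba_{\K})$ with vanishing probability, because the total rate $\sum_{j\in\K}[R_j+r_j]$ is below $H(U_{\K})$ by~\eqref{eq:mac-multi-HUK}, and for each proper nonempty $\J$ the partial rate $\sum_{j\in\J}[R_j+r_j]$ is below $H(U_{\J})-\e$, which controls the conditional-slice factors $\max|\T_{U_{\J}|U_{\J^c}}(\uu_{\J^c})|$ (bounded using standard type-counting, roughly $2^{n[H(U_{\J}|U_{\J^c})+\text{error}]}$) against the cardinalities $\prod_{j\in\J}|\im\A_j||\im\A'_j|$. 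The lower bound $H(U_{\K})-\e<\sum_{j\in\K}[R_j+r_j]$ plays the same role it did in the broadcast proof: it keeps the $\alpha$-product term close to $1$ and hence the whole SP bound $o(1)$. For the decoding (collision-resistance) side: for each $j$ independently, the decoder $g_{A_j}$ applied to $\ba_j$ and the channel output $\yy_j$ recovers $\uu_j$ with vanishing error because $r_j>H(U_j|Y_j)$ — this is exactly Lemma~\ref{lem:collision} applied with $\G=\T_{U_j|Y_j,\gamma}(\yy_j)$, so $|\G|/|\im\A_j|\to 0$; then $A'_j\uu_j=\mm_j$ gives $\Decoder_j(\yy_j)=\mm_j$. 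Averaging the total error bound over the ensemble $p_{\sfA_{\K}\sfA'_{\K}}$ and over the shared vectors $\ba_{\K}$ (with the uniform distribution $p_{\sfaa_{\K}}$), the expected error is $o(1)$, so a good choice of $(A_{\K},A'_{\K},\ba_{\K})$ exists; the rates are fixed by~\eqref{eq:mac-multi-rj}ff by construction.

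The main obstacle is the $k$-domain saturation lemma, specifically keeping track of the error terms in the induction so that the final bound genuinely tends to zero under precisely conditions~\eqref{eq:mac-multi-rj}--\eqref{eq:mac-multi-HUK}. Each time one peels off a domain, new cross terms of the form $(\text{cardinality of a subproduct of images})\times(\text{conditional-slice size})\times\alpha\times(\beta+1)$ are generated, and one must check that after dividing by $|\T|$ every such term is dominated either by a strict-subset rate inequality or by the total-rate inequality, uniformly in the choice of which domain is peeled and in what order — i.e.\ the bound must be symmetric enough that the induction does not secretly require a stronger hypothesis than the subset conditions stated. A convenient way to handle this is to prove the $k$-domain SP bound directly in the form ``$\alpha_{\sfhA_{\K}}-1 + \sum_{\emptyset\neq\J\subsetneq\K}(\text{term}_{\J}) + (\text{term}_{\K})$'' with each $\text{term}_{\J}$ explicitly $\prod_{j\in\J}|\im\A_j|\cdot\max_{\uu_{\J^c}}|\T_{\U_{\J}|\U_{\J^c}}(\uu_{\J^c})|\cdot\alpha\cdot[\beta+1]/|\T|$ (and similarly $\text{term}_{\K}=\prod_{j\in\K}|\im\A_j|[\beta+1]/|\T|$), mirroring~\eqref{eq:AxB-SP}, and then verify the induction step closes with exactly this shape. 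The typical-set and type-counting estimates that convert entropies to cardinalities (using $\lambda_{\U}$, $\zeta$, $\eta$ from~\eqref{eq:lambda}--\eqref{eq:def-etac}) are routine, as is choosing $\gamma$ small enough; the only genuinely new bookkeeping is the subset-indexed error accounting. The construction of the $k$-ary minimum-divergence operation $\hg_{A_{\K}A'_{\K}}$ is deferred to Section~\ref{sec:md-multi} and is not needed for the existence proof, only for implementability.
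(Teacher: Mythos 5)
Your proposal follows essentially the same route as the paper: the paper proves Theorem~\ref{thm:multi-bc} by repeating the argument of Theorem~\ref{thm:broadcast} verbatim, with the two-domain CRP/SP bounds replaced by the subset-indexed $k$-domain bounds of Lemma~\ref{lem:multi-lemma} and the conditional-slice cardinalities controlled by Lemma~\ref{lem:multi-lemma-bound}, exactly the structure you describe. One caution: your ``cleanest route'' of induction by grouping $\{1,\ldots,k-1\}$ into a super-domain and applying Lemma~\ref{lem:hash-AxB-SP} does not go through as stated, because that lemma requires both ensembles to satisfy the strong condition (\ref{eq:hash}), whereas the product ensemble $A_{\K}\uu_{\K}\equiv(A_1\uu_1,\ldots,A_k\uu_k)$ over \emph{distinct} domains only satisfies the weaker, $\T$-dependent bound (\ref{eq:whash}) (sequences $\uu'_{\K}$ agreeing with $\uu_{\K}$ in all but one coordinate already have collision probability far above $\alpha_{\sfA_{\K}}/|\im\A_{\K}|$, unlike the same-domain concatenation of Lemma~\ref{lem:hash-AB}); the direct subset decomposition you fall back on is precisely what the paper's Lemma~\ref{lem:multi-lemma} does, feeding the resulting $(\alpha,\beta)$ into Lemmas~\ref{lem:collision} and~\ref{lem:saturation}.
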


It should be noted that there are $\{r_j\}_{j\in\K}$ and
$\e$ satisfying (\ref{eq:mac-multi-rj})--(\ref{eq:mac-multi-HUK})
if $\{R_j\}_{j\in\K}$ satisfy
\begin{align}
\sum_{j\in\J}R_j
&<H(U_{\J})-\sum_{j\in\J}H(U_j|Y_j)
\notag
\\
&\ =\sum_{j\in\J}I(U_j;Y_j)-\lrB{\sum_{j\in\J}H(U_j)-H(U_{\J})}
\label{eq:mac-multi-R}
\end{align}
for all non-empty set $\J\subset\K$.
The condition (\ref{eq:mac-multi-R})
is equivalent to the conditions
\begin{align*}
R_j&<I(U_j;Y_j)\ \text{for}\ j\in\{1,2,3\}
\\
R_1+R_2&<I(U_1;Y_1)+I(U_2;Y_2)-I(U_1;U_2)
\\
R_1+R_3&<I(U_1;Y_1)+I(U_3;Y_3)-I(U_1;U_3)
\\
R_2+R_3&<I(U_2;Y_2)+I(U_3;Y_3)-I(U_2;U_3)
\\
R_1+R_2+R_3
&<I(U_1;Y_1)+I(U_2;Y_2)+I(U_3;Y_3)
-I(U_1;U_2)-I(U_1,U_2;U_3)
\end{align*}
specified in \cite[Page 9-47]{EK10}, which considers the case of $k=3$.

\subsection{Fundamental Lemmas for
Theorems \ref{thm:multi-sw} and \ref{thm:multi-bc}}

We can prove Theorems \ref{thm:multi-sw} and \ref{thm:multi-bc}
in a similar way to Theorems \ref{thm:sw} and \ref{thm:broadcast}
by using the following lemmas proved in Appendix
\ref{sec:proof-multi}.
To shorten the description of the flollwing lemma,
we use the following abbreviation
\begin{align}
\lrbar{\T_{\J|\J^c}}
&\equiv
\begin{cases}
  1, &\text{if}\ \J=\emptyset
  \\
  |\T|, &\text{if}\ \J=\K
  \\
  \disp\max_{\uu_{\J^c}\in\T_{\U_{\J^c}}}
  \lrbar{\T_{\U_{\J}|\U_{\J^c}}\lrsb{\uu_{\J^c}}}
  &\text{otherwise}.
\end{cases}
\label{eq:maxT}
\end{align}
for $\T\subset[\U_{\K}]^n$ and $\J\subset\K$.
It should be noted that the expression $\lrbar{\T_{\J|\J^c}}$ 
does not represent the cardinality of the set $\T_{\J|\J^c}$.

\begin{lem}
\label{lem:multi-lemma}
For each $j\in\K$, let $\A_j$ be a set
of functions $A_j:\U_j^n\to\im\A_j$
and $p_{\sfA_j}$ be the probability distribution on $\A_j$,
where $(\A_j,p_{\sfA_j})$ satisfies (\ref{eq:hash}).
Let the joint distribution $p_{\sfA_{\K}\sfaa_{\K}}$ be defined as
\[
p_{\sfA_{\K}\sfaa_{\K}}
(A_{\K},\ba_{\K})
\equiv
\prod_{j\in\K}p_{\sfA_j}(A_j)p_{\sfaa_j}(\ba_j)
\]
for each $\{A_j,\ba_j\}_{j\in\K}$.
For each $\J\subset\K$,
let $\aalpha_{\sfA_{\J}}$ and $\bbeta_{\sfA_{\J}}$
be defined as
\begin{align}
\alpha_{\sfA_{\J}}
&\equiv
\prod_{j\in\J}\alpha_{\sfA_j}
\label{eq:multi-alpha-def}
\\
\beta_{\sfA_{\J}}
&\equiv
\prod_{j\in\J}\lrB{\beta_{\sfA_j}+1}-1.
\label{eq:multi-beta-def}
\end{align}
Then
\begin{align}
p_{\sfA_{\K}}\lrsb{\lrb{
    A_{\K}: \lrB{\G\setminus\{\uu_{\K}\}}\cap\C_{A_{\K}}(A_{\K}\uu_{\K})\neq \emptyset
}}
&
\leq 
\sum_{\substack{
    \J\subset\K\\
    \J\neq\emptyset
}}
\frac{
  \lrbar{\G_{\J|\J^c}}
  \alpha_{\sfA_{\J}}\lrB{\beta_{\sfA_{\J^c}}+1}
}
{\prod_{j\in\J}\lrbar{\im\A_j}
}
+\beta_{\sfA_{\K}}
\label{eq:multi-CRP}
\end{align}
for all $\G\subset\lrB{\U_{\K}}^n$ and $\uu_{\K}\in\lrB{\U_{\K}}^n$,
and
\begin{align}
p_{A_{\K}\sfaa_{\K}}\lrsb{\lrb{
    (A_{\K},\ba_{\K}):
    \T\cap\C_{A_{\K}}(\ba_{\K})=\emptyset
}}
&
\leq
\alpha_{\sfA_{\K}}-1
+
\sum_{\J\subsetneq\K}
\frac{
  \lrB{\prod_{j\in\J^c}\lrbar{\im\A_j}}
  \lrbar{\T_{\J|\J^c}}
  \alpha_{\sfA_{\J}}\lrB{\beta_{\sfA_{\J^c}}+1}
}
{|\T|}
\label{eq:multi-SP}
\end{align}
for all $\T\subset\lrB{\U_{\K}}^n$.
Furthermore, if $(\aalpha_{\sfA_j},\bbeta_{\sfA_j})$ satisfies
(\ref{eq:alpha}) and (\ref{eq:beta}) for all $j\in\K$, then
\begin{align}
\limn \alpha_{\sfA_{\J}}(n)=1
\label{eq:multi-alpha}
\\
\limn \beta_{\sfA_{\J}}(n)=0
\label{eq:multi-beta}
\end{align}
for every $\J\subset\K$.
\end{lem}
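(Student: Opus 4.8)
The lemma bundles together three assertions: the multi-domain collision-resistance bound \eqref{eq:multi-CRP}, the multi-domain saturation bound \eqref{eq:multi-SP}, and the limiting behaviour \eqref{eq:multi-alpha}--\eqref{eq:multi-beta} of the aggregated parameters. My plan is to prove these by induction on $k=|\K|$, using Lemma~\ref{lem:hash-AxB-CRP} and Lemma~\ref{lem:hash-AxB-SP} as the base case $k=2$ (the case $k=1$ being immediate from Lemma~\ref{lem:collision} and Lemma~\ref{lem:saturation}, which apply because a strong hash property implies \eqref{eq:whash} by Lemma~\ref{lem:whash}). The inductive step treats the ensemble over $\U_{\K}=\U_{\{1,\ldots,k-1\}}\times\U_k$ as a \emph{two-domain} combination of the aggregate ensemble on the first $k-1$ coordinates with the single ensemble $A_k$: one checks that the aggregate ensemble $(\bcA_{\{1,\ldots,k-1\}},\bp_{\sfA_{\{1,\ldots,k-1\}}})$ itself satisfies \eqref{eq:hash} — or rather the slightly weaker consequence \eqref{eq:whash} — with parameters $(\alpha_{\sfA_{\{1,\ldots,k-1\}}},\beta_{\sfA_{\{1,\ldots,k-1\}}})$ given by \eqref{eq:multi-alpha-def}--\eqref{eq:multi-beta-def}, which is exactly the content of Lemma~\ref{lem:hash-AB} generalized to products of functions with distinct domains. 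Then applying Lemma~\ref{lem:hash-AxB-CRP} (resp.\ Lemma~\ref{lem:hash-AxB-SP}) with $\A\leftarrow\A_{\{1,\ldots,k-1\}}$ and $\B\leftarrow\A_k$ produces a bound with four groups of terms, and the combinatorial task is to match each such term, after expanding the inductive hypothesis for the first $k-1$ coordinates, against the appropriate summand of \eqref{eq:multi-CRP} or \eqref{eq:multi-SP}.

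The bookkeeping for \eqref{eq:multi-CRP} goes as follows. In Lemma~\ref{lem:hash-AxB-CRP} the four contributions correspond to the four subsets of $\{k\}$ paired with ``all of $\{1,\ldots,k-1\}$'' versus the recursively-split structure on $\{1,\ldots,k-1\}$; writing $\J\subset\K$ as $\J=\J'$ or $\J=\J'\cup\{k\}$ with $\J'\subset\{1,\ldots,k-1\}$, the product identities $\alpha_{\sfA_{\J'\cup\{k\}}}=\alpha_{\sfA_{\J'}}\alpha_{\sfA_k}$ and $\beta_{\sfA_{\J'\cup\{k\}}}+1=[\beta_{\sfA_{\J'}}+1][\beta_{\sfA_k}+1]$ (immediate from \eqref{eq:multi-alpha-def}--\eqref{eq:multi-beta-def}) let the recursion collapse cleanly. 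The quantities $|\G_{\J|\J^c}|$ are handled by the elementary fact that a maximum over a projected fibre of a projected fibre is a maximum over the joint fibre, i.e.\ $\max_{\uu_{(\J'\cup\{k\})^c}}|\G_{\U_{\J'\cup\{k\}}|\U_{(\J'\cup\{k\})^c}}(\cdot)| = \max |\G_{\J'|\ldots}|$ evaluated appropriately — this is the one place requiring a careful unfolding of the definition \eqref{eq:maxT}. For \eqref{eq:multi-SP} the structure is the same, with the extra factors $|\im\A_{\J^c}|$ tracked by the multiplicativity \eqref{eq:imAJ}; one also uses that $\alpha_{\sfA_{\K}}-1$ absorbs the leftover $\alpha_{\sfA_{\{1,\ldots,k-1\}}}\alpha_{\sfA_k}-1$ from the outer application, together with the recursive $\alpha_{\sfA_{\{1,\ldots,k-1\}}}-1$ term distributed into the $\J\subsetneq\K$ sum.

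Finally, \eqref{eq:multi-alpha}--\eqref{eq:multi-beta} follow directly from \eqref{eq:alpha}--\eqref{eq:beta} and the continuity of finite products: $\alpha_{\sfA_{\J}}(n)=\prod_{j\in\J}\alpha_{\sfA_j}(n)\to\prod_{j\in\J}1=1$ and $\beta_{\sfA_{\J}}(n)=\prod_{j\in\J}[\beta_{\sfA_j}(n)+1]-1\to\prod_{j\in\J}1-1=0$, since $\J$ is finite. I expect the main obstacle to be the index-management in the inductive step: correctly identifying, for each subset $\J$ of $\K$, which term of the two-domain bound (for the split $\{1,\ldots,k-1\}\mid\{k\}$) combined with which term of the inductive hypothesis produces the summand indexed by $\J$ in the final bound — and verifying that the max-over-fibre quantities $|\G_{\J|\J^c}|$ and $|\T_{\J|\J^c}|$ behave correctly under nested projection. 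This is purely combinatorial but error-prone; once the matching is pinned down, every individual inequality is just an instance of a result already proved in the excerpt.
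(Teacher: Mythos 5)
Your overall strategy --- induction on $k$, treating $\U_{\K}$ as $\U_{\{1,\ldots,k-1\}}\times\U_k$ and feeding the aggregate of the first $k-1$ ensembles into Lemma~\ref{lem:hash-AxB-CRP} or Lemma~\ref{lem:hash-AxB-SP} --- breaks down at the inductive step, and the breakdown is exactly the phenomenon the lemma is designed to handle. The aggregate ensemble $\uu_{\{1,\ldots,k-1\}}\mapsto(A_j\uu_j)_{j\leq k-1}$ does \emph{not} satisfy (\ref{eq:hash}) with parameters $(\alpha_{\sfA_{\{1,\ldots,k-1\}}},\beta_{\sfA_{\{1,\ldots,k-1\}}})$, and Lemma~\ref{lem:hash-AB} does not generalize to distinct domains: its proof relies on $\uu'\neq\uu$ being the \emph{same} condition for both component functions, whereas in the product setting $\uu'_{\J}\neq\uu_{\J}$ permits $\uu'_j=\uu_j$ (hence collision probability $1$) in some coordinates. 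Concretely, already for two universal families (each with $\alpha=1$, $\beta=0$) the pairs $(\uu'_1,\uu_2)$ with $\uu'_1\neq\uu_1$ have collision probability $1/|\im\A_1|$, far above the product threshold $1/(|\im\A_1||\im\A_2|)$, and their total collision mass is of order $|\U_1|^n/|\im\A_1|$, which no vanishing $\beta$ can absorb. The same example (take $\T=\T'=\{\uu_1\}\times\V$ with $\V$ large) shows the aggregate does not satisfy (\ref{eq:whash}) with those parameters either; and even if it did, Lemmas~\ref{lem:hash-AxB-CRP} and~\ref{lem:hash-AxB-SP} hypothesize (\ref{eq:hash}), not (\ref{eq:whash}), so the hedge in your plan does not rescue the step. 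This is precisely why (\ref{eq:multi-CRP}) and (\ref{eq:multi-SP}) carry the extra terms $\lrbar{\G_{\J|\J^c}}\alpha_{\sfA_{\J}}\lrB{\beta_{\sfA_{\J^c}}+1}/\lrbar{\im\A_{\J}}$ over all proper subsets $\J$: they account for these partial collisions.

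The paper avoids induction altogether. For fixed $\uu'_{\K}$ it partitions the sum $\sum_{\uu_{\K}\in\T}\prod_j p_{\uu_j,\uu'_j}$ according to the subset $\J$ of coordinates where $p_{\uu_j,\uu'_j}\leq\alpha_{\sfA_j}/|\im\A_j|$, bounds the $\J$-part by $\prod_{j\in\J}\alpha_{\sfA_j}/|\im\A_j|$ times the fibre maximum $\lrbar{\T_{\J|\J^c}}$ and the $\J^c$-part by $\prod_{j\in\J^c}\lrB{\beta_{\sfA_j}+1}$ using (\ref{eq:hash}) coordinatewise, and thereby establishes a $\T$-dependent version of (\ref{eq:whash}) for the joint ensemble; (\ref{eq:multi-CRP}) and (\ref{eq:multi-SP}) then follow from Lemmas~\ref{lem:collision} and~\ref{lem:saturation}. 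Your treatment of (\ref{eq:multi-alpha}) and (\ref{eq:multi-beta}) is fine, as is the base case $k\leq 2$. If you want to retain an inductive structure, the induction must be carried out at the level of these threshold-partitioned sums (i.e., on a statement like (\ref{eq:lemma-multi1})), not at the level of the packaged hash property.
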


\begin{lem}
\label{lem:multi-lemma-bound}
If $\T$ is a subset of $\T_{U_{\K},\gamma}$, then
\begin{equation*}
\max_{\uu_{\J^c}\in\T_{\U_{\J^c}}}
\lrbar{\T_{\U_{\J}|\U_{\J^c}}\lrsb{\uu_{\J^c}}}
\leq
2^{n\lrB{
    H(U_{\J}|U_{\J^c})
    +\eta_{\U_{\J}\U_{\J^c}}(\gamma|\gamma)
}}
\label{eq:multi-lemma-bound}
\end{equation*}
for every non-empty set $\J\subsetneq\K$.
\end{lem}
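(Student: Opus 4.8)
The plan is to reduce the projection of $\T$ onto its $\U_{\J}$-coordinates (with $\uu_{\J^c}$ held fixed) to an ordinary conditional typical set, and then bound that by a type-counting argument. First I would observe that, since $\T\subset\T_{U_{\K},\gamma}$, every $(\uu_{\J},\uu_{\J^c})\in\T$ satisfies $D(\nu_{\uu_{\J}\uu_{\J^c}}\|\mu_{U_{\K}})<\gamma$, and the chain rule for divergence gives
\[
  D(\nu_{\uu_{\J}\uu_{\J^c}}\|\mu_{U_{\K}})
  =D(\nu_{\uu_{\J^c}}\|\mu_{U_{\J^c}})
  +D(\nu_{\uu_{\J}|\uu_{\J^c}}\|\mu_{U_{\J}|U_{\J^c}}|\nu_{\uu_{\J^c}}).
\]
Both summands on the right are nonnegative, so each is strictly below $\gamma$. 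Hence any $\uu_{\J^c}\in\T_{\U_{\J^c}}$ already lies in $\T_{U_{\J^c},\gamma}$, and $\T_{\U_{\J}|\U_{\J^c}}(\uu_{\J^c})\subset\T_{U_{\J}|U_{\J^c},\gamma}(\uu_{\J^c})$; therefore it suffices to prove $\lrbar{\T_{U_{\J}|U_{\J^c},\gamma}(\uu_{\J^c})}\leq 2^{n[H(U_{\J}|U_{\J^c})+\eta_{\U_{\J}\U_{\J^c}}(\gamma|\gamma)]}$ whenever $\uu_{\J^c}\in\T_{U_{\J^c},\gamma}$, which is the conditional version of the standard cardinality bound for typical sequences applied with $U=U_{\J}$, $V=U_{\J^c}$.

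To prove that bound I would partition $\T_{U_{\J}|U_{\J^c},\gamma}(\uu_{\J^c})$ according to the conditional type $W\equiv\nu_{\uu_{\J}|\uu_{\J^c}}$ of $\uu_{\J}$ relative to the fixed $\uu_{\J^c}$. There are at most $(n+1)^{|\U_{\J}||\U_{\J^c}|}$ such conditional types, and the number of $\uu_{\J}$ with a prescribed $W$ is at most $2^{nH(W|\nu_{\uu_{\J^c}})}$. It then remains to bound $H(W|\nu_{\uu_{\J^c}})$ for admissible $W$, for which I would apply Pinsker's inequality per symbol --- giving $\|W(\cdot|u_{\J^c})-\mu_{U_{\J}|U_{\J^c}}(\cdot|u_{\J^c})\|_{1}$ bounded on $\nu_{\uu_{\J^c}}$-average by $\sqrt{2\gamma}$ since $D(W\|\mu_{U_{\J}|U_{\J^c}}|\nu_{\uu_{\J^c}})<\gamma$ --- and once more to get $\|\nu_{\uu_{\J^c}}-\mu_{U_{\J^c}}\|_{1}\leq\sqrt{2\gamma}$ from $\uu_{\J^c}\in\T_{U_{\J^c},\gamma}$. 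Combining these with the entropy-continuity estimate $|H(p)-H(p')|\leq-\theta\log(\theta/|\U_{\J}|)$ (valid for $\|p-p'\|_{1}\leq\theta$ with $\theta$ small), averaging over $u_{\J^c}$, and multiplying by the type count reproduces exactly $2^{n[H(U_{\J}|U_{\J^c})+\eta_{\U_{\J}\U_{\J^c}}(\gamma|\gamma)]}$; the $\sqrt{2\gamma}\log|\U_{\J}|$ summand inside $\eta_{\U_{\J}\U_{\J^c}}$ is precisely the term that absorbs the discrepancy between weighting the conditional entropies by $\nu_{\uu_{\J^c}}$ rather than by $\mu_{U_{\J^c}}$.

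I expect the only real obstacle to be this last calculation: checking that the constants emerging from Pinsker's inequality and the continuity bound assemble precisely into $\eta_{\U_{\J}\U_{\J^c}}(\gamma|\gamma)$. The two delicate points are pushing the per-symbol continuity bound through the $\nu_{\uu_{\J^c}}$-weighted average --- which uses concavity of $d\mapsto-\sqrt{2d}\log(\sqrt{2d}/|\U_{\J}|)$ together with Jensen's inequality and monotonicity of the same function for small argument --- and controlling separately the gap between averaging against $\nu_{\uu_{\J^c}}$ versus $\mu_{U_{\J^c}}$. Conceptually the lemma is light: its entire content is the fact, supplied by the chain rule for divergence, that joint $\gamma$-typicality with respect to $\mu_{U_{\K}}$ forces conditional $\gamma$-typicality of $\uu_{\J}$ given $\uu_{\J^c}$ with the same constant $\gamma$, so that the desired estimate follows immediately once the conditional typical-set cardinality bound is in hand.
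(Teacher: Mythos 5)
Your proof is correct and follows the same route as the paper's: the paper's own proof is exactly the two-step reduction you describe, first using Lemma~\ref{lem:typical-trans} to obtain the inclusions $\T_{\U_{\J^c}}\subset\T_{U_{\J^c},\gamma}$ and $\T_{\U_{\J}|\U_{\J^c}}(\uu_{\J^c})\subset\T_{U_{\J}|U_{\J^c},\gamma}(\uu_{\J^c})$, and then invoking the conditional typical-set cardinality bound of Lemma~\ref{lem:typical-number}. The only difference is that you re-derive both auxiliary facts from scratch (the chain rule for divergence, and the type-counting plus Pinsker plus entropy-continuity argument), whereas the paper simply cites them as pre-established lemmas.
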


\subsection{Construction of Minimum-Divergence Operation}
\label{sec:md-multi}

In this section, we introduce the construction
of the minimum-divergence operation
\begin{equation*}
\hg_{A_{\K}}\lrsb{\ba_{\K}}
\equiv\arg\min_{\uu_{\K}\in\C_{A_{\K}}(\ba_{\K})}
D\lrsb{\nu_{\uu_{\K}}\|\mu_{U_{\K}}}
\label{eq:md-multi}
\end{equation*} 
by assuming that $\U_j\equiv\{0,1\}$ is a finite field,
and $A_j$ is an  $l_{\A_j}\times n$ matrix on $\{0,1\}$ for all
$j\in\K$.
It is a extension of the operation introduced in Section~\ref{sec:binary}.

Let $t_{\uu_{\K}}(b^k)\in\{0,\ldots,n\}$ be the number of occurrences of
$b^k\in\{0,1\}^k$ in $\uu_{\K}\in\{0,1\}^{kn}$.
We have
\begin{align*}
\hg_{A_{\K}}\lrsb{\ba_{\K}}
&=
\arg\min_{\substack{
    \uu_{\K,\bt}:
    \\
    \bt\in\cH
}}
D\lrsb{\nu_{\uu_{\K,\bt}}\|\mu_{U_{\K}}},
\end{align*}
where
\begin{align}
\uu_{\K,\bt}
&\equiv
\begin{cases}
  \uu_{\K}
  &\text{if}\ \exists \uu_{\K}\ \text{s.t.}\ \bt_{\uu_{\K}}=\bt,
  A_j\uu_j=\ba_j\ \text{for all}\ j\in\K
  \\
  \text{`error'}
  &\text{otherwise}
\end{cases} 
\label{eq:ut-multi}
\\
\cH
&\equiv\lrb{\bt:
  t\lrsb{b^k}\in\{0,\ldots,n\},
  \sum_{b^k}t\lrsb{b^k}=n
}
\notag
\end{align}
and $D(\nu_{\uu_{\K,\bt}}\|\mu_{U_{\K}})\equiv\infty$
when the operation (\ref{eq:ut-multi}) returns an error.
It should be noted that
the cardinality of the set $\cH$
is at most $[n+1]^{2^k}$.

We can apply the linear programming technique introduced in \cite{FWK05}
to the construction of (\ref{eq:ut-multi}).
By using the technique introduced in \cite{FWK05},
the conditions $A_j\uu_j=\ba_j$ for all $j\in\K$
can be replaced by linear inequalities.
In the following, we focus on the condition $\bt_{\uu_{\K}}=\bt$
for a given $\bt\in\cH$.

First, for each $b^k\in\{0,1\}^k$, we introduce the auxiliary variable
$\lrsb{s_1\lrsb{b^k},\ldots,s_n\lrsb{b^k}}$
defined as
\begin{equation}
s_i\lrsb{b^k}\equiv
\begin{cases}
  1,&\text{if}\ u_{j,i}=b_j\ \text{for all}\ j\in\K
  \\
  0,&\text{otherwise}.
\end{cases}
\label{eq:si-multi}
\end{equation}
Then the condition $\bt_{\uu_{\K}}=\bt$ is equivalent to
the linear equality
\[
\sum_{i=1}^n s_i\lrsb{b^k}=t\lrsb{b^k}
\]
for all $b^k\in\{0,1\}^k$.

Next, we show that the relation (\ref{eq:si-multi}) can be replaced by
linear inequalities.
Let $\cS\lrsb{b^k}\subset\{0,1\}^{k+1}$ be defined as
\begin{equation}
\cS(b^k)
\equiv
\lrb{\lrsb{1,b^k}}
\cup\lrB{\bigcup_{b^{\prime k}\neq b^k}\lrb{\lrsb{0,b^{\prime k}}}}.
\label{eq:Sb}
\end{equation}

We have the following lemma,
which is proved in Appendix~\ref{sec:proof-multi-binary}.
\begin{lem}
\label{lem:multi-binary}
The convex hull of $\cS\lrsb{b^k}$ is a polytope represented by the
following inequalities:
\begin{align*}
v_0&\geq 0
\\
v_j&\geq 0\ \text{for $j\in\K$ s.t. $b_j=0$}
\\
v_j&\leq 1\ \text{for $j\in\K$ s.t. $b_j=1$}
\\
v_0+[-1]^{b_j}v_j&\leq 1-b_j
\\
v_0+\sum_{j\in\K}[-1]^{b_j}v_j&\geq 1-\sum_{j\in\K}b_j
\end{align*}
where constants $b^k$ and variables $v_0,\ldots,v_k$ are considered to
be real numbers.
Furthermore, there is no non-integral vertex of this polytope.
\end{lem}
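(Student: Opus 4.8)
The plan is to prove the two assertions separately. For the first assertion, I would show that the polytope $P$ defined by the listed inequalities equals $\mathrm{conv}(\cS(b^k))$ by proving both inclusions. The inclusion $\mathrm{conv}(\cS(b^k))\subseteq P$ is routine: it suffices to check that each of the $|\cS(b^k)|=2^k$ vertices listed in~(\ref{eq:Sb}) satisfies every inequality, and then invoke convexity of the solution set. For the point $(1,b^k)$ one checks $v_0=1\geq 0$, the box constraints $v_j=b_j$, the constraint $v_0+[-1]^{b_j}v_j = 1+[-1]^{b_j}b_j = 1-b_j$ (using $b_j\in\{0,1\}$), and the last constraint $v_0+\sum_j[-1]^{b_j}v_j = 1+\sum_j[-1]^{b_j}b_j = 1-\sum_j b_j$; for any point $(0,b^{\prime k})$ with $b^{\prime k}\neq b^k$ one checks $v_0=0\geq 0$, the box constraints, $[-1]^{b_j}b'_j\leq 1-b_j$ (which holds coordinatewise), and the last constraint $\sum_j[-1]^{b_j}b'_j\geq 1-\sum_j b_j$, which holds because $b^{\prime k}\neq b^k$ forces at least one coordinate where $b'_j\neq b_j$, making $[-1]^{b_j}b'_j = 1-b_j+0$ or similar bookkeeping show the sum does not drop below the threshold.

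The reverse inclusion $P\subseteq\mathrm{conv}(\cS(b^k))$, together with the "no non-integral vertex" claim, is the crux, and I would handle both at once by showing that every vertex of $P$ is integral and lies in $\cS(b^k)$. Here the cleanest route is a dimension/support-counting argument: $P\subseteq\mathbb{R}^{k+1}$, so at any vertex at least $k+1$ linearly independent inequalities are tight. I would argue by cases on whether the constraint $v_0+\sum_{j\in\K}[-1]^{b_j}v_j\geq 1-\sum_{j\in\K}b_j$ is tight. If it is not tight, then $k+1$ of the remaining constraints — the box constraints $v_j\geq 0$ or $v_j\leq 1$, the sign constraint $v_0\geq 0$, and the pairwise constraints $v_0+[-1]^{b_j}v_j\leq 1-b_j$ — must be tight and independent; a short case analysis then shows the vertex has $v_0\in\{0,1\}$ with each $v_j\in\{0,1\}$, and feasibility of the "big" inequality together with the pairwise inequalities pins it down to exactly one of the points of~(\ref{eq:Sb}). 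If the big inequality is tight, I would combine it with the $k$ pairwise inequalities $v_0+[-1]^{b_j}v_j\leq 1-b_j$: summing the latter gives $kv_0+\sum_j[-1]^{b_j}v_j\leq k-\sum_j b_j$, and subtracting the tight big inequality yields $(k-1)v_0\leq k-1$, i.e. $v_0\leq 1$; combined with $v_0\geq 0$ and the observation that at a vertex enough of these are tight, one again forces integrality.

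Alternatively — and this may be the smoother write-up — I would prove directly that $P$ is integral by checking that its constraint matrix is amenable, or simply that for every vertex $v$, substituting the (easily shown) fact $v_0\in\{0,1\}$ reduces the remaining constraints to a system each of whose vertices is integral: if $v_0=1$ the pairwise constraints force $v_j = b_j$ for all $j$, giving $(1,b^k)$; if $v_0=0$ the constraints reduce to $[-1]^{b_j}v_j\le 1-b_j$, $v_j\ge 0$ or $v_j\le 1$, and $\sum_j [-1]^{b_j}v_j \ge 1-\sum_j b_j$, and a vertex of this $k$-dimensional polytope must have all but possibly... — here the last inequality being tight or some box constraint being tight forces each $v_j\in\{0,1\}$, and the threshold constraint then excludes $v=(0,b^k)$ while admitting exactly the points $(0,b^{\prime k})$ with $b^{\prime k}\neq b^k$. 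Either way the vertex set of $P$ is precisely $\cS(b^k)$, which gives $P=\mathrm{conv}(\cS(b^k))$ and simultaneously the no-non-integral-vertex statement.

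The main obstacle I anticipate is the bookkeeping in the "$v_0=0$, big inequality tight" branch: showing that tightness of the threshold constraint $\sum_{j}[-1]^{b_j}v_j = 1-\sum_j b_j$ together with $k-1$ further independent tight constraints forces all $v_j$ to be $0/1$-valued, and correctly identifying which $0/1$ patterns are feasible (exactly those differing from $b^k$ in at least one coordinate). This is where a clean change of variables helps: setting $w_j \equiv v_j$ if $b_j=0$ and $w_j\equiv 1-v_j$ if $b_j=1$ transforms all box constraints to $0\le w_j\le 1$, the pairwise constraints to $v_0+w_j\le 1$, and the threshold constraint to $v_0+\sum_j w_j\ge 1$ — i.e., $P$ becomes (after the affine map) exactly the standard polytope $\{(v_0,w): 0\le v_0, 0\le w_j\le 1, v_0+w_j\le 1, v_0+\sum_j w_j\ge 1\}$, independent of $b^k$. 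For that canonical polytope integrality of vertices is transparent, and the vertices are $(1,0,\ldots,0)$ and $(0,w)$ with $w\in\{0,1\}^k\setminus\{\zero\}$, which map back exactly to $\cS(b^k)$. I would present the proof via this normalization to keep the case analysis short.
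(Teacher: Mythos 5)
Your proposal is correct, and its skeleton is the same as the paper's: verify that every point of $\cS(b^k)$ satisfies the inequalities (the paper's Lemma~\ref{lem:vertex}), verify that the only integral points of the polytope are the points of $\cS(b^k)$ (Lemmas~\ref{lem:i-vertex} and~\ref{lem:c-vertex}), and then rule out non-integral vertices by a tight-constraint count split on whether the threshold inequality is active, after first disposing of $v_0\notin\{0,1\}$. The main difference is in execution. The paper handles the case $0<v_0<1$ by exhibiting an explicit convex combination $\vv=v_0(1,b^k)+[1-v_0](0,u^k)$ and checking membership of $(0,u^k)$ coordinate by coordinate (Lemma~\ref{lem:v0frac}), and it carries the $[-1]^{b_j}$ signs through every subsequent computation; your normalization $w_j\equiv v_j$ for $b_j=0$, $w_j\equiv 1-v_j$ for $b_j=1$ maps $\cP(b^k)$ affinely onto a single canonical polytope $\{(v_0,w): v_0\geq 0,\ w_j\geq 0,\ v_0+w_j\leq 1,\ v_0+\sum_j w_j\geq 1\}$ independent of $b^k$, which eliminates that sign bookkeeping and makes boundedness (hence ``polytope equals convex hull of its vertices'') immediate. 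Since vertices and convex hulls are preserved by invertible affine maps, this is a legitimate reduction and arguably cleaner than the paper's Lemmas~\ref{lem:valid}--\ref{lem:v0frac}. Two small points to nail down in a full write-up: (i) in the normalized coordinates the box constraints only give $w_j\geq 0$ directly ($w_j\leq 1$ then follows from $v_0+w_j\leq 1$ and $v_0\geq 0$); and (ii) ``integrality of the canonical polytope is transparent'' still requires the same short argument you sketch for the original coordinates --- in particular, when $0<v_0<1$ you must observe that the only candidate configuration ($w_{j_0}=1-v_0$ for exactly one $j_0$, the rest zero, threshold tight) has linearly \emph{dependent} tight constraints, so it is not a vertex. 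With those details written out, your proof is complete.
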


By using this lemma, we have $[2[k+1]n+1]2^k$ inequalities,
which replace the condition $\bt_{\uu_{\K}}=\bt$, defined as
\begin{align*}
\sum_{i=1}^ns_i\lrsb{b^k}&=t\lrsb{b^k}
\\
s_i\lrsb{b^k}&\geq 0
\\
u_{j,i}&\geq 0\ \text{for $j\in\K$ s.t. $b_j=0$}
\\
u_{j,i}&\leq 1\ \text{for $j\in\K$ s.t. $b_j=1$}
\\
s_i\lrsb{b^k}+[-1]^{b_j}u_{j,i}&\leq 1-b_j
\ \text{for all $j\in\K$}
\\
s_i\lrsb{b^k}+\sum_{j\in\K}[-1]^{b_j}u_{j,i}
&\geq 1-\sum_{j\in\K}b_j,
\end{align*}
where we take all $i\in\{1,2,\ldots,n\}$ and $b^k\in\{0,1\}^k$.

\section{Proof of Theorems}

\subsection{Proof of Theorem \ref{thm:hash-linear}}
\label{sec:proof-linear}

For a type $\bt$, let $\C_{\bt}$ be defined as
\begin{gather*}
\C_{\bt} \equiv \lrb{\uu\in\U^n :\ \bt(\uu)=\bt}.
\end{gather*}
We assume that $\pA\lrsb{\lrb{A: A\uu=\zero}}$ depends on $\uu$ only
through the type $\bt(\uu)$.
For a given $\uu\in\C_{\bt}$, we define 
\begin{align*}
p_{\sfA,\bt}
&\equiv \pA\lrsb{\lrb{A: A\uu=\zero}}.
\end{align*}
We use the following lemma.
\begin{lem}[{\cite[Lemma 9]{HASH}}]
Let $(\alphaA,\betaA)$ be defined by (\ref{eq:alpha-linear})
and (\ref{eq:beta-linear}). Then
\begin{align}
\begin{split}
  \alphaA
  &=
  |\im\A|\max_{\bt\in \hcH_{\sfA}}p_{\sfA,\bt} 
\end{split}
\label{eq:alpha2}
\\
\betaA
&=
\sum_{\bt\in\cH\setminus\hcH_{\sfA}}|\C_{\bt}|p_{\sfA,\bt},
\label{eq:beta2}
\end{align}
where $\cH$ is a set of all types of length $n$ except the type
of the zero vector.
\end{lem}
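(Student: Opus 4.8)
The plan is to obtain both identities by unwinding the definition of the average-spectrum quantity $S(\pA,\bt)$ in terms of the type-dependent collision probability $p_{\sfA,\bt}$, and then computing the corresponding quantity $S(u_{\A},\bt)$ for the uniform ensemble of matrices. First I would interchange the order of the two sums in $S(\pA,\bt)$: since $\pA(\{A:A\uu=\zero\})$ depends on $\uu$ only through $\bt(\uu)$ and every $\uu\in\C_{\bt}$ has type $\bt$, we get
\begin{align*}
  S(\pA,\bt)
  &=\sum_{A\in\A}\pA(A)\sum_{\uu\in\C_{\bt}}\chi(A\uu=\zero)
  =\sum_{\uu\in\C_{\bt}}\pA(\{A:A\uu=\zero\})
  =|\C_{\bt}|\,p_{\sfA,\bt}.
\end{align*}
Substituting this into the definition (\ref{eq:beta-linear}) of $\betaA$ immediately gives the second identity (\ref{eq:beta2}).

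For the first identity, the key additional step is to evaluate $S(u_{\A},\bt)$. Because $\cH$ excludes the type $\bt(\zero)$ of the zero vector, every $\uu\in\C_{\bt}$ is nonzero, and over the finite field $\U=\GFq$ the map $A\mapsto A\uu$ carries the uniform distribution on $\lA\times n$ matrices to the uniform distribution on $\U^{\lA}$ (each row inner product is uniform on $\U$ for $\uu\neq\zero$). Hence $u_{\A}(\{A:A\uu=\zero\})=|\U|^{-\lA}$ for each $\uu\in\C_{\bt}$, and the same interchange of sums yields $S(u_{\A},\bt)=|\C_{\bt}|\,|\U|^{-\lA}$. Therefore $S(\pA,\bt)/S(u_{\A},\bt)=|\U|^{\lA}p_{\sfA,\bt}$, and plugging this into (\ref{eq:alpha-linear}) cancels $|\U|^{\lA}$ against the prefactor $1/|\U|^{\lA}$, leaving $\alphaA=|\im\A|\max_{\bt\in\hcH_{\sfA}}p_{\sfA,\bt}$, which is (\ref{eq:alpha2}).

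The argument is essentially bookkeeping with finite sums, so I expect no serious obstacle; the one point that genuinely needs the hypotheses is the claim $u_{\A}(\{A:A\uu=\zero\})=|\U|^{-\lA}$ for $\uu\neq\zero$, where both the field structure of $\U$ (so that each row contributes an independent uniform constraint) and the exclusion of the all-zero type from $\cH$ (so that $\uu\neq\zero$) are used. Once that is in place the two displayed formulas follow directly.
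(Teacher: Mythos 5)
Your proof is correct. The paper itself gives no proof of this lemma --- it is quoted verbatim from \cite[Lemma 9]{HASH} --- and your argument is the natural direct verification: the interchange of sums giving $S(\pA,\bt)=|\C_{\bt}|\,p_{\sfA,\bt}$ yields (\ref{eq:beta2}) immediately, and the computation $S(u_{\A},\bt)=|\C_{\bt}|\,|\U|^{-\lA}$ (valid precisely because $\cH$ excludes $\bt(\zero)$ and $A\mapsto A\uu$ is uniform on $\U^{\lA}$ for $\uu\neq\zero$ over a finite field) gives (\ref{eq:alpha2}). Both of the hypotheses you flag as essential are indeed the ones doing the work.
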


Now we prove Theorem~\ref{thm:hash-linear}.
It is enough to show (\ref{eq:hash})
because (\ref{eq:alpha}), (\ref{eq:beta}) are satisfied from the
assumption of the theorem.
Since function $A$ is linear, we have
\begin{align}
p_{\sfA}(\{A: A\uu = A\uu'\})
&=p_{\sfA}(\{A: A[\uu-\uu']=\zero\})
\notag
\\
&=p_{\sfA,\bt(\uu-\uu')}
\end{align}
Then, for $\uu\neq\uu'$ satisfying $\bt(\uu-\uu')\in\hcH_{\sfA}$, we have
\begin{align}
p_{\sfA}(\{A: A\uu = A\uu'\})
&=p_{\sfA,\bt(\uu-\uu')}
\notag
\\
&\leq
\max_{\bt\in \hcH_{\sfA}}p_{\sfA,\bt} 
\notag
\\
&=
\frac{\alphaA}{|\im\A|},
\end{align}
where the last inequality comes from (\ref{eq:alpha2}).
Then we have the fact that
$p_{\sfA}(\{A: A\uu = A\uu'\})>\alphaA/|\im\A|$ implies
$\bt(\uu-\uu')\notin\hcH_{\sfA}$.
Finally, we have
\begin{align}
\sum_{\substack{
    \uu'\in\U^n\setminus\{\uu\}
    \\
    p_{\sfA}(\{A: A\uu = A\uu'\})>\frac{\alphaA}{|\im\A|}
}}
p_{\sfA}\lrsb{\lrb{A: A\uu = A\uu'}}
&\leq
\sum_{\substack{
    \uu'\in\U^n\setminus\{\uu\}
    \\
    \bt(\uu-\uu')\in\cH\setminus\hcH_{\sfA}
}}
p_{\sfA,\bt(\uu-\uu')}
\notag
\\
&\leq
\sum_{\bt\in\cH\setminus\hcH_{\sfA}}
\sum_{\substack{
    \uu'\in\U^n\setminus\{\uu\}
    \\
    \bt(\uu-\uu')=\bt
}}
p_{\sfA,\bt}
\notag
\\
&\leq
\sum_{\bt\in\cH\setminus\hcH_{\sfA}}
|\C_{\bt}|p_{\sfA,\bt}
\notag
\\
&=\betaA,
\end{align}
where the equality comes from (\ref{eq:beta2}).
\hfill\QED

\subsection{Proof of Theorem \ref{thm:sw}}
\label{sec:proof-sw}
Let $(\xx,\yy)$ be the output of correlated sources.
We define
\begin{align}
&\bullet (\xx,\yy)\notin\T_{XY,\gamma}
\tag{SW1}
\\
&\bullet \exists (\xx',\yy')\neq(\xx,\yy)\ \text{s.t.}
\ \xx'\in\C_A(A\xx),\ \yy'\in\C_B(B\yy),
D(\nu_{\xx'\yy'}\|\mu_{XY})\leq D(\nu_{\xx\yy}\|\mu_{XY}).
\tag{SW2}
\end{align}
It should be noted that (SW2) includes
the following three cases:
\begin{itemize}
\item there is $(\xx',\yy')$
satisfying $\xx'\neq\xx$, $\yy'=\yy$,
$\xx'\in\C_A(A\xx)$,  and
$D(\nu_{\xx'\yy'}\|\mu_{XY})\leq D(\nu_{\xx\yy}\|\mu_{XY})$,
\item there is $(\xx',\yy')$
satisfying $\xx'=\xx$, $\yy'\neq\yy$,
$\yy'\in\C_B(B\yy)$, and
$D(\nu_{\xx'\yy'}\|\mu_{XY})\leq D(\nu_{\xx\yy}\|\mu_{XY})$,
\item there is $(\xx',\yy')$ satisfying
$\xx'\neq\xx$, $\yy'\neq\yy$,
$\xx'\in\C_A(A\xx)$, $\yy'\in\C_B(B\yy)$, and
$D(\nu_{\xx'\yy'}\|\mu_{XY})\leq D(\nu_{\xx\yy}\|\mu_{XY})$.
\end{itemize}
Since a decoding error occurs when
at least one of the conditions (SW1) and (SW2) is satisfied,
the error probability is upper bounded by
\begin{align}
\Error_{XY}(A,B)
&\leq
\mu_{XY}(\E_1)+\mu_{XY}(\E_1^c\cap\E_2),
\label{eq:sw0}
\end{align}
where we define
\[
\E_i\equiv\{(\xx,\yy): \text{(SW$i$)}\}.
\]

First, we evaluate $E_{AB}\lrB{\mu_{XY}(\E_1)}$.
From Lemma \ref{lem:typical-prob}, we have
\begin{equation}
E_{AB}\lrB{\mu_{XY}(\E_1)}
\leq
\frac{\delta}2
\label{eq:sw1}
\end{equation}
for all sufficiently large $n$.

Next, we evaluate $E_{AB}\lrB{\mu_{XY}(\E_1^c\cap\E_2)}$.
When (SW2) is satisfied but (SW1) is not,
we have
\begin{align*}
&\lrB{\T_{XY,\gamma}\setminus\{(\xx,\yy)\}}\cap\lrB{\C_{A}(A\xx)\times\C_B(B\yy)}
\neq\emptyset.
\end{align*}
Applying
Lemma~\ref{lem:hash-AxB-CRP}
by letting
$\G\equiv\T_{XY,\gamma}$,
we have
\begin{align}
&
E_{\sfA\sfB}\lrB{\mu_{XY}(\E_1^c\cap\E_2)}
\notag
\\*
&=
\sum_{(\xx,\yy)\in\T_{XY,\gamma}}\mu_{XY}(\xx,\yy)
p_{\sfA\sfB}\lrsb{\lrb{
    (A,B): \text{(SW2)}
}}
\notag
\\
&\leq
\sum_{(\xx,\yy)\in\T_{XY,\gamma}}\mu_{XY}(\xx,\yy)
p_{\sfA\sfB}\lrsb{\lrb{
    (A,B):
    \lrB{\T_{XY,\gamma}\setminus\{(\xx,\yy)\}}\cap\lrB{\C_{A}(A\xx)\times\C_B(B\yy)}
    \neq\emptyset
}}
\notag
\\
&\leq
\frac{|\G|\alphaA\alphaB}{|\im\A||\im\B|}
+
\frac{\lrB{\disp\max_{\yy\in\G_{\Y}}|\G_{\X|\Y}(\yy)|}\alphaA[\betaB+1]}
{|\im\A|}
+\frac{\lrB{\disp\max_{\xx\in\G_{\X}}|\G_{\Y|\X}(\xx)|}\alphaB[\betaA+1]}
{|\im\B|}
+\betaA+\betaB+\betaA\betaB
\notag
\\
&\leq
\frac{|\T_{XY,\gamma}|\alphaA\alphaB}{|\im\A||\im\B|}
+
\frac{\lrB{\disp\max_{\yy\in\T_{Y,\gamma}}|\T_{X|Y,\gamma}(\yy)|}\alphaA[\betaB+1]}
{|\im\A|}
+\frac{\lrB{\disp\max_{\xx\in\T_{X,\gamma}}|\T_{Y|X,\gamma}(\xx)|}\alphaB[\betaA+1]}
{|\im\B|}
+\betaA+\betaB+\betaA\betaB
\notag
\\
\begin{split}
  &\leq
  \frac{2^{n[H(X,Y)+\eta_{\X\Y}(\gamma)]}\alphaA\alphaB}{|\im\A||\im\B|}
  +
  \frac{2^{n[H(X|Y)+\eta_{\X|\Y}(\gamma|\gamma)]}\alphaA[\betaB+1]}
  {|\im\A|}
  +
  \frac{2^{n[H(Y|X)+\eta_{\Y|\X}(\gamma|\gamma)]}\alphaB[\betaA+1]}
  {|\im\B|}
  \\*
  &\quad
  +\betaA+\betaB+\betaA\betaB
\end{split} 
\notag
\\
\begin{split}
  &=
  2^{-n[R_X+R_Y-H(X,Y)-\eta_{\X\Y}(\gamma)]}\alphaA\alphaB
  +
  2^{-n[R_X-H(X|Y)-\eta_{\X|\Y}(\gamma|\gamma)]}\alphaA[\betaB+1]
  \\*
  &\quad
  +
  2^{-n[R_Y-H(Y|X)-\eta_{\Y|\X}(\gamma|\gamma)]}\alphaB[\betaA+1]
  +\betaA+\betaB+\betaA\betaB
\end{split} 
\notag
\\
&\leq
\frac{\delta}4
\label{eq:sw2}
\end{align}
for all sufficiently large $n$ by taking an appropriate $\gamma>0$,
where the third inequality comes from the fact that
$\G_{\X}\subset\T_{X,\gamma}$, $\G_{\Y}\subset\T_{Y,\gamma}$,
$\G_{\X|\Y}(\yy)\subset\T_{X|Y,\gamma}(\yy)$,
and $\G_{\Y|\X}(\xx)\subset\T_{Y|X,\gamma}(\xx)$,
which are obtained from Lemma~\ref{lem:typical-trans}.
The fourth inequality comes from Lemma~\ref{lem:typical-number} and
the last inequality comes from (\ref{eq:swx})--(\ref{eq:swxy})
and conditions (\ref{eq:alpha}), (\ref{eq:beta}) of ensembles
$(\bcA,\bpA)$ and $(\bcB,\bpB)$.

Finally, from (\ref{eq:sw0})--(\ref{eq:sw2}),
for all $\delta>0$ and for all sufficiently large $n$ there are $A$ and
$B$ such that
\[
\Error_{XY}(A,B)<\delta.
\]
\hfill\QED

\begin{rem}
We need the conditions $\xx'\in\T_{X,\gamma}$ and 
$\yy'\in\T_{Y,\gamma}$ in the maximum-likelihood decoder (\ref{eq:ML})
because
$\G\equiv\lrb{(\xx',\yy'): \mu_{XY}(\xx',\yy')\geq \mu_{XY}(\xx,\yy)}$
does not satisfy
\begin{align*}
\max_{\yy\in\G_{\Y}}|\G_{\X|\Y}(\yy)|
&\leq 2^{n[H(X|Y)+\eta_{\X|\Y}(\gamma|\gamma)]}
\\
\max_{\xx\in\G_{\X}}|\G_{\Y|\X}(\xx)|
&\leq 2^{n[H(Y|X)+\eta_{\Y|\X}(\gamma|\gamma)]}
\end{align*}
in general.
It is unknown if we can remove these conditions.
It seems that the same problem appears in \cite[Eq.\ (20)]{SWLDPC}.
\end{rem}

\subsection{Proof of Theorem \ref{thm:broadcast}}
In the following, we assume that ensembles $(\bcA,\bpA)$,
$(\bcA',\bpAp)$, $(\bcB,\bpB)$, and $(\bcB',\bpBp)$
have a strong hash property.
Then, from Lemma \ref{lem:hash-AB}, ensembles $(\bhcA,\bphA)$ and
$(\bhcB,\bphB)$, defined by
\begin{align*}
\hA\uu&\equiv(A\uu,A'\uu)
\\
\hB\vv&\equiv(B\vv,B'\vv)
\end{align*}
have an $(\aalphahA,\bbetahA)$-hash property and an
$(\aalphahB,\bbetahB)$-hash property, respectively, where
\begin{align*}
|\im\hcA|&=|\im\A||\im\A'|
\\
|\im\hcB|&=|\im\B||\im\B'|
\\
\alphahA&\equiv \alphaA\alphaAp
\\
\betahA&\equiv \betaA+\betaAp
\\
\alphahB&\equiv \alphaB\alphaBp
\\
\betahB&\equiv \betaB+\betaBp.
\end{align*}

For $\bbetaA$ and $\bbetaB$ satisfying
\begin{align*}
&\limn\betaA(n)=0
\\
&\limn\betaB(n)=0,
\end{align*}
let $\kkappa\equiv\{\kappa(n)\}_{n=1}^{\infty}$ be a sequence satisfying
\begin{gather}
\limn\kappa(n)=\infty
\label{eq:k1}
\\
\limn \kappa(n)\betaA(n)=0
\label{eq:k2a}
\\
\limn \kappa(n)\betaB(n)=0
\label{eq:k2b}
\\
\limn\frac{\log\kappa(n)}n=0.
\label{eq:k3}
\end{gather}
For example, we obtain such a $\kkappa$ by letting
\begin{equation*}
\kappa(n)
\equiv
\begin{cases}
  n^{\xi}
  &\text{if}\ \beta(n)=o\lrsb{n^{-\xi}}, \xi>0
  \\
  \frac 1{\sqrt{\beta(n)}},
  &\text{otherwise}
\end{cases}
\end{equation*}
for every $n$, where $\beta(n)\equiv\max\{\betaA(n),\betaB(n)\}$.
If $\beta(n)$ is not $o\lrsb{n^{-\xi}}$,
there is $\kappa'>0$ such that
$\beta(n)n^{\xi}>\kappa'$
and
\begin{align}
\frac{\log\kappa(n)}n
\notag
&=
\frac{\log\frac 1{\beta(n)}}{2n}
\notag
\\
&\leq \frac{\log\frac{n^{\xi}}{\kappa'}}{2n}
\notag
\\
&=\frac{\xi\log n-\log\kappa'}{2n}
\end{align}
for all sufficiently large $n$.
This implies that $\kkappa$ satisfies (\ref{eq:k3}).
In the following, $\kappa$ denotes $\kappa(n)$.

From (\ref{eq:rY})--(\ref{eq:rYRYrZRZ-lower}), and (\ref{eq:k3}),
we have the fact that there is $\gamma>0$ such that
\begin{gather}
\e\geq \eta_{\U\V}(\gamma)+\frac{\log\kappa}n
\\
r_Y> H(U|Y)+\zeta_{\U|\Y}(2\gamma|2\gamma)
\label{eq:eAzeta}
\\
r_Z> H(V|Z)+\zeta_{\V|\Z}(2\gamma|2\gamma)
\label{eq:eBzeta}
\\
r_Y+R_Y\leq H(U)-\e-\eta_{\V|\U}(\gamma|\gamma)
\label{eq:RYeta}
\\
r_Z+R_Z\leq H(V)-\e-\eta_{\U|\V}(\gamma|\gamma)
\label{eq:RZeta}
\\
r_Y+R_Y+r_Z+R_Z
\leq
H(U,V)-\eta_{\U\V}(\gamma)-\frac{\log\kappa}n
\label{eq:ekappa}
\\
r_Y+R_Y+r_Z+R_Z
\geq
H(U,V)-\e
\label{eq:RYRZeAB}
\end{gather}
for all sufficiently large $n$.

We have
\begin{align}
|\T_{UV,\gamma}|
&\geq 2^{n[H(U,V)-\eta_{\U\V}(\gamma)]}
\notag
\\
&\geq
\kappa2^{n[r_Y+R_Y+r_Z+R_Z]}
\notag
\\
&=
\kappa|\im\A||\im\A'||\im\B||\im\B'|
\notag
\\
&=
\kappa|\im\hcA||\im\hcB|
\end{align}
for all sufficiently large $n$,
where
the first inequality comes from Lemma \ref{lem:typical-number},
and the second inequality comes from (\ref{eq:ekappa}).
This implies that
there is $\T\subset\T_{UV,\gamma}$ such that
\begin{align}
\kappa
&\leq
\frac{|\T|}{|\im\hcA||\im\hcB|}
\leq 
2\kappa
\label{eq:T}
\end{align}
for all sufficiently large $n$,
where we construct such $\T$ by taking
$|\T|$ elements from $\T_{UV,\gamma}$
in ascending order of divergence.

Now we prove the theorem.
Let $\mm$ and $\ww$ be messages.
Let $\uu$ and $\vv$ denotes the $\U^n$-component
(resp. $\V^n$-component) of $g_{AA'BB'}(\ba,\mm,\bb,\ww)$, that is,
\begin{align*}
(\uu,\vv)&\equiv \hg_{AA'BB'}(\ba,\mm,\bb,\ww).
\end{align*}
Let $\yy$ and $\zz$ be the channel outputs.
We define
\begin{align}
&\bullet (\uu,\vv)\in\T\subset\T_{UV,\gamma}
\tag{BC1}
\label{eq:bc1}
\\
&\bullet
(\yy,\zz)\in
\T_{YZ|UVX,\gamma}(\uu,\vv,\xx)
\tag{BC2}
\label{eq:bc2}
\\
&\bullet g_{A}(\ba|\yy)=\uu
\tag{BC3}
\label{eq:bc3}
\\
&\bullet g_{B}(\bb|\zz)=\vv,
\tag{BC4}
\label{eq:bc4}
\end{align}
where we define $\xx\equiv\ff(\uu,\vv)$.
Then the error probability (\ref{eq:error}) is upper bounded by
\begin{align}
&
\Error_{YZ|X}(A,A',B,B',\ba,\bb)
\notag
\\*
\begin{split}
  &\leq
  p_{\sfmm\sfww YZ}(\cS_1^c)
  +p_{\sfmm\sfww YZ}(\cS_2^c)
  +p_{\sfmm\sfww YZ}(\cS_1\cap\cS_2\cap\cS_3^c)
  +p_{\sfmm\sfww YZ}(\cS_1\cap\cS_2\cap\cS_4^c),
\end{split}
\label{eq:error0}
\end{align}
where
\begin{align*}
\cS_i
&\equiv
\lrb{
  (\mm,\ww,\yy,\zz): \text{(BC$i$)}
}.
\end{align*}
In the following, we define
\begin{align*}
\haa&\equiv (\ba,\mm)
\\
\hbb&\equiv (\bb,\ww)
\end{align*}
and
\begin{gather*}
\C_{\hA\hB}(\haa,\hbb)
\equiv
\lrB{\C_{A}(\ba)\cap\C_{A'}(\mm)}\times\lrB{\C_{B}(\bb)\cap\C_{B'}(\ww)}.
\end{gather*}

First, we evaluate
$E_{\sfA\sfA'\sfB\sfB'\sfaa\sfbb}\lrB{p_{\sfmm\sfww YZ}(\cS_1^c)}$.
From Lemma~\ref{lem:hash-AxB-SP}, we have
\begin{align}
&
E_{\sfA\sfA'\sfB\sfB'\sfaa\sfbb}\lrB{p_{\sfmm\sfww YZ}(\cS_1^c)}
\notag
\\*
&=
p_{\sfA\sfA'\sfB\sfB'\sfaa\sfbb\sfmm\sfww}
\lrsb{\lrb{
    (A,A',B,B',\ba,\bb,\mm,\ww):
    \hg_{AA'BB'}(\ba,\mm,\bb,\ww)\notin\T
}}
\notag
\\
&\leq
p_{\sfhA\sfhB\sfhaa\sfhbb}
\lrsb{\lrb{
    (\hA,\hB,\haa,\hbb):
    \T\cap\C_{\hA\hB}(\haa,\hbb)=\emptyset
}}
\notag
\\
\begin{split}
  &\leq
  \alphahA\alphahB-1
  +
  \frac{|\im\hcB|\lrB{\disp\max_{\vv\in\T_{\V}}|\T_{\U|\V}(\vv)|}\alphahA[\betahB+1]}
  {|\T|}
  +
  \frac{|\im\hcA|\lrB{\disp\max_{\xx\in\T_{\U}}|\T_{\V|\U}(\uu)|}\alphahB[\betahA+1]}
  {|\T|}
  \\*
  &\quad
  +\frac{|\im\hcA||\im\hcB|\lrB{\betahA+\betahB+\betahA\betahB+1}}
  {|\T|}.
\end{split}
\label{eq:error1-saturation}
\end{align} 
From Lemma~\ref{lem:typical-trans} and the fact that $\T\subset\T_{XY,\gamma}$,
we have the fact that $\uu\in\T_{\U}$ implies $\uu\in\T_{U,\gamma}$
and $\vv\in\T_{\V|\U}(\uu)$ implies $\vv\in\T_{V|U,\gamma}(\uu)$.
Furthermore, from Lemma~\ref{lem:typical-number}, we have
\begin{align}
\max_{\vv\in\T_{\V}}|\T_{\U|\V}(\vv)|
&\leq 
\max_{\vv\in\T_{V,\gamma}}|\T_{U|V,\gamma}(\vv)|
\notag
\\
&\leq
2^{n[H(U|V)+\eta_{\U|\V}(\gamma|\gamma)]}.
\label{eq:maxT-bc}
\end{align}
Then we have
\begin{align}
\frac{|\im\hcB|\disp\max_{\vv\in\T_{\V}}|\T_{\U|\V}(\vv)|}
{|\T|}
&\leq
\frac{2^{n[H(U|V)+\eta_{\U|\V}(\gamma|\gamma)]}}
{\kappa|\im\hcA|}
\notag
\\
&=
\frac{2^{-n[r_Y+R_Y-H(U|V)-\eta_{\U|\V}(\gamma|\gamma)]}}
{\kappa}
\notag
\\
&\leq
\frac 1{\kappa}
\label{eq:limit-bc1}
\end{align}
where the first inequality comes from (\ref{eq:T}), (\ref{eq:maxT-bc}),
and the second inequality comes from the fact that
\begin{align}
r_Y+R_Y
&=
r_Y+R_Y+r_Z+R_Z-r_Z-R_Z
\notag
\\
&\geq
H(U,V)-\e-r_Z-R_Z
\notag
\\
&\geq
H(U,V)-\e-H(V)+\e+\eta_{\U|\V}(\gamma|\gamma)
\notag
\\
&=
H(U|V)+\eta_{\U|\V}(\gamma|\gamma)
\end{align}
which is obtained from (\ref{eq:RYeta}) and (\ref{eq:RYRZeAB}).
Similarly,
we have
\begin{align}
\frac{|\im\hcA|\disp\max_{\uu\in\T_{\U}}|\T_{\V|\U}(\uu)|}
{|\T|}
&
\leq
\frac{2^{-n[r_Z+R_Z-H(V|U)-\eta_{\V|\U}(\gamma|\gamma)]}}
{\kappa}
\notag
\\
&\leq \frac 1{\kappa}
\label{eq:limit-bc2}
\end{align}
Furthermore, we have
\begin{align}
\frac{|\im\hcA||\im\hcB|}
{|\T|}
&\leq
\frac1{\kappa}.
\label{eq:limit-bc3}
\end{align} 
from (\ref{eq:T}).
Then, from (\ref{eq:k1}),
(\ref{eq:error1-saturation}), (\ref{eq:limit-bc1})
(\ref{eq:limit-bc2}), (\ref{eq:limit-bc3}),
and the properties (\ref{eq:alpha}) and (\ref{eq:beta})
for ensembles $(\bhcA,\bphA)$ and $(\bhcB,\bphB)$,
we have
\begin{align}
E_{\sfA\sfA'\sfB\sfB'\sfaa\sfbb}\lrB{p_{\sfmm\sfww YZ}(\cS_1^c)}
&\leq
\frac {\delta}4
\label{eq:error1}
\end{align}
for all $\delta>0$ and sufficiently large $n$.

Next, we evaluate
$E_{\sfA\sfA'\sfB\sfB'\sfaa\sfbb}\lrB{p_{\sfmm\sfww YZ}(\cS_2^c)}$.
We have
\begin{align}
E_{\sfA\sfA'\sfB\sfB'\sfaa\sfbb}\lrB{p_{\sfmm\sfww YZ}(\cS_2^c)}
&=
E_{\sfhA\sfhB\sfhaa\sfhbb}\lrB{
  \mu_{YZ|X}\lrsb{
    \lrB{\T_{YZ|UVX,\gamma}(\UU,\VV,\XX)}^c|\XX
  }
}
\notag
\\
&=
E_{\sfhA\sfhB\sfhaa\sfhbb}\lrB{
  \mu_{YZ|UVX}\lrsb{
    \lrB{\T_{YZ|UVX,\gamma}(\UU,\VV,\XX)}^c|\UU,\VV,\XX
  }
}
\notag
\\
&\leq 2^{-n[\gamma-\lambda_{\U\V\X\Y\Z}]}
\notag
\\
&\leq
\frac {\delta}4
\label{eq:error2}
\end{align}
for all $\delta>0$ and sufficiently large $n$,
where
we define $\XX\equiv\ff(\UU,\VV)$,
the second equality comes from (\ref{eq:markov-broadcast}),
and the first inequality comes from Lemma \ref{lem:typical-prob}.

Next, we evaluate
$E_{\sfA\sfA'\sfB\sfB'\sfaa\sfbb}\lrB{p_{\sfmm\sfww YZ}(\cS_1\cap\cS_2\cap\cS_3^c)}$
and
$E_{\sfA\sfA'\sfB\sfB'\sfaa\sfbb}\lrB{p_{\sfmm\sfww YZ}(\cS_1\cap\cS_2\cap\cS_4^c)}$.
In the following, we assume (\ref{eq:bc1}), (\ref{eq:bc2}), and
$g_{A}(\ba|\yy)\neq\uu$,
where the last assumption is equivalent to $g_{A}(A\uu|\yy)\neq\uu$
from (\ref{eq:bc1}).
From (\ref{eq:UVfUV}) and Lemma~\ref{lem:typical-trans},
we have the fact that $(\uu,\vv)\in\T$ implies
$(\uu,\vv,\xx)\in\T_{UVX,\gamma}$.
From Lemma \ref{lem:typical-trans}, we have
$(\uu,\yy)\in\T_{UY,2\gamma}$ and $\uu\in\T_{U|Y,2\gamma}(\yy)$.
Then there is $\uu'\in\C_A(A\uu)$ such that $\uu'\neq\uu$ and
\begin{align}
\mu_{U|Y}(\uu'|\yy)
\notag
&\geq
\mu_{U|Y}(\uu|\yy)
\notag
\\
&\geq
2^{-n[H(U|Y)+\zeta_{\U|\Y}(2\gamma|2\gamma)]},
\end{align}
where the second inequality comes from Lemma \ref{lem:typical-prob}.
This implies that $[\G(\yy)\setminus\{\uu\}]\cap\C_A(A\uu)\neq\emptyset$,
where
\[
\G(\yy)\equiv\lrb{
  \uu': \mu_{U|Y}(\uu'|\yy)\geq 2^{-[H(U|Y)+\zeta_{\U|\Y}(2\gamma|2\gamma)]}
}.
\]
From Lemma~\ref{lem:collision}, we have 
\begin{align}
E_{\sfA}\lrB{
  \chi(g_{\sfA}(\sfA\uu|\yy)\neq \uu)
}
&\leq
p_{\sfA}\lrsb{\lrb{
    A:
    \lrB{\G(\yy)\setminus\{\uu\}}\cap\C_A(A\uu)\neq\emptyset
}}
\notag
\\
&\leq
\frac{|\G(\yy)|\alphaA}
{|\im\A|}
+\betaA
\notag
\\
&\leq
2^{-n[r_Y-H(U|Y)-\zeta_{\U|\Y}(2\gamma|2\gamma)]}\alphaA
+\betaA.
\label{eq:proof-bc-collision}
\end{align}
the last inequality comes from the definition of $r_Y$ and the fact that
\[
|\G(\yy)|\leq 2^{n[H(U|Y)+\zeta_{\U|\Y}(2\gamma|2\gamma)]}.
\]
Then we have
\begin{align}
&
E_{\sfA\sfA'\sfB\sfB'\sfaa\sfbb}\lrB{
  p_{\sfmm\sfww YZ}(\cS_1\cap\cS_2\cap\cS_3^c)
}
\notag
\\*
&\leq
E_{\sfA\sfA'\sfB\sfB'\sfaa\sfbb\sfmm\sfww}\lrB{
  \sum_{(\uu,\vv)\in\T}
  \chi(\hg_{\sfA\sfA'\sfB\sfB'}(\sfaa,\sfmm,\sfbb,\sfww)=(\uu,\vv))
  \sum_{\yy\in\T_{Y|UVX,\gamma}(\uu,\vv,\xx)}
  \mu_{Y|X}(\yy|\xx)\chi(g_{\sfA}(\sfaa|\yy)\neq \uu)
}
\notag
\\
&\leq
E_{\sfA\sfA'\sfB\sfB'\sfaa\sfbb\sfmm\sfww}\left[
  \sum_{(\uu,\vv)\in\T}
  \chi(\sfA\uu=\sfaa)\chi(\sfA'\uu=\sfmm)
  \chi(\sfB\vv=\sfbb)\chi(\sfB'\vv=\sfww)
  \sum_{\yy\in\T_{Y|UVX,\gamma}(\uu,\vv,\xx)}
  \mu_{Y|X}(\yy|\xx)
  \chi(g_{\sfA}(\sfaa|\yy)\neq \uu)
\right]
\notag
\\
&=
\sum_{(\uu,\vv)\in\T}
\sum_{\yy\in\T_{Y|UVX,\gamma}(\uu,\vv,\xx)}
\mu_{Y|X}(\yy|\xx)
E_{\sfA}
\lrB{
  \chi(g_{\sfA}(\sfA\uu|\yy)\neq \uu)
  E_{\sfA'\sfB\sfB'\sfaa\sfbb\sfmm\sfww}
  \lrB{
    \chi(\sfA\uu=\sfaa)
    \chi(\sfA'\uu=\sfmm)
    \chi(\sfB\vv=\sfbb)
    \chi(\sfB'\vv=\sfww)
  }
}
\notag
\\
&=
\frac 1{|\im\A||\im\A'||\im\B||\im\B'|}
\sum_{(\uu,\vv)\in\T}
\sum_{\yy\in\T_{Y|UVX,\gamma}(\uu,\vv,\xx)}
\mu_{Y|X}(\yy|\xx)
E_{\sfA}\lrB{
  \chi(g_{\sfA}(\sfA\uu|\yy)\neq \uu)
}
\notag
\\
&\leq
\lrB{
  2^{-n[r_Y-H(U|Y)-\zeta_{\U|\Y}(2\gamma|2\gamma)]}\alphaA
  +\betaA
}
\sum_{(\uu,\vv)\in\T}
\frac 1{|\im\A||\im\A'||\im\B||\im\B'|}
\notag
\\
&\leq
2\kappa
\lrB{
  2^{-n[r_Y-H(U|Y)-\zeta_{\U|\Y}(2\gamma|2\gamma)]}\alphaA
  +\betaA
}
\notag
\\
&\leq
\frac{\delta}4
\label{eq:error3}
\end{align}
for all $\delta>0$ and sufficiently large $n$,
where the second equality comes from Lemma~\ref{lem:E} that appears
in Appendix~\ref{sec:lemE},
the fourth inequality comes from (\ref{eq:T}),
and the last inequality comes from (\ref{eq:k2a}),
(\ref{eq:eAzeta}),
and the conditions (\ref{eq:alpha}), (\ref{eq:beta}) of $(\bcA,\bpA)$.
Similarly, we have
\begin{align}
E_{\sfA\sfA'\sfB\sfB'\sfaa\sfbb}\lrB{
  p_{\sfmm\sfww YZ}(\cS_1\cap\cS_2\cap\cS_4^c)
}
&\leq
2\kappa
\lrB{
  2^{-n[r_Z-H(V|Z)-\zeta_{\V|\Z}(2\gamma|2\gamma)]}\alphaB
  +\betaB
}
\notag
\\
&\leq
\frac{\delta}4
\label{eq:error4}
\end{align}
for all $\delta>0$ and sufficiently large $n$.

Finally, from (\ref{eq:error0}), (\ref{eq:error1}), (\ref{eq:error2}),
(\ref{eq:error3}),
and (\ref{eq:error4}),
we have the fact that
for all $\delta>0$ and sufficiently large $n$
there are $A\in\A$, $A'\in\A'$, $B\in\B$, $B'\in\B'$,
$\ba\in\im\A$, and $\bb\in\im\B$
satisfying (\ref{eq:error}).
\hfill\QED

\section{Conclusion}
The constructions of the Slepian-Wolf source code and the broadcast
channel code were presented.
The proof of the theorems is based on the notion of a strong hash
property for an ensemble of functions, where two lemmas called
`collision-resistance property' and `saturation property'
introduced~\cite{HASH} are extended from a single domain to multiple
domains.
Since an ensemble of sparse matrices has a strong hash property,
we can construct codes by using sparse matrices and it is expected that
we can use the efficient approximation algorithms for encoding/decoding.
It should be noted that the capacity region for the general broadcast
channel coding is unknown and we hope that our approach give us a hint
for deriving the general capacity region.

\appendix

\subsection{Basic Property of Ensemble}
\label{sec:lemE}

We review the following lemma, which is proved in \cite{HASH}.

\begin{lem}[{\cite[Lemma 9]{HASH}}]
\label{lem:E}
Assume that random variables $\sfA$ and $\sfaa$ are independent.
Then,
\begin{align*}
E_{\sfaa}\lrB{\chi(A\uu=\sfaa)}
&=\frac 1{|\im\A|}
\end{align*}
for any $A$ and $\uu$
and
\begin{align*}
E_{\sfA\sfaa}\lrB{\chi(\sfA\uu=\sfaa)}
&=\frac 1{|\im\A|}
\end{align*}
for any $\uu$.
\end{lem}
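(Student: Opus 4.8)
The plan is to compute both expectations directly from the definition of $p_{\sfaa}$ as the uniform distribution on $\im\A$ fixed earlier in the paper. The only point that requires any care is the observation that for every $A\in\A$ and every $\uu\in\U^n$ one has $A\uu\in\im A\subseteq\im\A$, so that $p_{\sfaa}(A\uu)=1/|\im\A|$ rather than $0$; this is what makes the indicator actually contribute.

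First I would handle the case of a fixed function $A$ and a fixed $\uu$. Expanding the expectation over $\sfaa$ and using the displayed containment,
\[
  E_{\sfaa}\lrB{\chi(A\uu=\sfaa)}
  =\sum_{\ba\in\im\A}p_{\sfaa}(\ba)\chi(A\uu=\ba)
  =p_{\sfaa}(A\uu)
  =\frac 1{|\im\A|},
\]
which is the first identity.

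For the second identity I would invoke the independence of $\sfA$ and $\sfaa$ to write the joint expectation as an iterated expectation, apply the first identity to the inner expectation over $\sfaa$ (whose value $1/|\im\A|$ does not depend on $A$), and then sum the probabilities $p_{\sfA}(A)$ to $1$:
\[
  E_{\sfA\sfaa}\lrB{\chi(\sfA\uu=\sfaa)}
  =\sum_{A\in\A}p_{\sfA}(A)\,E_{\sfaa}\lrB{\chi(A\uu=\sfaa)}
  =\sum_{A\in\A}p_{\sfA}(A)\cdot\frac 1{|\im\A|}
  =\frac 1{|\im\A|}.
\]

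I do not expect a real obstacle here: the statement is a direct bookkeeping consequence of the definition of $p_{\sfaa}$ and of the independence assumption. The one subtlety worth stating explicitly in the write-up is the containment $\im A\subseteq\im\A$, which guarantees that $A\uu$ always lies in the support of $p_{\sfaa}$ and hence that the indicator is not identically zero under the expectation.
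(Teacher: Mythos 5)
Your proof is correct, and it is the direct computation one would expect: expand the expectation over the uniform distribution $p_{\sfaa}$, use $A\uu\in\im A\subseteq\im\A$ to see that the indicator picks out a point of mass $1/|\im\A|$, and then average over $\sfA$ using independence. The paper itself does not reproduce a proof (it simply cites \cite[Lemma 9]{HASH}), so there is nothing to compare against beyond noting that your argument is the standard one and is complete.
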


\subsection{Proof of Lemma \ref{lem:whash}}
\label{sec:wash}
If an ensemble satisfies (H4), then we have
\begin{align}
&
\sum_{\substack{
    \uu\in\T
    \\
    \uu'\in\T'
}}
\pA\lrsb{\lrb{A: A\uu = A\uu'}}
\notag
\\*
&=
\sum_{\uu\in\T\cap\T'}
\pA\lrsb{\lrb{A: A\uu = A\uu'}}
+
\sum_{\uu\in\T}
\sum_{\substack{
    \uu'\in\T'\setminus\{\uu\}
    \\
    \pA(\{A: A\uu = A\uu'\})\leq\frac{\alphaA}{|\im\A|}
}}
\pA\lrsb{\lrb{A: A\uu = A\uu'}}
\notag
\\*
&\quad
+
\sum_{\uu\in\T}
\sum_{\substack{
    \uu'\in\T'\setminus\{\uu\}
    \\
    \pA(\{A: A\uu = A\uu'\})>\frac{\alphaA}{|\im\A|}
}}
\pA\lrsb{\lrb{A: A\uu = A\uu'}}
\notag
\\
&\leq
|\T\cap\T'|
+
\sum_{\uu\in\T}
\sum_{\substack{
    \uu'\in\T'\setminus\{\uu\}
    \\
    \pA(\{A: A\uu = A\uu'\})\leq\frac{\alphaA}{|\im\A|}
}}
\frac{\alphaA}{|\im\A|}
+
\sum_{\uu\in\T}
\betaA
\notag
\\
&\leq
|\T\cap\T'|
+
\frac{|\T||\T'|\alphaA}{|\im\A|}
+
|\T|\betaA
\notag
\\
&\leq
|\T\cap\T'|
+
\frac{|\T||\T'|\alphaA}{|\im\A|}
+
\min\{|\T|,|\T'|\}\betaA
\end{align} 
for any $\T$ and $\T'$ satisfying $|\T|\leq |\T'|$.
\hfill\QED

\subsection{Proof of Lemma \ref{lem:hash-AB}}
\label{sec:hash-AB}
Let
\begin{align*}
p_{\sfA,\uu,\uu'}&\equiv \pA(\{A: A\uu=A\uu'\})
\\
p_{\sfA',\uu,\uu'}&\equiv \pAp(\{A': A'\uu=A'\uu'\}).
\end{align*}
Then we have
\begin{align}
&
\sum_{\substack{
    \uu'\in\U^n\setminus\{\uu\}
    \\
    p_{\sfhA,\uu,\uu'}>\frac{\alpha_{\sfhA}}{|\im\hcA|}
}}
p_{\sfhA}(\{\hA: \hA\uu=\hA\uu'\})
\notag
\\*
&\leq
\sum_{\substack{
    \uu'\in\U^n\setminus\{\uu\}
    \\
    p_{\sfA,\uu,\uu'}p_{\sfA',\uu,\uu'}
    >\frac{\alphaA\alphaAp}{|\im\A||\im\A'|}
}}
p_{\sfA,\uu,\uu'}p_{\sfA',\uu,\uu'}
\notag
\\
&=
\sum_{\substack{
    \uu'\in\U^n\setminus\{\uu\}
    \\
    p_{\sfA,\uu,\uu'}p_{\sfA',\uu,\uu'}
    >\frac{\alphaA\alphaAp}{|\im\A||\im\A'|}
    \\
    p_{\sfA,\uu,\uu'}>\frac{\alphaA}{|\im\A|}
}}
p_{\sfA,\uu,\uu'}p_{\sfA',\uu,\uu'}
+
\sum_{\substack{
    \uu'\in\U^n\setminus\{\uu\}
    \\
    p_{\sfA,\uu,\uu'}p_{\sfA',\uu,\uu'}
    >\frac{\alphaA\alphaAp}{|\im\A||\im\A'|}
    \\
    p_{\sfA,\uu,\uu'}\leq\frac{\alphaA}{|\im\A|}
}}
p_{\sfA,\uu,\uu'}p_{\sfA',\uu,\uu'}
\notag
\\
&\leq
\sum_{\substack{
    \uu'\in\U^n\setminus\{\uu\}
    \\
    p_{\sfA,\uu,\uu'}>\frac{\alphaA}{|\im\A|}
}}
p_{\sfA,\uu,\uu'}p_{\sfA',\uu,\uu'}
+
\sum_{\substack{
    \uu'\in\U^n\setminus\{\uu\}
    \\
    p_{\sfA',\uu,\uu'}>\frac{\alphaAp}{|\im\A'|}
}}
p_{\sfA,\uu,\uu'}p_{\sfA',\uu,\uu'}
\notag
\\
&\leq
\sum_{\substack{
    \uu'\in\U^n\setminus\{\uu\}
    \\
    p_{\sfA,\uu,\uu'}>\frac{\alphaA}{|\im\A|}
}}
p_{\sfA,\uu,\uu'}
+
\sum_{\substack{
    \uu'\in\U^n\setminus\{\uu\}
    \\
    p_{\sfA',\uu,\uu'}>\frac{\alphaAp}{|\im\A'|}
}}
p_{\sfA',\uu,\uu'}
\notag
\\
&=
\betaA+\betaAp
\notag
\\
&=
\beta_{\sfhA},
\label{eq:proof-AoplusB}
\end{align}
where the first inequality comes from the fact that
$\im\hcA\subset\im\A\times\im\A'$, the first equality comes from the
fact that $\sfA$ and $\sfA'$ are mutually independent and the last
inequality comes from the fact that $p_{\sfA,\uu,\uu'}\leq 1$ and
$p_{\sfA',\uu,\uu'}\leq 1$.
Since $(\aalpha_{\sfhA},\bbeta_{\sfhA})$ satisfies (\ref{eq:alpha}) and
(\ref{eq:beta}), then we have the fact that $(\bhcA,\bp_{\sfhA})$ has an
$(\aalpha_{\sfhA},\bbeta_{\sfhA})$-hash property.
\hfill\QED

\subsection{Proof of Lemmas \ref{lem:hash-AxB-CRP} and \ref{lem:hash-AxB-SP}}
\label{sec:hash-AxB}
Let
\begin{align*}
&p_{\sfA,\uu,\uu'}\equiv \pA(\{A: A\uu=A\uu'\})
\\
&p_{\sfB,\vv,\vv'}\equiv \pB(\{A: B\vv=A\vv'\}).
\end{align*}
For $(\uu',\vv')\in\T'$, we have
\begin{align}
\sum_{\substack{
    (\uu,\vv)\in\T
    \\
    p_{\sfA,\uu,\uu'}\leq\frac{\alphaA}{|\im\A|}
    \\
    p_{\sfB,\vv,\vv'}>\frac{\alphaB}{|\im\B|}
}}
p_{\sfA,\uu,\uu'}
p_{\sfB,\vv,\vv'}
&=
\sum_{\substack{
    \vv\in\T_{\V}
    \\
    p_{\sfB,\vv,\vv'}>\frac{\alphaB}{|\im\B|}
}}
p_{\sfB,\vv,\vv'}
\sum_{\substack{
    \uu\in\T_{\U|\V}(\vv)
    \\
    p_{\sfA,\uu,\uu'}\leq\frac{\alphaA}{|\im\A|}
}}
p_{\sfA,\uu,\uu'}
\notag
\\
&\leq
\sum_{\substack{
    \vv\in\T_{\V}
    \\
    p_{\sfB,\vv,\vv'}>\frac{\alphaB}{|\im\B|}
}}
p_{\sfB,\vv,\vv'}
\sum_{\substack{
    \uu\in\T_{\U|\V}(\vv)
    \\
    p_{\sfA,\uu,\uu'}\leq\frac{\alphaA}{|\im\A|}
}}
\frac{\alphaA}{|\im\A|}
\notag
\\
&\leq
\frac{\lrB{\disp\max_{\vv\in\T_{\V}}|\T_{\U|\V}(\vv)|}\alphaA}{|\im\A|}
\sum_{\substack{
    \vv\in\T_{\V}
    \\
    p_{\sfB,\vv,\vv'}>\frac{\alphaB}{|\im\B|}
}}
p_{\sfB,\vv,\vv'}
\notag
\\
&
\leq
\frac{\lrB{\disp\max_{\vv\in\T_{\V}}|\T_{\U|\V}(\vv)|}\alphaA}{|\im\A|}
\lrB{
  \sum_{\substack{
      \vv\in\V^n\setminus\{\vv'\}
      \\
      p_{\sfB,\vv,\vv'}>\frac{\alphaB}{|\im\B|}
  }}
  p_{\sfB,\vv,\vv'}
  +p_{\sfB,\vv,\vv}
}
\notag
\\
&\leq
\frac{\lrB{\disp\max_{\vv\in\T_{\V}}|\T_{\U|\V}(\vv)|}\alphaA[\betaB+1]}
{|\im\A|}.
\label{eq:TUV}
\end{align}
Similarly, we have
\begin{align}
\begin{split}
  &
  \sum_{\substack{
      (\uu,\vv)\in\T
      \\
      p_{\sfA,\uu,\uu'}>\frac{\alphaA}{|\im\A|}
      \\
      p_{\sfB,\vv,\vv'}\leq\frac{\alphaB}{|\im\B|}
  }}
  p_{\sfA,\uu,\uu'}
  p_{\sfB,\vv,\vv'}
  \leq
  \frac{\lrB{\disp\max_{\uu\in\T_{\U}}|\T_{\V|\U}(\uu)|}\alphaB[\betaA+1]}
  {|\im\B|}.
\end{split} 
\label{eq:TVU}
\end{align}
for $(\uu',\vv')\in\T'$.
We also have
\begin{align}
&
\sum_{\substack{
    (\uu,\vv)\in\T
    \\
    p_{\sfA,\uu,\uu'}>\frac{\alphaA}{|\im\A|}
    \\
    p_{\sfB,\vv,\vv'}>\frac{\alphaB}{|\im\B|}
}}
p_{\sfA,\uu,\uu'}
p_{\sfB,\vv,\vv'}
\notag
\\*
&\leq
\lrbar{\T\cap\{(\uu',\vv')\}}
+
\sum_{\substack{
    \uu\in\T_{\U}\setminus\{\uu'\}
    \\
    p_{\sfA,\uu,\uu'}>\frac{\alphaA}{|\im\A|}
}}
p_{\sfA,\uu,\uu'}p_{\sfB,\vv,\vv}
+
\sum_{\substack{
    \vv\in\T_{\V}\setminus\{\vv'\}
    \\
    p_{\sfB,\vv,\vv'}>\frac{\alphaB}{|\im\B|}
}}
p_{\sfA,\uu,\uu}p_{\sfB,\vv,\vv'}
+
\sum_{\substack{
    (\uu,\vv)\in\T
    \\
    \uu\neq\uu',\vv\neq\vv'
    \\
    p_{\sfA,\uu,\uu'}>\frac{\alphaA}{|\im\A|}
    \\
    p_{\sfB,\vv,\vv'}>\frac{\alphaB}{|\im\B|}
}}
p_{\sfA,\uu,\uu'}p_{\sfB,\vv,\vv'}
\notag
\\
&\leq
\lrbar{\T\cap\{(\uu',\vv')\}}
+
\sum_{\substack{
    \uu\in\U^n\setminus\{\uu'\}
    \\
    p_{\sfA,\uu,\uu'}>\frac{\alphaA}{|\im\A|}
}}
p_{\sfA,\uu,\uu'}
+
\sum_{\substack{
    \vv\in\V^n\setminus\{\vv'\}
    \\
    p_{\sfB,\vv,\vv'}>\frac{\alphaB}{|\im\B|}
}}
p_{\sfB,\vv,\vv'}
+
\sum_{\substack{
    \uu\in\U^n\setminus\{\uu'\}
    \\
    p_{\sfA,\uu,\uu'}>\frac{\alphaA}{|\im\A|}
}}
p_{\sfA,\uu,\uu'}
\sum_{\substack{
    \vv\in\V^n\setminus\{\vv'\}
    \\
    p_{\sfB,\vv,\vv'}>\frac{\alphaB}{|\im\B|}
}}
p_{\sfB,\vv,\vv'}
\notag
\\
&\leq
\lrbar{\T\cap\{(\uu',\vv')\}}+
\betaA+\betaB+\betaA\betaB
\label{eq:Tsmall}
\end{align}
for $(\uu',\vv')\in\T'$.
Finally, we have
\begin{align}
&
\sum_{\substack{
    (\uu,\vv)\in\T
    \\
    (\uu',\vv')\in\T'
}}
p_{\sfA\sfB}\lrsb{\lrb{(A,B): A\uu=A\uu',B\vv=B\vv'}}
\notag
\\*
&
=
\sum_{(\uu',\vv')\in\T'}
\sum_{\substack{
    (\uu,\vv)\in\T
    \\
    p_{\sfA,\uu,\uu'}\leq\frac{\alphaA}{|\im\A|}
    \\
    p_{\sfB,\vv,\vv'}\leq\frac{\alphaB}{|\im\B|}
}}
p_{\sfA,\uu,\uu'}
p_{\sfB,\vv,\vv'}
+
\sum_{(\uu',\vv')\in\T'}
\sum_{\substack{
    (\uu,\vv)\in\T
    \\
    p_{\sfA,\uu,\uu'}\leq\frac{\alphaA}{|\im\A|}
    \\
    p_{\sfB,\vv,\vv'}>\frac{\alphaB}{|\im\B|}
}}
p_{\sfA,\uu,\uu'}
p_{\sfB,\vv,\vv'}
\notag
\\*
&\quad
+
\sum_{(\uu',\vv')\in\T'}
\sum_{\substack{
    (\uu,\vv)\in\T
    \\
    p_{\sfA,\uu,\uu'}>\frac{\alphaA}{|\im\A|}
    \\
    p_{\sfB,\vv,\vv'}\leq\frac{\alphaB}{|\im\B|}
}}
p_{\sfA,\uu,\uu'}
p_{\sfB,\vv,\vv'}
+
\sum_{(\uu',\vv')\in\T'}
\sum_{\substack{
    (\uu,\vv)\in\T
    \\
    p_{\sfA,\uu,\uu'}>\frac{\alphaA}{|\im\A|}
    \\
    p_{\sfB,\vv,\vv'}>\frac{\alphaB}{|\im\B|}
}}
p_{\sfA,\uu,\uu'}
p_{\sfB,\vv,\vv'}
\notag
\\
&
\leq
\sum_{(\uu',\vv')\in\T'}
\frac{|\T|\alphaA\alphaB}{|\im\A||\im\B|}
+
\sum_{(\uu',\vv')\in\T'}
\frac{\lrB{\disp\max_{\vv\in\T_{\V}}|\T_{\U|\V}(\vv)|}\alphaA[\betaB+1]}
{|\im\A|}
+
\sum_{(\uu',\vv')\in\T'}
\frac{\lrB{\disp\max_{\uu\in\T_{\U}}|\T_{\V|\U}(\uu)|}\alphaB[\betaA+1]}
{|\im\B|}
\notag
\\*
&\quad
+\sum_{(\uu',\vv')\in\T'}
\lrB{\lrbar{\T\cap\{(\uu',\vv')\}}
  +\beta_A+\beta_B+\beta_A\beta_B}
\notag
\\
\begin{split}
  &
  \leq
  \lrbar{\T\cap\T'}
  +\frac{|\T||\T'|\alphaA\alphaB}{|\im\A||\im\B|}
  +
  \frac{|\T'|\lrB{\disp\max_{\vv\in\T_{\V}}|\T_{\U|\V}(\vv)|}\alphaA[\betaB+1]}
  {|\im\A|}
  +
  \frac{|\T'|\lrB{\disp\max_{\uu\in\T_{\U}}|\T_{\V|\U}(\uu)|}\alphaB[\betaA+1]}
  {|\im\B|}
  \\*
  &\quad
  +|\T'|\lrB{\betaA+\betaB+\betaA\betaB},
\end{split}
\end{align}
where the first inequality comes from
(\ref{eq:TUV})--(\ref{eq:Tsmall}).
This implies that
the joint ensemble $(\bcA\times\bcB,\bpAB)$
satisfies (\ref{eq:whash}) by letting
\begin{align*}
\alpha
&\equiv\alphaA\alphaB
\\
\beta
&\equiv
\frac{\lrB{\disp\max_{\vv\in\T_{\V}}|\T_{\U|\V}(\vv)|}\alphaA[\betaB+1]}
{|\im\A|}
+
\frac{\lrB{\disp\max_{\uu\in\T_{\U}}|\T_{\V|\U}(\uu)|}
  \alphaB[\betaA+1]}
{|\im\B|}
+\betaA+\betaB+\betaA\betaB.
\end{align*} 
Then we have Lemmas \ref{lem:hash-AxB-CRP} and \ref{lem:hash-AxB-SP}
from Lemmas~\ref{lem:collision} and~\ref{lem:saturation}, respectively.
\hfill\QED

\subsection{Proof of Lemma \ref{lem:multi-lemma}}
\label{sec:proof-multi}

It is easy to show (\ref{eq:multi-alpha}) and (\ref{eq:multi-beta})
from the properties (\ref{eq:alpha}) and (\ref{eq:beta})
of $(\aalpha_{\sfA_j},\bbeta_{\sfA_j})$ for all $j\in\K$.
In the following, we show (\ref{eq:multi-CRP}) and (\ref{eq:multi-SP}).
Let $p_{\uu_j,\uu'_j}$ be defined as
\begin{align*}
p_{\uu_j,\uu'_j}
&\equiv
p_{\sfA_j}\lrsb{\lrb{
    A_j:
    A_j\uu_j=A_j\uu'_j
}}.
\end{align*}

First, we have
\begin{align}
&
\sum_{\substack{
    \uu_{\K}\in\T
    \\
    p_{\uu_j,\uu'_j}
    \leq\frac{\alpha_{\sfA_j}}{|\im\A_j|}
    \ \text{for all $j\in\J$}
    \\
    p_{\uu_j,\uu'_j}
    >\frac{\alpha_{\sfA_j}}{|\im\A_j|}
    \ \text{for all $j\in\J^c$}
    \\
}}
\prod_{j\in\K} p_{\uu_j,\uu'_j}
\notag
\\*
&=
\sum_{\substack{
    \uu_{\J^c}\in\T_{\U_{\J^c}}
    \\
    p_{\uu_j,\uu'_j}
    >\frac{\alpha_{\sfA_j}}{|\im\A_j|}
}}
\prod_{j\in\J^c} p_{\uu_j,\uu'_j}
\sum_{\substack{
    \uu_{\J}\in\T_{\U_{\J}|\U_{\J^c}}\lrsb{\uu_{\J^c}}
    \\
    p_{\uu_j,\uu'_j}
    \leq\frac{\alpha_{\sfA_j}}{|\im\A_j|}
}}
\prod_{j\in\J}p_{\uu_j,\uu'_j}
\notag
\\
&\leq
\lrB{\disp
  \max_{\uu_{\J^c}\in\T_{\U_{\J^c}}}
  \lrbar{\T_{\U_{\J}|\U_{\J^c}}\lrsb{\uu_{\J^c}}}
}
\lrB{\prod_{j\in\J}\frac{\alpha_{\sfA_j}}{|\im\A_j|}}
\sum_{\substack{
    \uu_{\J^c}\in\T_{\U_{\J^c}}
    \\
    p_{\uu_j,\uu'_j}
    >\frac{\alpha_{\sfA_j}}{|\im\A_j|}
}}
\prod_{j\in\J^c} p_{\uu_j,\uu'_j}
\notag
\\
&\leq
\lrB{\disp
  \max_{\uu_{\J^c}\in\T_{\U_{\J^c}}}
  \lrbar{\T_{\U_{\J}|\U_{\J^c}}\lrsb{\uu_{\J^c}}}
}
\lrB{\prod_{j\in\J}\frac{\alpha_{\sfA_j}}{|\im\A_j|}}
\prod_{j\in\J^c}
\lrB{
  \sum_{\substack{
      \uu_j\in\U_j\setminus\lrb{\uu'_j}
      \\
      p_{\uu_j,\uu'_j}
      >\frac{\alpha_{\sfA_j}}{|\im\A_j|}
  }}
  p_{\uu_j,\uu'_j}
  +
  p_{\uu'_j,\uu'_j}
}
\notag
\\
&\leq
\lrB{\disp
  \max_{\uu_{\J^c}\in\T_{\U_{\J^c}}}
  \lrbar{\T_{\U_{\J}|\U_{\J^c}}\lrsb{\uu_{\J^c}}}
}
\lrB{\prod_{j\in\J}\frac{\alpha_{\sfA_j}}{|\im\A_j|}}
\prod_{j\in\J^c}
\lrB{\beta_{\sfA_j}+1}
\notag
\\
&=
\frac{
  \lrbar{\T_{\J|\J^c}}
  \alpha_{\sfA_{\J}}\lrB{\beta_{\sfA_{\J^c}}+1}
}
{\prod_{j\in\J}|\im\A_j|}
\label{eq:lemma-multi1}
\end{align}
for $\uu'_{\K}\in\T'$ and a non-empty set $\J\subsetneq\K$,
where the second inequality comes from the property (\ref{eq:hash}) of
$(\A_j,p_{\sfA_j})$ for all $j\in\K$ and the fact that
$p_{\uu'_j,\uu'_j}=1$, and the last equality comes from (\ref{eq:maxT}).

Next, we have
\begin{align}
\sum_{\substack{
    \uu_{\K}\in\T\setminus\lrb{\uu'_{\K}}
    \\
    p_{\uu_j,\uu'_j}
    >\frac{\alpha_{\sfA_j}}{|\im\A_j|}
    \ \text{for all $j\in\K$}
}}
\prod_{j\in\K} p_{\uu_j,\uu'_j}
&\leq
\sum_{\substack{
    \uu_{\K}\in\U_{\K}
    \\
    p_{\uu_j,\uu'_j}
    >\frac{\alpha_{\sfA_j}}{|\im\A_j|}
    \ \text{for all $j\in\K$}
}}
\prod_{j\in\K} p_{\uu_j,\uu'_j}
-
\prod_{j\in\K} p_{\uu'_j,\uu'_j}
\notag
\\
&=
\prod_{j\in\J}
\sum_{\substack{
    \uu_j\in\U_j
    \\
    p_{\uu_j,\uu'_j}
    >\frac{\alpha_{\sfA_j}}{|\im\A_j|}
}}
p_{\uu_j,\uu'_j}
-1
\notag
\\
&=
\prod_{j\in\J}
\lrB{
  \sum_{\substack{
      \uu_j\in\U_j\setminus\{\uu'_j\}
      \\
      p_{\uu_j,\uu'_j}
      >\frac{\alpha_{\sfA_j}}{|\im\A_j|}
  }}
  p_{\uu_j,\uu'_j}
  +
  p_{\uu'_j,\uu'_j}
}
-1
\notag
\\
&=
\prod_{j\in\K}\lrB{\beta_{\sfA_j}+1}-1
\notag
\\
&
=
\frac{|\T_{\emptyset|\K}|\alpha_{\sfA_{\emptyset}}\lrB{\beta_{\sfA_{\K}}+1}}
{\prod_{j\in\emptyset}|\im\A_j|}
-1
\label{eq:lemma-multi2}
\end{align}
for $\uu'_{\K}\in\T'$, where we use 
the fact that $p_{\uu'_j,\uu'_j}=1$
in the second inequality,
and the last inequality comes from
(\ref{eq:maxT})--(\ref{eq:multi-beta-def}).

Finally, we have
\begin{align}
&\sum_{\substack{
    \uu_{\K}\in\T
    \\
    \uu'_{\K}\in\T'
}}
p_{\sfA_{\K}}\lrsb{\lrb{
    A_{\K}:
    A_j\uu_j=A_j\uu'_j
    \ \text{for all}\ j\in\K
}}
\notag
\\*
&=
\sum_{\uu'_{\K}\in\T'}
\sum_{\uu_{\K}\in\T}
\prod_{j=1}^k p_{\uu_j,\uu'_j}
\notag
\\
&\leq
|\T\cap\T'|
+
\sum_{\uu'_{\K}\in\T'}
\sum_{\substack{
    \uu_{\K}\in\T
    \\
    p_{\uu_j,\uu'_j}
    \leq\frac{\alpha_{\sfA_j}}{|\im\A_j|}
    \ \text{for all $j\in\K$}
}}
\prod_{j\in\K} p_{\uu_j,\uu'_j}
+
\sum_{\uu'_{\K}\in\T'}
\sum_{\substack{
    \J\subsetneq\K
    \\
    \J\neq\emptyset
}}
\sum_{\substack{
    \uu_{\K}\in\T
    \\
    p_{\uu_j,\uu'_j}
    \leq\frac{\alpha_{\sfA_j}}{|\im\A_j|}
    \ \text{for all $j\in\J$}
    \\
    p_{\uu_j,\uu'_j}
    >\frac{\alpha_{\sfA_j}}{|\im\A_j|}
    \ \text{for all $j\in\J^c$}
}}
\prod_{j\in\K} p_{\uu_j,\uu'_j}
\notag
\\*
&\quad
+
\sum_{\uu'_{\K}\in\T'}
\sum_{\substack{
    \uu_{\K}\in\T\setminus\lrb{\uu'_{\K}}
    \\
    p_{\uu_j,\uu'_j}
    >\frac{\alpha_{\sfA_j}}{|\im\A_j|}
    \ \text{for all}\ j\in\K
}}
\prod_{j\in\K} p_{\uu_j,\uu'_j}
\notag
\\
\begin{split}
  &\leq
  |\T\cap\T'|
  +
  \sum_{\uu'_{\K}\in\T'}
  \sum_{\substack{
      \uu_{\K}\in\T
      \\
      p_{\uu_j,\uu'_j}
      \leq \frac{\alpha_{\sfA_j}}{|\im\A_j|}
      \ \text{for all}\ j\in\K
  }}
  \prod_{j\in\K}
  \frac{\alpha_{\sfA_j}}{|\im\A_j|}
  +
  \sum_{\uu'_{\K}\in\T'}
  \sum_{\substack{
      \J\subsetneq\K
      \\
      \J\neq\emptyset
  }}
  \frac{\lrbar{\T_{\J|\J^c}}\alpha_{\sfA_{\J}}\lrB{\beta_{\sfA_{\J^c}}+1}
  }
  {\prod_{j\in\J}|\im\A_j|}
  \\*
  &\quad
  +
  \sum_{\uu'_{\K}\in\T'}
  \lrB{
    \frac{|\T_{\emptyset|\K}|\alpha_{\sfA_{\emptyset}}\lrB{\beta_{\sfA_{\K}}+1}}
    {\prod_{j\in\emptyset}\im\A_j}
    -1
  }
\end{split}
\notag
\\
&
\leq
|\T\cap\T'|
+
\frac{|\T||\T'|\alpha_{\sfA_{\K}}}
{|\im\A_{\K}|}
+
|\T'|
\lrB{
  \sum_{\J\subsetneq\K}
  \frac{\lrbar{\T_{\J|\J^c}}\alpha_{\sfA_{\J}}\lrB{\beta_{\sfA_{\J^c}}+1}
  }
  {\prod_{j\in\J}|\im\A_j|}
  -1
}
\label{eq:lemma-multi3}
\end{align}
for all $\T,\T'\subset\times_{j=1}^k\U_j$,
where the first inequality comes from 
the fact that
\[
p_{\uu'_j,\uu'_j}=1
>\frac{\alpha_{\sfA_j}}{|\im\A_j|}
\]
for all $j\in\K$,
the second inequality comes from 
(\ref{eq:lemma-multi1}) and (\ref{eq:lemma-multi2}).
This implies that
the joint ensemble $(\bcA_{\K},\bp_{\sfA_{\K}})$
satisfies (\ref{eq:whash}) by letting
\begin{align*}
\alpha
&\equiv
\alpha_{\sfA_{\K}}
\\
\beta
&\equiv
\sum_{\J\subsetneq\K}
\frac{\lrbar{\T_{\J|\J^c}}\alpha_{\sfA_{\J}}\lrB{\beta_{\sfA_{\J^c}}+1}
}
{\prod_{j\in\J}|\im\A_j|}
-1
\end{align*} 
Then we have (\ref{eq:multi-CRP}) and (\ref{eq:multi-SP})
from Lemmas~\ref{lem:collision} and~\ref{lem:saturation},
respectively, where we use the relation
\[
\beta_{\sfA_{\K}}=
\frac{
  |\G_{\emptyset|\K}|\alpha_{\sfA_{\emptyset}}
  \lrB{\beta_{\sfA_{\K}}+1}
}
{\prod_{j\in\emptyset}|\im\A_j|}
-1
\]
for the proof of (\ref{eq:multi-CRP}).
\hfill\QED

\subsection{Proof of Lemma \ref{lem:multi-lemma-bound}}
\label{sec:proof-multi-bound}
Since $\T$ is a subset of $\T_{U_{\K},\gamma}$,
we have
\begin{align*}
\T_{\U_{\J^c}}&\subset\T_{U_{\J^c},\gamma}
\\
\T_{\U_{\J}|\U_{\J^c}}(\uu_{\J^c})
&\subset \T_{U_{\J}\left|U_{\J^c}\right.,\gamma}(\uu_{\J^c})
\end{align*}
from Lemma~\ref{lem:typical-trans}.
From Lemma~\ref{lem:typical-number}, we have
\begin{align*}
\max_{\uu_{\J^c}\in\T_{\U_{\J^c}}}
\lrbar{\T_{\U_{\J}|\U_{\J^c}}\lrsb{\uu_{\J^c}}}
&\leq
\max_{\uu_{\J^c}\in\T_{U_{\J^c},\gamma}}
\lrbar{\T_{U_{\J}|U_{\J^c},\gamma}\lrsb{\uu_{\J^c}}}
\notag
\\
&\leq
2^{n\lrB{H(U_{\J}|U_{\J^c})+\eta_{\U_{\J}|\U_{\J^c}}(\gamma|\gamma)}}.
\end{align*}
\hfill\QED

\subsection{Proof of Lemma \ref{lem:multi-binary}}
\label{sec:proof-multi-binary}

Before proving the lemma, we prepare 
some definitions for a convex polytope (see~\cite{Z06}).
Let $\cP\subset\Real^k$ be a convex polytope.
Then the linear inequality $\cc\vv\leq c_0$ is {\em valid} for $\cP$
if this inequality is satisfied for all $\vv\in\cP$.
The {\em face} of the polytope $\cP$ is a set defined by
\[
\cP\cap\{\vv\in\Real^k:\cc\vv=c_0\},
\]
where $\cc\vv\leq c_0$ is valid for $\cP$.
If the dimension of a face is $0$, it is called a {\em vertex}.
\begin{lem}[{\cite[Proposition 2.3(ii)]{Z06}}]
\label{lem:face}
Every intersection of the faces of $\cP$ is  a face of $\cP$.
\end{lem}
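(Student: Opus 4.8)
The plan is to reduce everything to the intersection of \emph{two} faces and dispose of that case by an elementary ``add the two valid inequalities'' argument, after first checking that the two improper faces $\cP$ and $\emptyset$ really do qualify as faces under the stated definition.

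First I would record the boundary cases. Taking $\cc=\zero$ and $c_0=0$ gives an inequality that is valid for $\cP$ and whose equality set is all of $\Real^k$, so $\cP=\cP\cap\{\vv:\cc\vv=c_0\}$ is a face; taking $\cc=\zero$ and $c_0=1$ gives a valid inequality whose equality set is empty, so $\emptyset$ is a face. In particular the intersection over the empty family of faces --- namely $\cP$ itself --- is a face, and adjoining extra copies of $\cP$ to a family changes nothing. Since a polytope has only finitely many faces, any intersection of faces equals an intersection of finitely many distinct ones, so it suffices to show that the intersection of a nonempty finite family $F_1,\dots,F_m$ of faces is a face, and by induction on $m$ it suffices to treat $m=2$.

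For $m=2$, write $F_i=\cP\cap\{\vv:\cc_i\vv=c_{0,i}\}$ with $\cc_i\vv\le c_{0,i}$ valid for $\cP$ ($i=1,2$), and put $\cc\equiv\cc_1+\cc_2$, $c_0\equiv c_{0,1}+c_{0,2}$. Adding the two valid inequalities shows $\cc\vv\le c_0$ is valid for $\cP$. For $\vv\in\cP$ one has $\cc_1\vv\le c_{0,1}$ and $\cc_2\vv\le c_{0,2}$, so the two sides of $\cc\vv\le c_0$ coincide precisely when both component inequalities are equalities; hence
\[
  \cP\cap\{\vv:\cc\vv=c_0\}=F_1\cap F_2,
\]
which exhibits $F_1\cap F_2$ as a face and closes the induction.

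Honestly there is no serious obstacle here: the whole content is the one-line observation that equality in a sum of valid inequalities forces equality in each summand. The only points that need a little care are the bookkeeping for the improper faces $\cP$ and $\emptyset$ (so that vacuous and ``whole polytope'' intersections are actually covered by the statement) and the appeal to finiteness of the face lattice, which is what lets one pass from an arbitrary intersection of faces down to a finite one before invoking the two-face case.
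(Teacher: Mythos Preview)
Your argument is correct and is essentially the standard proof: sum the two valid inequalities and observe that, for points of $\cP$, equality in the sum forces equality in each summand. The paper does not give its own proof of this lemma --- it simply cites \cite[Proposition~2.3(ii)]{Z06} --- and what you have written is exactly the argument Ziegler uses there, so there is nothing to compare.
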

\begin{lem}[{\cite[p.~52]{Z06}}]
\label{lem:vertex-iff}
A point $\vv\in\cP$ cannot be expressed as a convex combination
of $\cP\setminus\{\vv\}$ if and only if it is a vertex of $\cP$.
\end{lem}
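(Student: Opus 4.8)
The plan is to prove the two implications separately, using only the definitions of ``valid inequality'' and ``face'' recalled just above, together with the fact that a polytope $\cP$ is the convex hull of a finite set of points $S\subset\Real^k$ and the separating hyperplane theorem. Throughout, ``$\vv$ is a convex combination of $\cP\setminus\{\vv\}$'' is read as $\vv\in\mathrm{conv}(\cP\setminus\{\vv\})$; the degenerate one-point combination $\vv=\ww$ with $\ww\in\cP\setminus\{\vv\}$ is impossible, so this reading is the intended one.

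For the direction ``$\vv$ a vertex $\Rightarrow$ $\vv$ is not such a convex combination'', I would start from the definition: if $\{\vv\}$ is a $0$-dimensional face then there is a valid inequality $\cc\xx\leq c_0$ with $\cP\cap\{\xx:\cc\xx=c_0\}=\{\vv\}$, so in particular $\cc\vv=c_0$. Assuming $\vv=\sum_{i=1}^m\lambda_i\ww_i$ with $\ww_i\in\cP\setminus\{\vv\}$, $\lambda_i>0$, $\sum_i\lambda_i=1$, averaging the bounds $\cc\ww_i\leq c_0$ forces $c_0=\cc\vv=\sum_i\lambda_i\cc\ww_i\leq c_0$ with equality, hence $\cc\ww_i=c_0$ for each $i$ since the weights are positive; but then $\ww_i\in\cP\cap\{\cc\xx=c_0\}=\{\vv\}$, contradicting $\ww_i\neq\vv$.

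For the converse I would argue the contrapositive: if $\vv$ is not a vertex, I want to produce $\vv\in\mathrm{conv}(\cP\setminus\{\vv\})$. Write $\cP=\mathrm{conv}(S)$ with $S$ finite. If $\vv\notin\mathrm{conv}(S\setminus\{\vv\})$, then since $\mathrm{conv}(S\setminus\{\vv\})$ is compact and convex and does not contain $\vv$, the separating hyperplane theorem yields $\cc\in\Real^k$ with $\cc\vv>\cc s$ for every $s\in S\setminus\{\vv\}$. Putting $c_0\equiv\cc\vv$, any $\xx=\sum_{s\in S}\mu_s s\in\cP$ satisfies $\cc\xx\leq c_0$, with equality only when $\mu_s=0$ for all $s\neq\vv$, i.e.\ only when $\xx=\vv$; thus $\cc\xx\leq c_0$ is valid for $\cP$ and cuts out the face $\{\vv\}$, so $\vv$ would be a vertex, contrary to hypothesis. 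Hence $\vv\in\mathrm{conv}(S\setminus\{\vv\})\subseteq\mathrm{conv}(\cP\setminus\{\vv\})$, which is exactly the desired convex combination.

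\textbf{Main obstacle.} The only genuinely nontrivial inputs are the $V$-description of the polytope and the separating hyperplane theorem, both invoked in the converse direction; the rest is bookkeeping with the definitions. The point to watch is the equality case in both arguments — that a convex average of numbers each $\leq c_0$ equals $c_0$ only if every term with positive weight equals $c_0$ — which is precisely what links a convex-combination identity to the face $\cP\cap\{\cc\xx=c_0\}$ in each direction.
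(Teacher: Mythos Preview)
The paper does not supply its own proof of this lemma; it is quoted verbatim from Ziegler's book with the citation \cite[p.~52]{Z06} and used as a black box, so there is no ``paper's own proof'' to compare against.

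Your argument is correct. The forward direction is the standard one-line computation from the supporting-hyperplane description of a vertex. For the converse you invoke the $V$-representation $\cP=\mathrm{conv}(S)$ and strictly separate $\vv$ from the compact convex set $\mathrm{conv}(S\setminus\{\vv\})$; note that the case $\vv\notin S$ is handled implicitly, since then $S\setminus\{\vv\}=S$ and $\vv\in\mathrm{conv}(S)=\mathrm{conv}(S\setminus\{\vv\})$ already, so the separation hypothesis forces $\vv\in S$ and the rest goes through. The only edge case is $S=\{\vv\}$, but then $\vv$ is trivially a vertex, so under the contrapositive assumption ``$\vv$ is not a vertex'' the set $S\setminus\{\vv\}$ is nonempty and the separation theorem applies. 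Nothing is missing.
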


For $k\geq 1$,
let $\cP\lrsb{b^k}\subset\Real^{k+1}$ be a convex polytope defined by
inequalities
\begin{align}
v_0&\geq 0
\label{eq:poly-v0}
\\
v_j&\geq 0
\quad\text{for $j\in\K$ s.t. $b_j=0$}
\label{eq:poly-v0b0}
\\
v_j&\leq 1
\quad\text{for $j\in\K$ s.t. $b_j=1$}
\label{eq:poly-v1b1}
\\
v_0+[-1]^{b_j}v_j&\leq 1-b_j
\quad\text{for all $j\in\K$}
\label{eq:poly-vb}
\\
v_0
+
\sum_{j\in\K}[-1]^{b_j}v_j
&\geq
1-
\sum_{j\in\K}
b_j
\label{eq:poly-v0eq1}
\end{align}
where constants $b^k\in\{0,1\}^k$ and variables $v_0,\ldots,v_k$
are considered to be real numbers.

Then we have the following lemmas.
\begin{lem}
\label{lem:valid}
Let $\vv\equiv(v_0,v_1,\ldots,v_k)$.
Then, for all $j\in\{0\}\cup\K$, inequalities
\begin{align}
-v_j&\leq 0
\label{eq:poly-vj0}
\\
v_j&\leq 1
\label{eq:poly-vj1}
\end{align}
are valid for $\cP(b^k)$.
\end{lem}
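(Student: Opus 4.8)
The plan is to derive each instance of (\ref{eq:poly-vj0}) and (\ref{eq:poly-vj1}) as a consequence (in fact a nonnegative combination) of the defining inequalities (\ref{eq:poly-v0})--(\ref{eq:poly-v0eq1}) of $\cP(b^k)$; an inequality obtained in this way automatically holds for every $\vv\in\cP(b^k)$ and is therefore valid. I would split the argument into three cases according to the index $j\in\{0\}\cup\K$, and within the analysis I would further split on the parity of the relevant $b_j$, since the sign $[-1]^{b_j}$ in (\ref{eq:poly-vb}) behaves differently for $b_j=0$ and $b_j=1$.

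For $j=0$: the inequality $-v_0\leq 0$ is exactly (\ref{eq:poly-v0}). For $v_0\leq 1$, since $k\geq 1$ I may fix some $j'\in\K$ and use (\ref{eq:poly-vb}) for $j'$; if $b_{j'}=0$ this reads $v_0+v_{j'}\leq 1$, which combined with $v_{j'}\geq 0$ from (\ref{eq:poly-v0b0}) yields $v_0\leq 1$, and if $b_{j'}=1$ it reads $v_0-v_{j'}\leq 0$, which combined with $v_{j'}\leq 1$ from (\ref{eq:poly-v1b1}) yields $v_0\leq v_{j'}\leq 1$. For $j\in\K$ with $b_j=0$: here $-v_j\leq 0$ is (\ref{eq:poly-v0b0}), and (\ref{eq:poly-vb}) for this $j$ reads $v_0+v_j\leq 1$, so with $v_0\geq 0$ from (\ref{eq:poly-v0}) we obtain $v_j\leq 1$. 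For $j\in\K$ with $b_j=1$: here $v_j\leq 1$ is (\ref{eq:poly-v1b1}), and (\ref{eq:poly-vb}) for this $j$ reads $v_0-v_j\leq 0$, so with $v_0\geq 0$ from (\ref{eq:poly-v0}) we obtain $v_j\geq v_0\geq 0$.

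I do not expect a genuine obstacle; the proof is essentially bookkeeping over these cases. The only point worth stating carefully is that the bound $v_0\leq 1$ cannot be read off from the constraints involving only the $0$-coordinate but must be inherited from a coordinate $j'\in\K$ via (\ref{eq:poly-vb}), which is exactly why the standing hypothesis $k\geq 1$ is used.
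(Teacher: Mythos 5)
Your proposal is correct and matches the paper's own proof essentially line for line: the same three-case split on $j$ and on $b_j$, with $v_0\le 1$ inherited from a coordinate $j'\in\K$ via (\ref{eq:poly-vb}) combined with (\ref{eq:poly-v0b0}) or (\ref{eq:poly-v1b1}), and the remaining bounds obtained from (\ref{eq:poly-v0}) plus (\ref{eq:poly-vb}). The only cosmetic difference is that the paper fixes $j'=1$ where you allow an arbitrary $j'\in\K$.
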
 
\begin{proof}
First, we show (\ref{eq:poly-vj0}) and (\ref{eq:poly-vj1})
for  $j=0$.
The inequality $-v_0\leq 0$ is valid
because it is 
equivalent to (\ref{eq:poly-v0}).
If $b^{(1)}=0$ then we have
\[
v_0\leq v_0+v_1\leq 1
\]
from (\ref{eq:poly-v0b0}) and (\ref{eq:poly-vb}).
On the other hand, if $b_j=1$ then we have
\[
v_0\leq v_1\leq 1
\]
from (\ref{eq:poly-v1b1}) and (\ref{eq:poly-vb}).
Hence we have the fact that $v_0\leq 1$ is valid for $\cP(b^k)$.

Next, we show (\ref{eq:poly-vj0}) and (\ref{eq:poly-vj1})
for  $j\in\K$ satisfying $b_j=0$.
The inequality (\ref{eq:poly-vj0})
is valid because it is equivalent to (\ref{eq:poly-v0b0}).
The validity of (\ref{eq:poly-vj1}) is obtained by
\[
v_j\leq v_0+v_j\leq 1,
\]
which comes from (\ref{eq:poly-v0}) and (\ref{eq:poly-vb}).

Finally, we show (\ref{eq:poly-vj0}) and (\ref{eq:poly-vj1})
for  $j\in\K$ satisfying $b_j=1$.
The inequality (\ref{eq:poly-vj1})
is valid because it is equivalent to (\ref{eq:poly-v1b1}).
The validity of (\ref{eq:poly-vj1}) is obtained by
\[
v_j\geq v_0\geq 0,
\]
which comes from (\ref{eq:poly-v0}) and (\ref{eq:poly-vb}).
\end{proof}

\begin{lem}
\label{lem:i-vertex}
For all $b^k\in\{0,1\}^k$,
an integral point $\vv$ in $\cP(b^k)$ satisfies
$\vv\in\{0,1\}^{k+1}$.
\end{lem}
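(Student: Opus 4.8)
The plan is to obtain the claim as an immediate consequence of Lemma~\ref{lem:valid}.

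First I would invoke Lemma~\ref{lem:valid}, which states that for every index $j\in\{0\}\cup\K$ the inequalities $-v_j\leq 0$ and $v_j\leq 1$ are valid for $\cP(b^k)$. Consequently every point $\vv=(v_0,v_1,\ldots,v_k)\in\cP(b^k)$ satisfies $0\leq v_j\leq 1$ for all $j\in\{0\}\cup\K$; in other words $\cP(b^k)\subset[0,1]^{k+1}$.

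Next, let $\vv\in\cP(b^k)$ be an integral point, so that each coordinate $v_j$ is an integer. Combining this with the bound $0\leq v_j\leq 1$ from the previous step, each $v_j$ is an integer lying in the closed unit interval, hence $v_j\in\{0,1\}$. Therefore $\vv\in\{0,1\}^{k+1}$, which is exactly the assertion of the lemma.

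There is essentially no obstacle here: all the content is carried by Lemma~\ref{lem:valid}, whose proof (already given above) is a short case distinction according to whether $b_j=0$ or $b_j=1$, built from inequalities (\ref{eq:poly-v0})--(\ref{eq:poly-vb}). For context I would note how Lemma~\ref{lem:i-vertex} is used in the remainder of the proof of Lemma~\ref{lem:multi-binary}: one checks directly that $\cS(b^k)\subseteq\cP(b^k)$ and that, conversely, every element of $\{0,1\}^{k+1}$ lying in $\cP(b^k)$ belongs to $\cS(b^k)$; with Lemma~\ref{lem:i-vertex} this identifies the integral points of $\cP(b^k)$ with $\cS(b^k)$, and then Lemma~\ref{lem:vertex-iff} yields both that $\cP(b^k)$ is the convex hull of $\cS(b^k)$ and that it has no non-integral vertex.
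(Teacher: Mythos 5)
Your proof is correct and is essentially identical to the paper's: both deduce $0\leq v_j\leq 1$ for every coordinate from Lemma~\ref{lem:valid} and then conclude that an integral point must lie in $\{0,1\}^{k+1}$. No issues.
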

\begin{proof}
From Lemma~\ref{lem:valid},
the integral point $\vv=(v_0,v_1,\ldots,v_k)$ 
satisfies either $v_j=0$ or $v_j=1$ 
for all $j\in\{0\}\cup\K$.
\end{proof}

\begin{lem}
\label{lem:vertex}
For all $b^k\in\{0,1\}^k$,
$(1,b^k)$ is a member of $\cP(b^k)$
and $(0,u^k)$ is a member $\cP(b^k)$
for all $u^k\in\{0,1\}^k$ satisfying $u^k\neq b^k$.
\end{lem}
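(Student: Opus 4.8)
The plan is to verify membership directly: substitute each of the two candidate points into the defining inequalities (\ref{eq:poly-v0})--(\ref{eq:poly-v0eq1}) of $\cP(b^k)$ and check that all of them hold. Since the polytope is cut out by finitely many explicit linear inequalities, this is the natural and essentially only route, and the only genuine content is isolating which inequality records the hypothesis $u^k\ne b^k$.

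First I would handle the point $(1,b^k)$, i.e.\ $v_0=1$ and $v_j=b_j$ for $j\in\K$. Inequalities (\ref{eq:poly-v0})--(\ref{eq:poly-v1b1}) hold trivially because $b_j\in\{0,1\}$; inequality (\ref{eq:poly-vb}) reads $1+[-1]^{b_j}b_j\le 1-b_j$, which holds with equality in both cases $b_j=0$ and $b_j=1$; and (\ref{eq:poly-v0eq1}) reads $1+\sum_{j\in\K}[-1]^{b_j}b_j\ge 1-\sum_{j\in\K}b_j$, which is again an equality since $[-1]^{b_j}b_j=-b_j$. So $(1,b^k)\in\cP(b^k)$ (in fact it lies on several faces).

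Next I would treat $(0,u^k)$ with $u^k\ne b^k$, i.e.\ $v_0=0$ and $v_j=u_j$. Again (\ref{eq:poly-v0})--(\ref{eq:poly-v1b1}) hold because $u_j\in\{0,1\}$, and (\ref{eq:poly-vb}) becomes $[-1]^{b_j}u_j\le 1-b_j$, which is $u_j\le 1$ when $b_j=0$ and $-u_j\le 0$ when $b_j=1$ — both true. The one inequality that uses $u^k\ne b^k$ is (\ref{eq:poly-v0eq1}): it becomes $\sum_{j\in\K}[-1]^{b_j}u_j\ge 1-\sum_{j\in\K}b_j$, which rearranges to
\[
  \sum_{j\in\K:\,b_j=0}u_j+\sum_{j\in\K:\,b_j=1}(1-u_j)\ge 1 .
\]
Every summand lies in $\{0,1\}$, and the left-hand side equals $0$ exactly when $u_j=b_j$ for all $j\in\K$; since $u^k\ne b^k$ at least one summand equals $1$, so the sum is $\ge 1$. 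Hence $(0,u^k)\in\cP(b^k)$.

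The main obstacle — such as it is — is the last display: recognizing that the ``saturating'' inequality (\ref{eq:poly-v0eq1}) is precisely the one encoding the constraint $u^k\ne b^k$, and rewriting it so that the hypothesis is visibly what forces the bound. Everything else is routine bookkeeping over the sign factors $[-1]^{b_j}$.
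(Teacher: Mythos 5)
Your proof is correct and follows essentially the same route as the paper's: direct verification of each defining inequality, with $(1,b^k)$ satisfying (\ref{eq:poly-vb}) and (\ref{eq:poly-v0eq1}) with equality via $[-1]^{b_j}b_j=-b_j$, and the hypothesis $u^k\neq b^k$ entering only through (\ref{eq:poly-v0eq1}). Your rewriting of that inequality as a Hamming-distance count $\sum_{j:\,b_j=0}u_j+\sum_{j:\,b_j=1}(1-u_j)\geq 1$ is just a repackaging of the paper's case analysis of $[-1]^{b_j}u_j$.
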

\begin{proof}
The first statement $(1,b^k)\in\cP(b^k)$ is proved by showing
that $\vv=(1,b^k)$ satisfies
inequalities (\ref{eq:poly-v0})--(\ref{eq:poly-v0eq1})
for all $b^k\in\{0,1\}^k$.
It is clear that $\vv=(1,b^k)$ satisfies
(\ref{eq:poly-v0})--(\ref{eq:poly-v1b1}).
Inequalities (\ref{eq:poly-vb}) and (\ref{eq:poly-v0eq1})
come from the fact that
$v_0=1$ and $b^k$ satisfies
\begin{equation}
[-1]^{b_j}b_j=-b_j
\label{eq:bk}
\end{equation}
for all $j\in\K$.

The second statement $(0,u^k)\in\cP(b^k)$
is proved by showing
that $\vv=(0,u^k)$ satisfies
inequalities (\ref{eq:poly-v0})--(\ref{eq:poly-v0eq1})
for all $b^k\in\{0,1\}^k$ and $u^k\in\{0,1\}^k$ satisfying
$u^k\neq b^k$.
It is clear that $\vv=(0,u^k)$ satisfies
(\ref{eq:poly-v0})--(\ref{eq:poly-v1b1})
for all $u^k\in\{0,1\}^k$.
Since
\begin{equation}
[-1]^{b_j}u_j
=
\begin{cases}
  1-b_j, &\text{if}\ u_j\neq b_j
  \\
  -b_j, &\text{if}\ u_j= b_j,
\end{cases}
\label{eq:u-neq-b}
\end{equation}
we have
\begin{align}
0+[-1]^{b_j}u_j
&\leq\max\{1-b_j,-b_j\}
\notag
\\
&\leq 1-b_j
\end{align}
for all $j\in\K$.
This implies that $\vv=(0,u^k)$ satisfies (\ref{eq:poly-vb})
for all $j\in\K$.
From (\ref{eq:u-neq-b}), we have
\begin{align}
\sum_{j\in\K}[-1]^{b_j}u_j
&=
\sum_{\substack{
    j\in\K\\
    u_j\neq b_j
}}
\lrB{1-b_j}
-
\sum_{\substack{
    j\in\K\\
    u_j=b_j
}}
b_j
\notag
\\
&=
\sum_{\substack{
    j\in\K\\
    u_j\neq b_j
}}
1
-
\sum_{j\in\K}
b_j
\notag
\\
&\geq
1-\sum_{j\in\K}b_j,
\end{align} 
where the inequality comes from the fact that
there is $j'\in\K$ such that $u_{j'}\neq b_{j'}$
because $u^k\neq b^k$.
This implies that $\vv=(0,u^k)$ satisfies
(\ref{eq:poly-v0eq1}).
\end{proof}

\begin{lem}
\label{lem:c-vertex}
If $b^k\in\{0,1\}^k$,
then $(0,b^k)$ is not a member of $\cP(b^k)$.
If $b^k,u^k\in\{0,1\}^k$ satisfy $u^k\neq b^k$,
then $(1,u^k)$ is not a member of $\cP(b^k)$.
\end{lem}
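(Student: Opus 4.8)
The plan is to prove both non-membership statements by exhibiting, for each candidate point, one defining inequality of $\cP(b^k)$ among (\ref{eq:poly-v0})--(\ref{eq:poly-v0eq1}) that it violates; this is the natural counterpart of the proof of Lemma~\ref{lem:vertex}, where all of those inequalities were instead checked to hold.

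First I would handle $(0,b^k)$. I would substitute $\vv=(0,b^k)$ into the last inequality (\ref{eq:poly-v0eq1}). Using the identity (\ref{eq:bk}), that $[-1]^{b_j}b_j=-b_j$ for every $j\in\K$, the left-hand side becomes $0+\sum_{j\in\K}[-1]^{b_j}b_j=-\sum_{j\in\K}b_j$, while the right-hand side is $1-\sum_{j\in\K}b_j$. Thus membership would require $-\sum_{j\in\K}b_j\geq 1-\sum_{j\in\K}b_j$, i.e.\ $0\geq 1$, which is absurd; hence $(0,b^k)\notin\cP(b^k)$.

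Next I would handle $(1,u^k)$ with $u^k\neq b^k$. Since $u^k\neq b^k$, there is an index $j'\in\K$ with $u_{j'}\neq b_{j'}$; I would test the inequality (\ref{eq:poly-vb}) for this particular $j'$ at the point $\vv=(1,u^k)$. By the ``$u_j\neq b_j$'' case of (\ref{eq:u-neq-b}) we have $[-1]^{b_{j'}}u_{j'}=1-b_{j'}$, so the left-hand side of (\ref{eq:poly-vb}) equals $1+(1-b_{j'})=2-b_{j'}$, whereas its right-hand side is $1-b_{j'}$. Membership would then force $2-b_{j'}\leq 1-b_{j'}$, i.e.\ $2\leq 1$, which is impossible; hence $(1,u^k)\notin\cP(b^k)$.

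I do not expect any real obstacle here: each part reduces to a single substitution once the right constraint is singled out, and the two elementary identities (\ref{eq:bk}) and (\ref{eq:u-neq-b}) are already available from the proof of Lemma~\ref{lem:vertex}. The only thing demanding a little attention is, in the second part, to index the constraint (\ref{eq:poly-vb}) by a coordinate on which $u^k$ and $b^k$ genuinely disagree --- such a coordinate exists exactly because $u^k\neq b^k$, and any one of them works.
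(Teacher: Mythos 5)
Your proposal is correct and matches the paper's own argument essentially verbatim: the paper also rules out $(0,b^k)$ by showing it violates (\ref{eq:poly-v0eq1}) via the identity (\ref{eq:bk}), and rules out $(1,u^k)$ by picking a coordinate $j'$ with $u_{j'}\neq b_{j'}$ and showing (\ref{eq:poly-vb}) fails there via (\ref{eq:u-neq-b}). No gaps.
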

\begin{proof}
First, we show the first statement.
From (\ref{eq:bk}), we have
\begin{align}
0+
\sum_{j\in\K}
[-1]^{b_j}
b_j
&=
-
\sum_{j\in\K}b_j
\notag
\\
&< 1-\sum_{j\in\K}b_j.
\end{align} 
This implies that
$\vv=(0,b^k)$ is not a member of $\cP(b^k)$
because it does not satisfy (\ref{eq:poly-v0eq1}).

Next, we show the second statement.
Let us assume that $b^k,u^k\in\{0,1\}^k$
satisfy  $u^k\neq b^k$.
Then there is  $j'\in\K$ such that 
$u_{j'}\neq b_{j'}$.
From (\ref{eq:u-neq-b}), we have
\begin{align}
1+[-1]^{b_{j'}}u_{j'}
&=
1+1-b_{j'}
\notag
\\
&>
1-b_{j'}.
\end{align}
Therefore, we have the fact that
$\vv=(1,u^k)$ is not a member of $\cP(b^k)$
because it does not satisfy (\ref{eq:poly-vb}).
\end{proof}

\begin{lem}
\label{lem:p-vertex}
The set of all integral vertexes of $\cP(b^k)$
is equal to $\cS(b^k)$.
\end{lem}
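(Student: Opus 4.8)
The plan is to first pin down exactly which integral points lie in $\cP(b^k)$, and then show that each of them is in fact a vertex. By Lemma~\ref{lem:valid} each inequality $-v_j\leq 0$ and $v_j\leq 1$ is valid for $\cP(b^k)$, so $\cP(b^k)\subseteq[0,1]^{k+1}$; combined with Lemma~\ref{lem:i-vertex} this shows every integral point of $\cP(b^k)$ lies in $\{0,1\}^{k+1}$. Any element of $\{0,1\}^{k+1}$ has the form $(1,u^k)$ or $(0,u^k)$ for some $u^k\in\{0,1\}^k$. Lemma~\ref{lem:c-vertex} rules out $(0,b^k)$ and all $(1,u^k)$ with $u^k\neq b^k$, while Lemma~\ref{lem:vertex} places $(1,b^k)$ and every $(0,u^k)$ with $u^k\neq b^k$ inside $\cP(b^k)$. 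Hence the set of integral points of $\cP(b^k)$ is precisely $\cS(b^k)$.

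Next I would argue that every point of $\cS(b^k)$ is a vertex of $\cP(b^k)$. Fix $\vv\in\cS(b^k)\subseteq\{0,1\}^{k+1}$. Such a $\vv$ is an extreme point of the cube $[0,1]^{k+1}$: if $\vv=\sum_i\lambda_i\ww_i$ with $\lambda_i>0$, $\sum_i\lambda_i=1$, and $\ww_i\in[0,1]^{k+1}$, then for each coordinate $j$ the value $v_j\in\{0,1\}$ together with convexity forces the $j$-th coordinate of every $\ww_i$ to equal $v_j$, whence $\ww_i=\vv$ for all $i$. Since $\cP(b^k)\subseteq[0,1]^{k+1}$ by Lemma~\ref{lem:valid}, it follows that $\vv$ cannot be expressed as a convex combination of points of $\cP(b^k)\setminus\{\vv\}$, and therefore $\vv$ is a vertex of $\cP(b^k)$ by Lemma~\ref{lem:vertex-iff}. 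Thus $\cS(b^k)$ is contained in the set of integral vertices of $\cP(b^k)$.

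For the reverse inclusion, an integral vertex of $\cP(b^k)$ is in particular an integral point of $\cP(b^k)$, which by the first paragraph belongs to $\cS(b^k)$. Combining the two inclusions yields the claim. I do not expect a genuine obstacle here: the only point needing a little care is that the elementary fact ``a $0/1$ vector is an extreme point of the cube'' already suffices to conclude vertexhood in $\cP(b^k)$, and this works precisely because $\cP(b^k)$ is sandwiched inside the cube by Lemma~\ref{lem:valid}, so no convex combination witness can escape $[0,1]^{k+1}$.
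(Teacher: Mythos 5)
Your proof is correct, and the first half (identifying the integral points of $\cP(b^k)$ with $\cS(b^k)$ via Lemmas~\ref{lem:valid}, \ref{lem:i-vertex}, \ref{lem:vertex}, and \ref{lem:c-vertex}) coincides with the paper's. Where you diverge is in showing that each member of $\cS(b^k)$ is actually a vertex. The paper takes the face-theoretic route: for each coordinate $j$ it observes that $\cP(b^k)\cap\{\vv':v'_j=v_j\}$ is a face (because $-v'_j\leq v_j$ or $v'_j\leq v_j$ is valid by Lemma~\ref{lem:valid}), then invokes Lemma~\ref{lem:face} to conclude that the intersection $\{\vv\}=\bigcap_j\lrB{\cP(b^k)\cap\{\vv':v'_j=v_j\}}$ is a zero-dimensional face, hence a vertex. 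You instead use the extreme-point characterization of Lemma~\ref{lem:vertex-iff}: a $0/1$ vector is an extreme point of the cube $[0,1]^{k+1}$, and since Lemma~\ref{lem:valid} sandwiches $\cP(b^k)$ inside that cube, no convex-combination witness can exist within $\cP(b^k)$. Both arguments ultimately rest on the same validity statements from Lemma~\ref{lem:valid}, but yours is the more elementary packaging --- it bypasses the face machinery entirely and makes transparent why the $0/1$ structure alone forces vertexhood --- whereas the paper's version is the more standard polytope-theoretic idiom and dovetails with the face language used later in the proof of Lemma~\ref{lem:multi-binary}. Either is a complete and correct proof.
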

\begin{proof}
From Lemmas~\ref{lem:i-vertex}--\ref{lem:c-vertex},
we have the fact that all integral points of $\cP(b^k)$
are members of $\cS(b^k)$
and all members of $\cS(b^k)$ are integral points of $\cP(b^k)$.
In the following, we show that
all members of $\cS(b^k)$ are vertexes of $\cP(b^k)$.

We have
\begin{align*}
\{\vv\}=\bigcap_{j=0}^k\{\vv': v'_j=v_j\}
\end{align*}
for $\vv\equiv(v_0,v_1,\ldots,v_k)\in\cS(b^k)\subset\cP(b^k)$.
From Lemma~\ref{lem:valid} and the fact that $\vv\in\{0,1\}^{k+1}$,
we have the fact that
\begin{align*}
-v'_j\leq v_j,\ &\text{for $j\in\K$ s.t. $v_j=0$}
\\
v'_j\leq v_j,\ &\text{for $j\in\K$ s.t. $v_j=1$}
\end{align*}
are valid inequalities for $\cP(b^k)$.
This implies that
\[
\cP(b^k)\cap\{\vv': v'_j=v_j\}
\]
is a face of $\cP(b^k)$.
From Lemma~\ref{lem:face}, we have the fact that
\begin{align}
\bigcap_{j\in\K}
\lrB{\cP(b^k)\cap\{\vv': v'_j=v_j\}}
&=
\cP(b^k)\cap\bigcap_{j\in\K}\{\vv': v'_j=v_j\}
\notag
\\
&=
\cP(b^k)\cap\{\vv\}
\notag
\\
&=
\{\vv\}
\end{align} 
is a face of $\cP(b^k)$.
Since the dimension of $\{\vv\}$ is zero,
we have the fact that $\vv$ is a vertex of $\cP(b^k)$.
\end{proof}

\begin{lem}
\label{lem:v0frac}
For all $\vv=(v_0,v_1,\ldots,v_k)\in\cP(b^k)$
satisfying $0<v_0<1$,
$\vv$ is not a vertex of $\cP(b^k)$.
\end{lem}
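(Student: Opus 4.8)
The plan is to first simplify $\cP(b^k)$ by an affine change of variables, and then, given a point with $0<v_0<1$, to exhibit two distinct points of the polytope whose midpoint is that point; Lemma~\ref{lem:vertex-iff} then shows the point is not a vertex.

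For $j\in\K$ set $w_j\equiv[-1]^{b_j}v_j+b_j$ and keep $v_0$. Since $[-1]^{b_j}v_j=w_j-b_j$, substitution into (\ref{eq:poly-v0})--(\ref{eq:poly-v0eq1}) shows that $\cP(b^k)$ is described, in the variables $(v_0,w_1,\ldots,w_k)$, by
\begin{align*}
  v_0&\geq 0
  \\
  w_j&\geq 0\quad\text{for all}\ j\in\K
  \\
  v_0+w_j&\leq 1\quad\text{for all}\ j\in\K
  \\
  v_0+\sum_{j\in\K}w_j&\geq 1.
\end{align*}
The map $(v_0,v_1,\ldots,v_k)\mapsto(v_0,w_1,\ldots,w_k)$ is an invertible affine map that fixes the first coordinate, so it carries $\cP(b^k)$ bijectively onto the polytope $\cP'$ defined by these inequalities, and it preserves vertices and convex combinations. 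Hence it suffices to prove that no point of $\cP'$ with $0<v_0<1$ is a vertex of $\cP'$.

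Let $\vv\equiv(v_0,w_1,\ldots,w_k)\in\cP'$ with $0<v_0<1$. The inequality $v_0\geq 0$ is strict at $\vv$, so every inequality active (holding with equality) at $\vv$ is one of $w_j=0$ (say for $j$ in $J_0\subset\K$), $v_0+w_j=1$ (for $j$ in $J_1\subset\K$; here $J_0\cap J_1=\emptyset$ since $1-v_0>0$), or $v_0+\sum_{j\in\K}w_j=1$. I will produce a nonzero vector $\bs\equiv(s_0,s_1,\ldots,s_k)$ with $s_j=0$ for $j\in J_0$, with $s_0+s_j=0$ for $j\in J_1$, and with $s_0+\sum_{j\in\K}s_j=0$ whenever the last inequality is active, i.e.\ $\bs$ orthogonal to the normal of every active inequality. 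For such an $\bs$, moving from $\vv$ to $\vv\pm\e\bs$ keeps every active inequality active and, by continuity, keeps every strict inequality satisfied for all sufficiently small $\e>0$; thus $\vv+\e\bs,\vv-\e\bs\in\cP'$ and $\vv=\frac12(\vv+\e\bs)+\frac12(\vv-\e\bs)$, exhibiting $\vv$ as a convex combination of two distinct members of $\cP'\setminus\{\vv\}$, so Lemma~\ref{lem:vertex-iff} shows $\vv$ is not a vertex.

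The only step that needs care is producing such an $\bs$, which is a short case analysis on $F\equiv\K\setminus(J_0\cup J_1)$. If $F\neq\emptyset$, take $s_0=1$, $s_j=-1$ for $j\in J_1$, $s_j=0$ for $j\in J_0$, and choose $\{s_j\}_{j\in F}$ so that $\sum_{j\in F}s_j=|J_1|-1$ when the last inequality is active (and $s_j=0$ on $F$ otherwise); then $s_0+\sum_{j\in\K}s_j=0$ and $\bs\neq\zero$ because $s_0=1$. If $F=\emptyset$, then $J_0\cup J_1=\K$; the subcase $J_0=\K$ is impossible, since it would give $\sum_{j\in\K}w_j=0$ and hence $v_0\geq 1$, contradicting $v_0<1$, so $J_1\neq\emptyset$. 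If in addition $v_0+\sum_{j\in\K}w_j=1$ is active, then substituting $w_j=0$ on $J_0$ and $w_j=1-v_0$ on $J_1$ gives $(1-|J_1|)v_0=1-|J_1|$, which forces $v_0=1$ unless $|J_1|=1$; since $v_0<1$ we must have $|J_1|=1$, say $J_1=\{j_1\}$, and then $\bs$ with $s_0=1$, $s_{j_1}=-1$, and all other entries $0$ satisfies every requirement; if that inequality is instead inactive, $\bs$ with $s_0=1$, $s_j=-1$ for $j\in J_1$, and $0$ elsewhere works. In every case we obtain a valid nonzero $\bs$, and the argument of the preceding paragraph completes the proof.
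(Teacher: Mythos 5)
Your proof is correct, but it takes a genuinely different route from the paper's. The paper proves the lemma in one shot by explicitly constructing the second endpoint of a convex decomposition: it sets $u^{(j)}\equiv\frac{v_j-v_0b_j}{1-v_0}$, writes $\vv=v_0(1,b^k)+[1-v_0](0,u^k)$, and then verifies by direct substitution that $(0,u^k)$ satisfies each of (\ref{eq:poly-v0})--(\ref{eq:poly-v0eq1}), invoking Lemma~\ref{lem:vertex} for $(1,b^k)\in\cP(b^k)$ and Lemma~\ref{lem:vertex-iff} to conclude. You instead first normalize the polytope via the affine involution $w_j\equiv[-1]^{b_j}v_j+b_j$ (which you correctly observe carries $\cP(b^k)$ onto the simpler polytope $\{v_0\geq0,\ w_j\geq0,\ v_0+w_j\leq1,\ v_0+\sum_jw_j\geq1\}$ and preserves vertices), and then run the standard feasible-direction argument: build a nonzero $\bs$ in the common kernel of the active constraint normals, so that $\vv\pm\e\bs\in\cP'$ for small $\e>0$ and Lemma~\ref{lem:vertex-iff} applies to the midpoint representation. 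I checked your case analysis on $F=\K\setminus(J_0\cup J_1)$ and the degenerate subcases ($J_0=\K$ forcing $v_0\geq1$; the active sum constraint forcing $|J_1|=1$ when $F=\emptyset$), and it is sound; note in particular that $J_0\cap J_1=\emptyset$ is indeed needed and holds because $v_0<1$. The trade-off: the paper's construction is shorter and exhibits the decomposition in terms of the known vertices $(1,b^k)$ and points $(0,u^k)$ (where $u^k$ need not be integral), tying in directly with Lemmas~\ref{lem:vertex} and~\ref{lem:vertex-iff}; your argument costs a case analysis but is more mechanical, and your change of variables gives a cleaner description of $\cP(b^k)$ that would also streamline the proofs of Lemmas~\ref{lem:valid}--\ref{lem:p-vertex} and the concluding integrality argument of Lemma~\ref{lem:multi-binary}.
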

\begin{proof}
Let $u^k\equiv(u^{(1)},\ldots,u^{(k)})$ be defined as
\[
u^{(j)}\equiv\frac{v_j-v_0b^{(j)}}{1-v_0}.
\]
Then $\vv$ can be expressed as
\begin{equation}
\label{eq:v0frac}
\vv = v_0(1,b^k)+[1-v_0](0,u^k).
\end{equation}
Since $\vv\neq(0,u^k)$ and $\vv\neq(1,b^k)$
for $0<v_0<1$,
we have the fact that
$\vv$ is not a vertex of $\cP(b^k)$ from Lemma~\ref{lem:vertex-iff}
by assuming that $\vv$, $(1,b^k)$, and $(0,u^k)$ are members
of $\cP(b^k)$.
Then it is enough to show that $(0,u^k)\in\cP(b^k)$
by assuming that $\vv\in\cP(b^k)$
because we have $(1,b^k)\in\cP(b^k)$ from Lemma~\ref{lem:vertex}.

In the following,
$v_0$ is replaced by $0$ and 
$v_j$ is replaced by $u^{(j)}$ in 
(\ref{eq:poly-v0})--(\ref{eq:poly-v0eq1})
when we state that $(0,u^k)$ satisfies inequalities
(\ref{eq:poly-v0})--(\ref{eq:poly-v0eq1}).
We show that
$(0,u^k)$ satisfies (\ref{eq:poly-v0})--(\ref{eq:poly-v0eq1})
by assuming $\vv\in\cP(b^k)$ and $0<v_0<1$.

First, it is clear that $(0,u^k)$ satisfies (\ref{eq:poly-v0}).
Next,
$(0,u^k)$ satisfies (\ref{eq:poly-v0b0}) and (\ref{eq:poly-v1b1})
because
\[
u^{(j)}=\frac{v_j-v_0b^{(j)}}{1-v_0}=\frac{v_j}{1-v_0}\geq 0
\]
for $j\in\K$ satisfying $b^{(j)}=0$ and
\[
u^{(j)}=\frac{v_j-v_0b^{(j)}}{1-v_0}=\frac{v_j-v_0}{1-v_0}\leq 1,
\]
for $j\in\K$ satisfying $b^{(j)}=1$,
where the inequalities come from the fact that
$v_0<1$ and $v_j$ satisfies (\ref{eq:poly-v0b0}) and
(\ref{eq:poly-v1b1}).
Next, $(0,u^k)$ satisfies (\ref{eq:poly-vb})
because
\begin{align}
0+[-1]^{b^{(j)}}u^{(j)}
&=
\frac{[-1]^{b^{(j)}}\lrB{v_j-v_0b^{(j)}}}{1-v_0}
\notag
\\
&\leq
\frac{1-b^{(j)}-v_0-v_0[-1]^{b^{(j)}}b^{(j)}}{1-v_0}
\notag
\\
&=
\frac{\lrB{1-b^{(j)}}\lrB{1-v_0}}{1-v_0}
\notag
\\
&=1-b^{(j)}
\end{align}
for all $j\in\K$,
where the inequality comes from the fact that
$v_0<1$ and $v_j$ satisfies (\ref{eq:poly-vb}),
the second equality comes from (\ref{eq:bk}).
Finally, we have the fact that 
$(0,u^k)$ satisfies (\ref{eq:poly-v0eq1})
because
\begin{align}
0+\sum_{j\in\K}[-1]^{b^{(j)}}u^{(j)}
&=
\sum_{j\in\K}
\frac{[-1]^{b^{(j)}}\lrB{v_j-v_0b^{(j)}}}
{1-v_0}
\notag
\\
&=
\sum_{j\in\K}\frac{[-1]^{b^{(j)}}v_j+v_0b^{(j)}}
{1-v_0}
\notag
\\
&\geq
\frac1{1-v_0}
\lrB{
  1-\sum_{j\in\K}b^{(j)}-v_0+v_0\sum_{j\in\K}b^{(j)}
}
\notag
\\
&=
1-\sum_{j\in\K}b^{(j)},
\end{align}
where
the second equality comes from (\ref{eq:bk})
and the inequality comes from
(\ref{eq:poly-v0eq1}) and the fact that $0<v_0<1$.
\end{proof}

Now we are in position to prove Lemma~\ref{lem:multi-binary}.
From Lemma~\ref{lem:p-vertex},
it is enough to show that there is no non-integral
vertex of $\cP(b^k)$.
Furthermore, from Lemma~\ref{lem:v0frac},
it is enough to show that
there is no non-integral vertex $\vv=(v_0,v_1,\ldots,v_k)$
of $\cP(b^k)$ by assuming $v_0\in\{0,1\}$.

Since $\cP(b^k)\subset\Real^{k+1}$,
the vertex of  $\cP(b^k)$ is determined by
$k+1$ equalities representing the face of $\cP(b^k)$
described by
\begin{align}
v_0&= 0
\label{eq:face-v0}
\\
v_j&= 0\quad\text{for $j\in\K$ s.t. $b_j=0$}
\label{eq:face-v0b0}
\\
v_j&=1\quad\text{for $j\in\K$ s.t. $b_j=1$}
\label{eq:face-v1b1}
\\
v_0+[-1]^{b^{(j)}}v_j&= 1-b^{(j)}\quad\text{for $j\in\K$}
\label{eq:face-vb}
\\
v_0
+\sum_{j\in\K}[-1]^{b^{(j)}}v_j
&=
1-\sum_{j\in\K}b^{(j)}.
\label{eq:face-v0eq1}
\end{align}
Since $v_0\in\{0,1\}$, we have the fact that
$v_j\in\{0,1\}$ if $v_j$ is determined by one of the equalities
(\ref{eq:face-v0b0})--(\ref{eq:face-vb}).

First, we consider the case where the vertex
$\vv=(v_0,v_1,\ldots,v_k)\in\cP(b^k)$ 
does not satisfy (\ref{eq:face-v0eq1}).
Then $\vv$ should be determined by equalities
(\ref{eq:face-v0b0})--(\ref{eq:face-vb})
and we have the fact that
$v_j\in\{0,1\}$ for all $j\in\K$.
This implies that $\vv$ is an integral point of $\cP(b^k)$.

Next, we consider the case where the vertex
$\vv=(v_0,v_1,\ldots,v_k)\in\cP(b^k)$ 
satisfies (\ref{eq:face-v0eq1}).
Since $k-1$ of $k$ variables $v_1,\ldots,v_k$ should be determined by
equalities (\ref{eq:face-v0b0})--(\ref{eq:face-vb}),
we have the fact that these variables are integers.
Then, from (\ref{eq:face-v0eq1}) and the fact that $v_0\in\{0,1\}$,
the remaining variable is also an integer.
This implies that $\vv$ is an integral point of $\cP(b^k)$.

Finally, from the above observations, we have the fact that
all vertexes $\vv=(v_0,v_1,\ldots,v_k)\in\cP(b^k)$ 
satisfying $v_0\in\{0,1\}$ are integral points.
This implies that there is no non-integral vertex $\vv=(v_0,v_1,\ldots,v_k)$
of $\cP(b^k)$ that satisfies $v_0\in\{0,1\}$.
\hfill\QED

\subsection{Method of Types}
\label{sec:type-theory}

We use the following lemmas for a set of typical sequences.

\begin{lem}[{\cite[Theorem 2.5]{UYE}\cite[Lemma 23]{HASH}}]
\label{lem:typical-trans}
If $\vv\in\T_{V,\gamma}$ and $\uu\in\T_{U|V,\gamma'}(\vv)$,
then $(\uu,\vv)\in\T_{UV,\gamma+\gamma'}$.
If $(\uu,\vv)\in\T_{UV,\gamma}$, then $\uu\in\T_{U,\gamma}$
and $\uu\in\T_{U|V,\gamma}(\vv)$.
\end{lem}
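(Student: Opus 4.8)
The plan is to obtain both halves of the lemma from the chain rule for divergence together with the non-negativity of the (conditional) divergence. First I would record the two factorizations
\[
  D(\nu_{\uu\vv}\|\mu_{UV})
  = D(\nu_{\vv}\|\mu_{V}) + D(\nu_{\uu|\vv}\|\mu_{U|V}|\nu_{\vv})
  = D(\nu_{\uu}\|\mu_{U}) + D(\nu_{\vv|\uu}\|\mu_{V|U}|\nu_{\uu}),
\]
where $\nu_{\vv|\uu}$ and $\mu_{V|U}$ are the obvious conditional counterparts of $\nu_{\uu|\vv}$ and $\mu_{U|V}$; these identities follow by substituting $\mu_{UV}(u,v)=\mu_{V}(v)\mu_{U|V}(u|v)=\mu_{U}(u)\mu_{V|U}(v|u)$ and $\nu_{\uu\vv}(u,v)=\nu_{\vv}(v)\nu_{\uu|\vv}(u|v)=\nu_{\uu}(u)\nu_{\vv|\uu}(v|u)$ into the definition of $D$ and splitting the logarithm. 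Here I use the standard conventions $0\log\frac 00=0$ and $p\log\frac p0=+\infty$ for $p>0$, so a letter of empirical probability $0$ contributes nothing, and if the $\mu$-probability of a letter of positive empirical probability vanishes the corresponding divergence equals $+\infty$ and the associated typicality hypothesis is vacuous.

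For the first statement I would simply add the two hypotheses. Since $\vv\in\T_{V,\gamma}$ gives $D(\nu_{\vv}\|\mu_{V})<\gamma$ and $\uu\in\T_{U|V,\gamma'}(\vv)$ gives $D(\nu_{\uu|\vv}\|\mu_{U|V}|\nu_{\vv})<\gamma'$, the first factorization yields $D(\nu_{\uu\vv}\|\mu_{UV})<\gamma+\gamma'$, that is, $(\uu,\vv)\in\T_{UV,\gamma+\gamma'}$.

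For the second statement I would instead discard a non-negative term. Assuming $(\uu,\vv)\in\T_{UV,\gamma}$, the second factorization gives $D(\nu_{\uu}\|\mu_{U})\leq D(\nu_{\uu\vv}\|\mu_{UV})<\gamma$, so $\uu\in\T_{U,\gamma}$, while the first factorization gives $D(\nu_{\uu|\vv}\|\mu_{U|V}|\nu_{\vv})\leq D(\nu_{\uu\vv}\|\mu_{UV})<\gamma$, so $\uu\in\T_{U|V,\gamma}(\vv)$.

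I expect no genuine obstacle: the whole argument is a two-line use of the chain rule, and the only point that needs care is the bookkeeping for letters of zero probability in the supports, which the conventions above handle; alternatively one may simply invoke the cited method-of-types references \cite{UYE}\cite{HASH}.
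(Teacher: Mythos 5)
Your proof is correct. The paper itself gives no proof of this lemma---it is stated in the appendix purely as a citation to \cite[Theorem~2.5]{UYE} and \cite[Lemma~23]{HASH}---and your chain-rule decomposition $D(\nu_{\uu\vv}\|\mu_{UV})=D(\nu_{\vv}\|\mu_{V})+D(\nu_{\uu|\vv}\|\mu_{U|V}|\nu_{\vv})$ together with non-negativity of each term is exactly the standard argument behind those cited results, including the correct handling of the empirical weighting $\nu_{\vv}$ in the conditional divergence.
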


\begin{lem}[{\cite[Theorem 2.7]{UYE}\cite[Lemma 25]{HASH}}]
\label{lem:typical-aep}
Let $0<\gamma\leq 1/8$.
Then,
\begin{align*}
\begin{split}
  \left|
    \frac 1{n}\log\frac 1{\mu_{U}(\uu)} - H(U)
  \right|
  &\leq
  \zeta_{\U}(\gamma)
\end{split}
\end{align*}
for all $\uu\in\T_{U,\gamma}$,
and
\begin{align*}
\begin{split}
  \left|
    \frac 1{n}\log\frac 1{\mu_{U|V}(\uu|\vv)} - H(U|V)
  \right|
  &\leq
  \zeta_{\U|\V}(\gamma'|\gamma)
\end{split}
\end{align*}
for $\vv\in\T_{V,\gamma}$ and $\uu\in\T_{U|V,\gamma'}(\vv)$,
where $\zeta_{\U}(\gamma)$ and $\zeta_{\U|\V}(\gamma'|\gamma)$
are defined in (\ref{eq:zeta}) and (\ref{eq:zetac}), respectively.
\end{lem}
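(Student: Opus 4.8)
The plan is to derive the lemma from two standard tools of the method of types: an exact identity relating the self-information of a (conditionally) memoryless distribution to the empirical type, and a quantitative continuity bound for the entropy functional combined with Pinsker's inequality. The hypothesis $\gamma\le 1/8$ enters only to make $\sqrt{2\gamma}\le 1/2$, which is the range in which the entropy-continuity estimate is available (and, for the second display, one uses the parallel fact $\sqrt{2\gamma'}\le 1/2$).

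For the unconditional bound, memorylessness gives, for every $\uu$ with $\mu_U(\uu)>0$,
\[
  \frac1n\log\frac1{\mu_U(\uu)}
  =-\sum_u\nu_{\uu}(u)\log\mu_U(u)
  =H(\nu_{\uu})+D(\nu_{\uu}\|\mu_U),
\]
so that $\frac1n\log\frac1{\mu_U(\uu)}-H(U)=[H(\nu_{\uu})-H(\mu_U)]+D(\nu_{\uu}\|\mu_U)$. Since $\uu\in\T_{U,\gamma}$ we have $0\le D(\nu_{\uu}\|\mu_U)<\gamma$, so the task reduces to bounding the entropy gap. Pinsker's inequality gives $\|\nu_{\uu}-\mu_U\|_1\le\sqrt{2D(\nu_{\uu}\|\mu_U)}<\sqrt{2\gamma}\le 1/2$, and with an $\ell_1$ distance at most $1/2$ the usual continuity estimate yields $|H(\nu_{\uu})-H(\mu_U)|\le\theta\log(|\U|/\theta)$ for $\theta\equiv\|\nu_{\uu}-\mu_U\|_1$. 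Since $t\mapsto t\log(|\U|/t)$ is increasing on $(0,1/2]$, this is at most $-\sqrt{2\gamma}\log\frac{\sqrt{2\gamma}}{|\U|}$; adding the two contributions and using $\gamma\ge 0$ on the lower side gives $\zeta_{\U}(\gamma)$ as in (\ref{eq:zeta}).

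For the conditional display I would repeat the scheme but keep track of the mismatch between the empirical and true $V$-marginal, which I expect to be the only delicate point. Set $\tilde\mu_{UV}(u,v)\equiv\mu_{U|V}(u|v)\nu_{\vv}(v)$. Then
\[
  \frac1n\log\frac1{\mu_{U|V}(\uu|\vv)}
  =H(\nu_{\uu|\vv}|\nu_{\vv})+D(\nu_{\uu|\vv}\|\mu_{U|V}|\nu_{\vv}),
\]
with $0\le D(\nu_{\uu|\vv}\|\mu_{U|V}|\nu_{\vv})<\gamma'$ because $\uu\in\T_{U|V,\gamma'}(\vv)$, and I would split the remaining gap using $H(\nu_{\uu|\vv}|\nu_{\vv})=H(\nu_{\uu\vv})-H(\nu_{\vv})$ and $H(\tilde\mu_{UV})-H(\nu_{\vv})=\sum_v\nu_{\vv}(v)H(\mu_{U|V}(\cdot|v))$ as
\[
  H(\nu_{\uu|\vv}|\nu_{\vv})-H(U|V)
  =\big[H(\nu_{\uu\vv})-H(\tilde\mu_{UV})\big]
   +\sum_v\big[\nu_{\vv}(v)-\mu_V(v)\big]H(\mu_{U|V}(\cdot|v)).
\]
The first bracket is controlled exactly as before: $\|\nu_{\uu\vv}-\tilde\mu_{UV}\|_1\le\sqrt{2D(\nu_{\uu|\vv}\|\mu_{U|V}|\nu_{\vv})}<\sqrt{2\gamma'}\le 1/2$, so entropy continuity on the product alphabet $\U\times\V$ bounds it by $-\sqrt{2\gamma'}\log\frac{\sqrt{2\gamma'}}{|\U||\V|}$ in absolute value. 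The second bracket is at most $\log|\U|\cdot\|\nu_{\vv}-\mu_V\|_1$ since each $H(\mu_{U|V}(\cdot|v))\le\log|\U|$, and $\|\nu_{\vv}-\mu_V\|_1<\sqrt{2\gamma}$ by Pinsker and $\vv\in\T_{V,\gamma}$, which is precisely the extra term $\sqrt{2\gamma}\log|\U|$; combining with $0\le D(\nu_{\uu|\vv}\|\mu_{U|V}|\nu_{\vv})<\gamma'$ reproduces $\zeta_{\U|\V}(\gamma'|\gamma)$ as in (\ref{eq:zetac}).

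The one genuine obstacle is getting the conditional continuity step clean: one must not compare $\nu_{\uu\vv}$ directly with the true joint law (their $\V$-marginals need not be close enough), but route through the auxiliary product distribution $\tilde\mu_{UV}$ and absorb the marginal error into the separate $\sqrt{2\gamma}\log|\U|$ term. Everything else is the routine method-of-types bookkeeping already used throughout the paper.
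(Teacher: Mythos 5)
Your proof is correct and is essentially the standard argument behind this lemma: the paper itself gives no proof (it cites \cite[Theorem 2.7]{UYE} and \cite[Lemma 25]{HASH}), and those sources proceed exactly as you do, via the identity $\frac1n\log\frac1{\mu_U(\uu)}=H(\nu_{\uu})+D(\nu_{\uu}\|\mu_U)$, Pinsker's inequality, and the $\theta\log(|\U|/\theta)$ entropy-continuity bound, with the auxiliary distribution $\mu_{U|V}(\cdot|\cdot)\nu_{\vv}$ handling the conditional case and producing the extra $\sqrt{2\gamma}\log|\U|$ term. Your observation that the hypothesis should implicitly also require $\gamma'\leq 1/8$ (so that $\sqrt{2\gamma'}\leq 1/2$) is a fair reading of a small imprecision in the lemma statement itself, not a gap in your argument.
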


\begin{lem}[{\cite[Theorem 2.8]{UYE}\cite[Lemma 26]{HASH}}]
\label{lem:typical-prob}
For any $\gamma>0$, and $\vv\in\V^n$,
\begin{align*}
\mu_U([\T_{U,\gamma}]^c)
&\leq
2^{-n[\gamma-\lambda_{\U}]}
\\
\mu_{U|V}([\T_{U|V,\gamma}(\vv)]^c|\vv)
&\leq
2^{-n[\gamma-\lambda_{\U\V}]},
\end{align*}
where $\lambda_{\U}$ and $\lambda_{\U\V}$ are defined in (\ref{eq:lambda}).
\end{lem}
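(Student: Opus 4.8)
The plan is to prove Lemma~\ref{lem:typical-prob} by the standard method‑of‑types argument: partition the complement of the typical set into type classes, bound the probability mass of each class by $2^{-n\gamma}$, and absorb the number of classes into the polynomial factor that produces $\lambda_{\U}$ (resp.\ $\lambda_{\U\V}$).

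For the first inequality I would first record the exponential formula for the probability of a sequence under a stationary memoryless source: writing $\mu_U(\uu)=\prod_{u}\mu_U(u)^{n\nu_{\uu}(u)}$ and expanding the exponent gives $\mu_U(\uu)=2^{-n[H(\nu_{\uu})+D(\nu_{\uu}\|\mu_{U})]}$, so $\mu_U(\uu)$ depends on $\uu$ only through its type. Combined with the type‑class size bound $|\C_{\bt}|\leq 2^{nH(\bt)}$, where $\C_{\bt}\equiv\{\uu\in\U^n:\nu_{\uu}=\bt\}$, this gives $\mu_U(\C_{\bt})\leq 2^{-nD(\bt\|\mu_{U})}$. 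Since $[\T_{U,\gamma}]^c$ is the disjoint union of those $\C_{\bt}$ with $D(\bt\|\mu_{U})\geq\gamma$, and the number of types of length $n$ over $\U$ is at most $(n+1)^{|\U|}=2^{n\lambda_{\U}}$, summing yields
\[
\mu_U([\T_{U,\gamma}]^c)\leq\sum_{\bt:\,D(\bt\|\mu_{U})\geq\gamma}2^{-nD(\bt\|\mu_{U})}\leq 2^{n\lambda_{\U}}2^{-n\gamma}=2^{-n[\gamma-\lambda_{\U}]}.
\]

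For the conditional inequality I would fix $\vv$ and run the same argument with conditional types. The analogue of the first step is $\mu_{U|V}(\uu|\vv)=2^{-n[H(\nu_{\uu|\vv}|\nu_{\vv})+D(\nu_{\uu|\vv}\|\mu_{U|V}|\nu_{\vv})]}$, obtained from $\mu_{U|V}(\uu|\vv)=\prod_{u,v}\mu_{U|V}(u|v)^{n\nu_{\uu\vv}(u,v)}$ together with the identity $\sum_{u,v}\nu_{\uu\vv}(u,v)\log\frac{1}{\mu_{U|V}(u|v)}=H(\nu_{\uu|\vv}|\nu_{\vv})+D(\nu_{\uu|\vv}\|\mu_{U|V}|\nu_{\vv})$. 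The number of $\uu$ with a prescribed conditional type $V$ given $\vv$ is a product over $v$ of multinomial coefficients, hence at most $\prod_{v}2^{n\nu_{\vv}(v)H(V(\cdot|v))}=2^{nH(V|\nu_{\vv})}$, so the $\mu_{U|V}(\cdot|\vv)$‑mass of that conditional type class is at most $2^{-nD(V\|\mu_{U|V}|\nu_{\vv})}$. Since $[\T_{U|V,\gamma}(\vv)]^c$ is the union of such classes with $D(V\|\mu_{U|V}|\nu_{\vv})\geq\gamma$ and there are at most $(n+1)^{|\U||\V|}=2^{n\lambda_{\U\V}}$ conditional types, I obtain $\mu_{U|V}([\T_{U|V,\gamma}(\vv)]^c|\vv)\leq 2^{-n[\gamma-\lambda_{\U\V}]}$.

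There is essentially no obstacle here: the statement is exactly \cite[Theorem~2.8]{UYE}\cite[Lemma~26]{HASH}, and the only place requiring a little care is the conditional type‑class count together with the observation that it depends on $\vv$ only through $\nu_{\vv}$; once the matching exponential formula for $\mu_{U|V}(\uu|\vv)$ is in hand, the entropy terms cancel and both bounds drop out.
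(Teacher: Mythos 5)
Your proof is correct and is exactly the standard type-counting argument behind the cited result: the paper itself gives no proof of Lemma~\ref{lem:typical-prob}, merely invoking \cite[Theorem 2.8]{UYE} and \cite[Lemma 26]{HASH}, and your derivation (per-type mass bounded by $2^{-nD(\bt\|\mu_U)}$ via the identity $\mu_U(\uu)=2^{-n[H(\nu_{\uu})+D(\nu_{\uu}\|\mu_U)]}$ and the class-size bound, then multiplying by the polynomial type count $(n+1)^{|\U|}=2^{n\lambda_{\U}}$, with the analogous conditional-type computation for the second bound) is precisely how that reference proceeds. The only care points — that $[\T_{U,\gamma}]^c$ corresponds to $D\geq\gamma$ because the typical set is defined with strict inequality, and that the conditional type count is uniform in $\vv$ — are both handled.
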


\begin{lem}[{\cite[Theorem 2.9]{UYE}\cite[Lemma 27]{HASH}}]
\label{lem:typical-number}
For any $\gamma>0$, $\gamma'>0$, and $\vv\in\T_{V,\gamma}$,
\begin{align*}
\left|
  \frac 1{n}\log |\T_{U,\gamma}| - H(U)
\right|
&\leq
\eta_{\U}(\gamma)
\\
\left|
  \frac 1{n}\log |\T_{U|V,\gamma'}(\vv)| - H(U|V)
\right|
&\leq
\eta_{\U|\V}(\gamma'|\gamma),
\end{align*}
where $\eta_{\U}(\gamma)$ and $\eta_{\U|\V}(\gamma'|\gamma)$
are defined in (\ref{eq:def-eta}) and (\ref{eq:def-etac}), respectively.
\end{lem}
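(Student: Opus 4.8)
The plan is to prove Lemma~\ref{lem:typical-number} by the method of types. I would first recall two standard facts: the number of distinct types of length $n$ over an alphabet of size $m$ is at most $(n+1)^m$, and for such a type $\bt$ its type class has cardinality between $(n+1)^{-m}2^{nH(\bt)}$ and $2^{nH(\bt)}$, where $H(\bt)$ is the entropy of the corresponding empirical distribution. Writing $\T_{U,\gamma}$ as the disjoint union of the type classes of the types $\nu$ with $D(\nu\|\mu_U)<\gamma$ (so $m=|\U|$), the upper bound $|\T_{U,\gamma}|\le 2^{n[H(U)+\eta_{\U}(\gamma)]}$ follows by summing $2^{nH(\nu)}$ over at most $(n+1)^{|\U|}$ types and controlling $H(\nu)$, while the lower bound follows by retaining a single type class, for instance that of the type closest to $\mu_U$ so that its divergence is strictly below $\gamma$; alternatively one can bootstrap the lower bound from Lemma~\ref{lem:typical-prob}, which guarantees that this region carries almost all of the probability.

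The analytic step that converts the divergence constraint into an entropy estimate is Pinsker's inequality together with the uniform continuity of entropy. From $D(\nu\|\mu_U)<\gamma$, Pinsker gives $\sum_u|\nu(u)-\mu_U(u)|<\sqrt{2\gamma}$, and the continuity bound $|H(\nu)-H(\mu_U)|\le-\sqrt{2\gamma}\log\frac{\sqrt{2\gamma}}{|\U|}$ then yields $\left|\frac1n\log|\T_{U,\gamma}|-H(U)\right|\le-\sqrt{2\gamma}\log\frac{\sqrt{2\gamma}}{|\U|}+\frac{|\U|\log(n+1)}{n}=\eta_{\U}(\gamma)$, matching the definition in (\ref{eq:def-eta}).

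For the conditional statement I would fix $\vv\in\T_{V,\gamma}$ and decompose $\T_{U|V,\gamma'}(\vv)$ into conditional type classes indexed by stochastic matrices $W$ with $D(W\|\mu_{U|V}|\nu_{\vv})<\gamma'$; there are at most $(n+1)^{|\U||\V|}$ such $W$ and each class has cardinality close to $2^{nH(W|\nu_{\vv})}$, which produces the term $\frac{|\U||\V|\log(n+1)}{n}$ in $\eta_{\U|\V}(\gamma'|\gamma)$. I would then bound $H(W|\nu_{\vv})$ in two steps. Applying Pinsker and entropy continuity to the joint distribution $(u,v)\mapsto W(u|v)\nu_{\vv}(v)$ on $\U\times\V$, whose divergence from $(u,v)\mapsto\mu_{U|V}(u|v)\nu_{\vv}(v)$ equals $D(W\|\mu_{U|V}|\nu_{\vv})<\gamma'$, controls $|H(W|\nu_{\vv})-H(\mu_{U|V}|\nu_{\vv})|$ by $-\sqrt{2\gamma'}\log\frac{\sqrt{2\gamma'}}{|\U||\V|}$; and since $\vv\in\T_{V,\gamma}$ forces $\sum_v|\nu_{\vv}(v)-\mu_V(v)|<\sqrt{2\gamma}$ by Pinsker again, the remaining difference $|H(\mu_{U|V}|\nu_{\vv})-H(U|V)|$ is at most $\sqrt{2\gamma}\log|\U|$ because $H(\mu_{U|V}(\cdot|v))\le\log|\U|$ for every $v$. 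Adding the three contributions reproduces $\eta_{\U|\V}(\gamma'|\gamma)$ exactly.

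The routine parts are the type-counting and type-class-size estimates. I expect care to be needed in two places only: first, fixing the precise Pinsker constant and the range of $\gamma$ in which the entropy-continuity bound $-\theta\log(\theta/|\U|)$ is monotone in $\theta$, so that $\theta=\|\nu-\mu_U\|_1$ may be replaced by its upper bound $\sqrt{2\gamma}$; and second, the lower-bound direction, that is, checking that for the given $n$ a type (respectively a conditional type given $\nu_{\vv}$) with divergence strictly below $\gamma$ (respectively $\gamma'$) actually exists, which is exactly where the padding by $(n+1)^{|\U|}$ (respectively $(n+1)^{|\U||\V|}$) in the definition of $\eta$ is absorbed. Since the lemma is quoted as \cite[Theorem 2.9]{UYE} and \cite[Lemma 27]{HASH}, an alternative is simply to cite those references.
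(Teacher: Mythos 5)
Your proof is correct and follows the standard method-of-types route (type counting, Pinsker's inequality, and the uniform continuity of entropy applied once on $\U\times\V$ and once through $\|\nu_{\vv}-\mu_V\|_1\log|\U|$), which is exactly the argument behind the sources the paper cites for this lemma (\cite[Theorem 2.9]{UYE}, \cite[Lemma 27]{HASH}); the paper itself gives no proof and simply invokes those references. One small caution: the alternative you mention for the lower bound, bootstrapping from Lemma~\ref{lem:typical-prob} together with Lemma~\ref{lem:typical-aep}, produces an error term of the form $\zeta_{\U}(\gamma)$ rather than $\eta_{\U}(\gamma)$ (it carries an extra additive $\gamma$ in place of the $\frac{|\U|\log(n+1)}{n}$ term), so only your primary route via a single type class of divergence below $\gamma$ recovers the constant as stated.
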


\section*{Acknowledgements}
The authors wish to thank Prof.\ Uyematsu for introducing the problem of
the broadcast channel coding.
The authors also wish to thank Prof.\ Watanabe for introducing the
reference \cite{EK10}.
The authors also wish to thank anonymous reviewers for valuable comments.

\end{document}